\definecolor{mygreen}{RGB}{20,140,80}
\algnewcommand\algorithmicforeach{\textbf{for each}}
\newcommand{\RN}[1]{%
	\textup{\uppercase\expandafter{\romannumeral#1}}%
}
\newcommand{\Td}[0]{\ensuremath{\T{}^d}}
\DeclareMathOperator{\poly}{poly}
\newcommand{\M}[0]{\ensuremath{m}}
\newcommand{\Mset}[0]{\ensuremath{\mathcal{M}}}
\newcommand{\T}[0]{\ensuremath{\mathcal{T}}}
\newcommand{\Tb}[0]{\ensuremath{\T^b}}
\newcommand{\childin}[2]{\ensuremath{\text{child}_{#2}(#1)}}
\newcommand{\child}[1]{\ensuremath{\text{child}(#1)}}
\newcommand{\model}[0]{\ensuremath{\mathsf{MPC}}}
\newcommand{\setsize}{\ensuremath{\alpha}}
\newcommand{\ind}[1]{\ensuremath{#1.\texttt{index}}}
\newcommand{\parent}[1]{\ensuremath{#1.\texttt{parent}}}
\newcommand{\croot}[1]{\ensuremath{#1.\texttt{root}}}
\newcommand{\xhdr}[1]{\vspace{2mm} \noindent{\bf #1}}
\newcommand{\OMIT}[1]{}
\newcommand{\eps}[0]{\epsilon}
\renewcommand{\O}[1]{\ensuremath{O(#1)}}
\newcommand{\Ot}[1]{\ensuremath{\widetilde{O}(#1)}}
\DeclareRobustCommand{\mybox}[2][gray!20]{%
\begin{tcolorbox}[%% Adjust the following parameters at will.
%        breakable,
        left=0pt,
        right=0pt,
        top=0pt,
        bottom=0pt,
        colback=#1,
        colframe=#1,
        width=\dimexpr\textwidth\relax, 
        enlarge left by=0mm,
        boxsep=5pt,
        arc=0pt,outer arc=0pt,
        ]
        #2
\end{tcolorbox}
}
\newtheorem{theorem}{Theorem}
\newtheorem{lemma}{Lemma}[section]
\newtheorem{proposition}[lemma]{Proposition}
\newtheorem{corollary}[lemma]{Corollary}
\newtheorem{definition}[lemma]{Definition}
\newtheorem{claim}[lemma]{Claim}
\newtheorem{fact}[lemma]{Fact}
\title{Massively Parallel Dynamic Programming on Trees\footnote{This is the full version of a paper \cite{bateni_et_al:LIPIcs:2018:9166}  appeared at ICALP 2018.}}
\author{
MohammadHossein Bateni\thanks{Google Research, New York. Email: \texttt{\{bateni,mirrokni\}@google.com}.}
\and Soheil Behnezhad\thanks{Department of Computer Science, University of Maryland. Email: \texttt{\{soheil,mahsaa,hajiagha\}@cs.umd.edu}.
Supported in part by NSF CAREER award CCF-1053605,  NSF BIGDATA grant IIS-1546108, NSF AF:Medium grant CCF-1161365,
DARPA GRAPHS/AFOSR grant FA9550-12-1-0423, and another DARPA SIMPLEX grant.
}
\and Mahsa Derakhshan\footnotemark[3]
\and MohammadTaghi Hajiaghayi\footnotemark[3]
%\and Silvio Lattanzi\footnotemark[1]
\and Vahab Mirrokni\footnotemark[2]
}
\date{}
\begin{document}

\maketitle

\begin{abstract}
Dynamic programming is a powerful technique that is, unfortunately, often inherently sequential. That is, there exists no unified method to parallelize algorithms that use dynamic programming. In this paper, we attempt to address this issue in the {\em Massively Parallel Computations} (\model{}) model which is a popular abstraction of MapReduce-like paradigms. Our main result is an algorithmic framework to adapt a large family of dynamic programs defined over trees. 

We introduce two classes of graph problems that admit dynamic programming solutions on trees. We refer to them as ``($\poly\log$)-expressible'' and ``linear-expressible'' problems. We show that both classes can be parallelized in $O(\log n)$ rounds using a sublinear number of machines and a sublinear memory per machine. To achieve this result, we introduce a series of techniques that can be plugged together. To illustrate the generality of our framework, we implement in $\O{\log n}$ rounds of \model{}, the dynamic programming solution of graph problems such as minimum bisection, $k$-spanning tree, maximum independent set, longest path, etc., when the input graph is a tree. 
\end{abstract}

\clearpage 
\section{Introduction}
With the inevitable growth of the size of datasets to analyze, the
rapid advance of distributed computing infrastructure and platforms
(such as MapReduce, Spark~\cite{spark}, Hadoop~\cite{hadoop},  Flume~\cite{dean2008mapreduce}, etc.), and more
importantly the availability of such infrastructure to medium- and even
small-scale enterprises via services at Amazon Cloud and Google Cloud,
the need for developing better distributed algorithms is felt far and
wide nowadays.  The past decade has seen a lot of progress in studying
important computer science problems in the large-scale setting, which
led to either adapting the sequential algorithms to distributed
settings or at times designing from scratch distributed algorithms for
these problems~\cite{DBLP:conf/soda/AhnGM12, andoni2014parallel,
DBLP:conf/soda/ChitnisCEHMMV16, DBLP:conf/soda/EsfandiariHLMO15,
DBLP:conf/soda/ChitnisCHM15}.

%\mbcomment{give some citations for recent with some self-citation? :)}

Despite this trend, we still have limited theoretical understanding of
the status of several fundamental problems when it comes to designing
large-scale algorithms.
%Besides only recently tight approximation algorithms were developed
%for a natural problem such as minimum set cover or maximum $k$-cover.
%These two problems are important also as a building block in designing
%many other problems [cite some algorithms that use set cover].  
In fact, even simple and widely used techniques such as the greedy
approach or dynamic programming seem to suffer from an inherent
sequentiality that makes them difficult to adapt in parallel or
distributed settings on the aforementioned platforms. Finding methods
to run generic greedy algorithms or dynamic programming algorithms on
MapReduce, for instance, has broad applications. This is the main goal
of this paper.

\xhdr{Model.} We consider the most restrictive variant of the {\em Massively Parallel
Computations} (\model{}) model which is a common abstraction of MapReduce-like frameworks~\cite{karloff2010model, goodrich2011sorting,
beame2013communication}. Let $n$ denote the input size and let \M{}
denote the number of available machines which is given in the
input.\footnote{As standard, we assume there exists a small constant $0 < \delta < 1$ for which the number of machines is always greater than $n^\delta$.} At each round, every machine can use a space of size $s
= \Ot{n/\M}$ and run an algorithm that is preferably linear time (but
at most polynomial time) in the size of its memory.\footnote{We assume
that the available space on each machine is more than the number of
machines (i.e., $\M \leq s$ and hence $\M = \O{\sqrt{n}}$). It is
argued in \cite{andoni2014parallel} that this is a realistic
assumption since each machine needs to have at least the index of all
the other machines to be able to communicate with them. Also, as argued 
in~\cite{im2017efficient}, in some cases, it is  natural to assume the 
total memory is \Ot{n^{1+\eps}} for a small $0 < \eps < 1$.}  Machines may
only communicate between the rounds, and no machine
can receive or send more data than its memory.

Im, Moseley, and Sun~\cite{im2017efficient} initiated a
principled framework for simulating sequential dynamic programming
solutions of Optimal Binary Search Tree, Longest Increasing
Subsequence, and Weighted Interval Selection problems on this model. This is quite an exciting development, however, it is not
clear whether similar ideas can be extended to other problems and in
particular to natural and well-known graph problems.

%It has in fact already been 
%observed~\cite{DBLP:conf/spaa/LattanziMSV11, DBLP:conf/soda/AhnGM12} 
%that many graph problems can be solved if the
%input graph is dense (i.e., the number of edges is superlinear in the number of
%nodes). In contrast, sparse graphs are much harder to handle. To see this, recall that the machines are assumed to have sublinear (in the input size) memory. This implies that if the input graph is sparse, e.g., if $|E| = O(|V|)$, we are unable to access all the vertices of the graph in one machine.

%To further emphasize the importance of sparse graphs in designing distributed algorithms, note that most of the real-world massive graphs are indeed sparse. As an example, the social network graph of Facebook has billions of vertices (i.e., users) whereas the average degree of each vertex (i.e., the number of connections of each user) is approximately $190$ \cite{ugander2011anatomy}.
%It is also worth mentioning that in the model that we consider, since $\M$ (and hence the available space) could be adjusted in the input, even for relatively dense graphs, we may not be able to access and store all the vertices in one machine.

%to the best of our knowledge, there is almost no MapReduce algorithm for sparse graphs, and in particular, none has been developed for any of the previously mentioned problems, even on seemingly easy special case of trees.

In this paper, we give an algorithmic framework that could be
used to simulate many natural dynamic programs on trees. 
%Theoretically
%some of these problems, say maximum independent set, do not have good
%solutions on general graphs.  Yet these problems are interesting to
%solve even on trees.  Doing so might, moreover, introduce new
%techniques that are then applicable to a host of other problems.
Indeed we formulate the properties that make a dynamic program (on trees) amenable to our techniques.  These properties, we show, are natural and are satisfied by many known algorithms for fundamental optimization problems. To illustrate the generality of our framework, we design $\O{\log n}$ round algorithms for well-studied graph problems on trees, such as, minimum bisection, minimum $k$-spanning tree, maximum weighted matching, longest path, minimum vertex cover, maximum independent set, facility location, $k$-center, etc.

\subsection{Related Work}
Though so far several models for
MapReduce have been introduced (see,
e.g., \cite{feldman2010distributing, andoni2014parallel,
goel2012complexity, goodrich2011sorting, kane2010optimal,
pietracaprina2012space, roughgarden2016shuffles}), Karloff,
Suri, and Vassilvitskii \cite{karloff2010model} were the first
to present a refined and simple theoretical model for
MapReduce. In their model, Karloff et al.~\cite{karloff2010model} extract only key features of MapReduce and bypass several message-passing systems and parameters. This model has been extended since then~\cite{beame2013communication, andoni2014parallel, roughgarden2016shuffles} (in this paper we mainly use the further refined model by Andoni, Nikolov, Onak, and Yaroslavtsev~\cite{andoni2014parallel}) and several algorithms both in theory
and practice have been developed in these settings, often using
sketching~\cite{guha2003clustering, kane2010optimal}, 
coresets~\cite{bateni2014distributed, DBLP:conf/stoc/MirrokniZ15} and
sample-and-prune~\cite{kumar2015fast} techniques.  Examples of
such algorithms include $k$-means and
$k$-center clustering~\cite{bahmani2012scalable, bateni2014distributed,
im2015brief}, 
general submodular function
optimization~\cite{ene2015random, kumar2015fast,
mirzasoleiman2013distributed} and query
optimization~\cite{beame2013communication}.
	
Thanks to their common application to data mining tasks,
MapReduce algorithms on graphs have received a lot of attention.
In~\cite{DBLP:conf/spaa/LattanziMSV11} Lattanzi, Moseley, Suri and Vassilvitskii, propose
algorithms for several problems on dense graphs. Subsequently
other authors study graph problems in the MapReduce
model~\cite{DBLP:conf/soda/AhnGM12, ahn2015access,
bahmani2012densest, DBLP:conf/kdd/ChierichettiDK14,
DBLP:conf/kdd/LucierOS15, DBLP:conf/nips/BateniBDHKLM17, DBLP:conf/spaa/BehnezhadDETY17, DBLP:journals/corr/abs-1802-10297} but almost all the solutions apply only
to dense graphs. In contrast in this paper we present a framework
to solve optimization problems on trees, which are important special cases of sparse graphs.
%Thanks to this new framework we are able to solve several classic problems such as Maximum Matching, Vertex Cover, Maximum Independent Set, Dominating Set, Bisection, $k$-MST, $k$-center and $k$-median on trees.

Another related area of research is the design of parallel algorithms. Parallel, and in particular PRAM algorithms, instead of distributing the work-load into different machines, assume a shared memory is accessible to several {\em processors} of the same machine. The \model{} model is more powerful since it combines parallelism with sequentiality. That is, the internal computation of the machines is free and we only optimize the rounds of communication. In this work, we use this advantage in a crucial way to design a unified and general framework to implement a large family of sequential dynamic programs on \model{}. We are not aware of any similar framework to parallelize dynamic programs in the purely parallel settings.

\paragraph{Further developments.} A remarkable series of subsequent works \cite{mistreesmpc,  DBLP:journals/corr/abs-1807-06251,DBLP:journals/corr/abs-1807-08745, concurrent, DBLP:journals/corr/abs-1807-06701} to this paper show how to design even sublogarithmic round algorithms for graph problems such as maximal matching, maximal independent set, approximate vertex cover, etc., using $n^{1-\Omega(1)}$ memory per-machine. This is precisely the regime of parameters considered in this paper. However,  these problems mostly enjoy a {\em locality} property (i.e., they admit fast LOCAL algorithms) that does not hold for the problems considered in this paper.

\subsection{Organization}
We start with an overview of our results and techniques in Section~\ref{sec:results}. We give a formalization of dynamic programming classes in Section~\ref{sec:dpclasses}. Then, we describe a main building block of our algorithms, the tree decomposition method, in Section~\ref{sec:treedecomposition}. Next, in Section~\ref{sec:polylogexpressible}, we show how to solve a particular class of problems which we call $(\poly\log)$-expressible problems. In Section~\ref{sec:linearexpressible} we show how to solve a more general class of problems, namely, linear-expressible problems.

\section{Main Results \& Techniques}\label{sec:results}
We introduce a class of dynamic programming problems which we call $f$-expressible problems. Here, $f$ is a function and we get classes such as $(\poly\log)$-expressible problems or linear-expressible problems. Hiding a number of technical details, the function $f$ equals the number of bits of information that each node of the tree passes to its parent during the dynamic program. Thus, linear-expressible problems are generally harder to adapt to the \model{} model than $(\poly\log)$-expressible problems.

\paragraph{$(\poly\log)$-Expressible problems.} Many natural problems can be shown to be $(\poly\log)$-expressible. For example, the following graph problems are all $(\poly\log)$-expressible if defined on trees: maximum (weighted) matching, vertex cover, maximum independent set, dominating set, longest path, etc. Intuitively, the dynamic programming solution of each of these problems, for any vertex $v$, computes at most a constant number of values. Our first result is to show that every $(\poly\log)$-expressible problem can be efficiently\footnote{Here and throughout the paper, we consider a polylogarithmic round algorithm efficient.} solved in \model{}. As a corollary of that, all the aforementioned problems can be solved efficiently on trees using the optimal total space of \Ot{n}.

\vspace{0.3cm}

\mybox{
\begin{theorem}\label{thm:polylog-expressible}
	For any given $\M$, there exists an algorithm to solve any $(\poly\log)$-expressible problem in \O{\log n} rounds of \model{} using $\M$ machines, such that with high probability each machine uses a space of size at most \Ot{n/\M} and runs an algorithm that is linear in its input size.
\end{theorem}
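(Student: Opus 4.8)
The plan is to simulate the sequential bottom-up tree dynamic program in $O(\log n)$ \model{} rounds by combining the tree decomposition of Section~\ref{sec:treedecomposition} with the defining feature of $(\poly\log)$-expressibility---that every node forwards only $\poly\log$ bits to its parent (this is exactly the condition that $f=\poly\log$). First I would root the input tree and invoke the tree decomposition to cut it into connected pieces (``components''), each of size $\Ot{n/\M}$, and assign one component per machine. Since $\M\le s$ by the model's assumption, the skeleton of the decomposition---the $\Ot{\M}$-node ``component tree'' recording which component is the parent of which---fits on a single machine, so the machines can consistently address one another and agree on the component hierarchy.

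The observation that makes $(\poly\log)$-expressible problems tractable is that the interface between a component and the rest of the tree is tiny. Each component $C$ has a unique top vertex $r_C$ that joins it to its parent component, together with a set of boundary leaves where child components attach. Given the $\poly\log$-bit DP summaries arriving from each child component, a machine can run the ordinary sequential DP inside $C$ in time linear in $|C|$, plugging the received summaries in at the boundary leaves, and thereby compute the single $\poly\log$-bit message that $r_C$ must forward upward. Thus each component compresses the entire subtree hanging below it into one $\poly\log$-bit message; and since a component holds at most $\Ot{n/\M}$ boundary leaves, the total data a machine sends or receives in a round is $\Ot{n/\M}\cdot\poly\log=\Ot{n/\M}$, which respects the memory bound and keeps the local computation linear in the input size.

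The algorithm is then a bottom-up sweep over the decomposition followed by a top-down sweep. In the upward pass I process components from the leaves of the component tree toward its root: once a machine holds the summaries of all of its child components it computes and forwards its own summary. The number of rounds equals the depth of the component hierarchy, which is precisely what the decomposition is engineered to make $O(\log n)$, even for pathological inputs such as a long path. After the root component is evaluated---yielding the global optimum value---a symmetric top-down pass propagates into each component the ``context'' determined above it, in another $O(\log n)$ rounds, so that every machine can reconstruct the DP decisions for the vertices it owns. The randomness used by the decomposition to balance component sizes is the source of the ``with high probability'' qualifier on the per-machine space.

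I expect the main obstacle to be this depth control. A naive decomposition into size-$\Ot{n/\M}$ chunks can produce a component tree that is itself a path of length $\Ot{\M}$, costing $\Ot{\M}$ rounds rather than $O(\log n)$; overcoming this is exactly the role of the tree decomposition, which must simultaneously (i) bound component sizes by $\Ot{n/\M}$, (ii) bound the hierarchy depth by $O(\log n)$, and (iii) keep per-component boundaries small enough that the $\poly\log$-bit summaries fit, and the randomized balancing that secures all three with high probability is the delicate part. A secondary point to verify is that replacing a child subtree by its $\poly\log$-bit summary composes correctly with the within-component DP---that is, that $(\poly\log)$-expressibility is preserved under this contraction---so that the distributed computation returns exactly the answer of the sequential dynamic program.
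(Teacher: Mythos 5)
There is a genuine gap, and it sits exactly where you placed your confidence: the claim that the decomposition ``is engineered to make'' the component hierarchy have depth $O(\log n)$. The paper's decomposition (Theorem~\ref{thm:decomposebinary}) guarantees $\O{\M}$ connected components of size $\Ot{n/\M}$ with constantly many outer edges each, but it gives \emph{no} bound on the depth of the contracted component tree --- and no decomposition can: if the input is a path on $n$ vertices, any partition into connected pieces of size $\Ot{n/\M}$ yields a path of $\Omega(\M)=\Omega(n^{\delta})$ components. Your upward sweep processes the component tree level by level, waiting for children's summaries before a component can compute its own, so on a path it costs $\Omega(\M)$ rounds, not $O(\log n)$. (The paper itself flags this: in the proof sketch of Theorem~\ref{thm:linear-expressible} it notes that the contracted tree ``can be super-logarithmic; therefore a simple bottom-up merging procedure does not work.'')

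The paper's proof sidesteps the depth issue entirely, using a property of $(\poly\log)$-expressibility that your proposal does not exploit: the compressor $\compressor$ of Definition~\ref{def:Expressiveness} produces a $\poly\log$-size partial summary of a component \emph{without} knowing the dynamic data of its (constantly many) unknown leaves. Hence all components can be compressed simultaneously in a single round, with no parent-waits-for-child dependency. Since there are $\Ot{\M}$ components and each summary has $\poly\log$ size, the total is $\Ot{\M}\le\Ot{s}$, so every summary is shipped to one master machine, which then runs the merger $\merger$ sequentially from the leaf components upward --- all inside one round of local computation. The $O(\log n)$ round count comes from the decomposition algorithm itself, not from any traversal of the component tree. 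Two smaller points: the paper first converts the input to a binary extension (Lemma~\ref{lemma:binary-tree}) precisely so that each component has at most two children, i.e., constantly many unknown leaves, which is what the compressor requires; your allowance of $\Ot{n/\M}$ boundary leaves per component is incompatible with Definition~\ref{def:Expressiveness}. Also, the paper needs no top-down ``context'' pass, since the master machine already holds all compressed components and can evaluate the answer outright.
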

}

\paragraph{Proof sketch.} The first problem in solving dynamic programs on trees is that there is no guarantee on the depth of the tree. If the given tree has only logarithmic depth one can obtain a logarithmic round algorithm by simulating a bottom-up dynamic program in parallel, where nodes at the same level are handled in the same round simultaneously. This is reminiscent of certain parallel algorithms whose number of rounds depends on the diameter of the graph.

Unfortunately the input tree might be quite unbalanced, with superlogarithmic depth. An extreme case is a path of length $n$.  In this case we can partition the path into equal pieces (despite not knowing a priori the depth of each particular node), handling each piece independently, and then stitching the results together. Complications arise, because the subproblems are not completely independent. Things become more nuanced when the input tree is not simply a path.

To resolve this issue, we adapt a celebrated {\em tree contraction} method to our model. The algorithm decomposes the tree into pieces of size at most $\Ot{m}$ (i.e., we can fit each component completely on one machine), with small interdependence. Omitting minor technical details, the latter property allows us to almost independently solve the subproblems on different machines. This results in a partial solution that is significantly smaller than the whole subtree; therefore, we can send all these partial solutions to a master machine in the next round and merge them.

\paragraph{Linear-Expressible problems.} Although many natural problems are indeed $(\poly\log)$-expressible, there are instances that are not. Consider for example the {\em minimum bisection problem}. In this problem, given an edge-weighted tree, the goal is to assign two colors (blue and red) to the vertices in such a way that minimizes the total weight of the edges between blue vertices and red vertices while half of the vertices are colored red and the other half are blue. In the natural dynamic programming solution of this problem, for a subtree $T$ of size $n_T$, we store \O{n_T} different values. That is, for any $i \in [n_T]$, the dynamic program stores the weight of the optimal coloring that assigns blue to $i$ vertices and red to the rest of $n_T - i$ vertices. This problem is not necessarily $(\poly\log)$-expressible unless we find another problem specific dynamic programming solution for it. However, it can be shown that minimum bisection, as well as many other natural problems, including $k$-spanning-tree, $k$-center, $k$-median, etc., are linear-expressible.

It is notoriously more difficult to solve linear-expressible problems using a sublinear number of machines and a sublinear memory per machine. However, we show that it is still possible to obtain the same result using a more involved algorithm and a slightly more total memory.

\vspace{0.3cm}

\mybox{
\begin{theorem}[Main Result]\label{thm:linear-expressible}
	For any given $\M$, there exists an algorithm to solve any linear-expressible problem that is splittable in \O{\log n} rounds of \model{} using $\M$ machines, such that with high probability each machine uses a space of size \Ot{n^{4/3}/\M}.
\end{theorem}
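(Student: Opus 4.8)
The plan is to reuse the tree-contraction machinery of Section~\ref{sec:treedecomposition} that already underlies Theorem~\ref{thm:polylog-expressible}, and to augment it with a mechanism for the large dynamic-programming tables that linear-expressible problems produce. For a $(\poly\log)$-expressible problem the table a contracted piece hands to its parent has only polylogarithmic size, so it can be shipped to one machine and merged there. For a linear-expressible problem the table of a subtree of size $n_T$ has $\Th{n_T}$ entries, which for pieces near the root is as large as $\Th{n}$ and cannot sit on a single machine of memory $\Ot{n^{4/3}/\M}$. The whole difficulty is thus to carry out the merges of the contraction while keeping each table, and each intermediate computation, spread across the machines.

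First I would run the same $\O{\log n}$-round contraction, so that the only remaining work at each of the $\O{\log n}$ levels is to combine the tables of the pieces that are raked or compressed together. For the prototypical linear-expressible problem, minimum bisection, combining the tables $A$ and $B$ of two sibling subtrees is exactly a $(\min,+)$-convolution $C[k]=\min_{i+j=k}\bigl(A[i]+B[j]\bigr)$, and the splittability assumption is precisely what guarantees that the generic merge decomposes in the same chunkable way. The key step is to perform one convolution in distributed fashion: cut $A$ and $B$ into contiguous chunks of length $c=\Th{n^{4/3}/\M}$ (the per-machine memory), give each of the $(|A|/c)(|B|/c)$ ordered chunk pairs its own machine, have that machine convolve its pair in $\O{c^2}$ time and $\O{c}$ space, and recombine by grouping the partial outputs by target index and taking a coordinate-wise minimum, a routine group-by completed in $\O{1}$ extra rounds.

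The memory accounting is the crux. A single convolution of two size-$k$ tables spawns $(k/c)^2$ chunk pairs of $\O{c}$ space each, for $\O{k^2/c}$ in total; equivalently the space is a $1/c$ fraction of the $\O{k^2}$ work. At any fixed contraction level the accumulated tables live on disjoint subtrees, so their sizes sum to $\O{n}$ while none exceeds $n$; hence the level's total work $\sum k^2$ is at most the largest size times the total size, namely $\O{n^2}$, and its total space is $\O{n^2/c}=\O{\M\,n^{2/3}}$. Under the standing assumption $\M\le\spc$, which here reads $\M\le n^{4/3}/\M$ and so $\M\le n^{2/3}$, this is $\O{n^{4/3}}$ overall, i.e.\ $\Ot{n^{4/3}/\M}$ per machine; the same inequality bounds the number of chunk pairs in simultaneous use, $\sum(k/c)^2=\O{n^2/c^2}=\O{\M^2/n^{2/3}}\le\M$. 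Since memory is released between levels, only the peak level matters, so these per-level bounds are the global bounds as well.

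I expect the main obstacle to be exactly this convolution-splitting step together with its routing: unlike the $(\poly\log)$-expressible case a table no longer lives on one machine, so every rake or compress must read its two operands from $\O{k/c}$ machines apiece, fan them out to the $\O{(k/c)^2}$ machines doing the partial convolutions, and aggregate the results into a fresh distributed table for the next level, all within the $\Ot{n^{4/3}/\M}$ per-machine cap and the $\O{\log n}$-round budget. Checking that this shuffling is itself an $\O{1}$-round sequence of sort/group-by primitives on $\O{n^{4/3}}$ total data, and that the splittability definition of Section~\ref{sec:dpclasses} really forces the generic merge to behave like the $(\min,+)$-convolution, are the two spots that demand care; the rest inherits directly from the contraction built for Theorem~\ref{thm:polylog-expressible}.
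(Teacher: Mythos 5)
Your overall architecture --- decompose the tree, keep each large table distributed across machines, and implement every pairwise merge as a chunked product of sub-tables whose partial outputs are regrouped by target index and aggregated --- is exactly the paper's merging stage, and your memory accounting (per-merge chunking, level-wide space $\O{\M n^{2/3}} = \O{n^{4/3}}$ using $\M \le n^{2/3}$, and chunk-pair count at most $\M$) matches the paper's analysis with $\eps = 1/3$. The splittability definition is indeed tailored to this: the sub-unifier acts on pairs of consecutive portions, each of its outputs affects at most one target element, and the unifier aggregates; the only detail you gloss over is that the ``group-by'' must be done via a $(\log\M)$-universal hash function so that Lemma~\ref{lem:maxload} bounds the per-machine load.

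The genuine gap is your merging \emph{schedule}. You propose to replay the contraction of Section~\ref{sec:treedecomposition} and, at each of its $\O{\log n}$ iterations, combine the tables of the pieces that were merged in that iteration. But in Algorithm~\ref{alg:decompose} the merges within one iteration are not pairwise: every unselected component is merged into its closest \emph{selected ancestor}, and the group absorbed by a single selected component contains a path of unselected components whose length is $\Theta(\log n)$ with high probability (Claim~\ref{claim:closestancestor} gives the $2\log n$ upper bound, and with selection probability $1/2$ some chain of length $\Omega(\log n)$ survives w.h.p.). Splittability only supplies \emph{binary} merges --- the sub-unifier and unifier act on two subtrees joined by one edge --- so collapsing a chain of $\Theta(\log n)$ tables costs $\Theta(\log\log n)$ merge phases, and your schedule totals $\Theta(\log n\,\log\log n)$ rounds rather than the claimed $\O{\log n}$. (The other obvious schedule, merging bottom-up over the final component tree, also fails, because that tree's depth is not controlled and can be superlogarithmic.) The paper's fix is a separate ingredient your proposal is missing: since the contracted component tree has only $\O{\M}$ nodes, it is sent to a single machine, where Lemma~\ref{lem:partitioning} recursively splits it into parts containing at most a $2/3$ fraction of the vertices and having at most $3$ border vertices each; replaying these splits in reverse yields an explicit schedule of $\O{\log n}$ levels of pairwise merges, and the $3$-border-vertex invariant is what keeps the number of unknown leaves constant, as the compressor and sub-unifier require. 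With such a schedule substituted for the contraction replay, the rest of your argument goes through essentially as the paper's proof.
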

}

\paragraph{Proof sketch.} Recall that linear-expressibility implies that the dynamic programming data on each node, can be as large as the size of its subtree (i.e., even up to $\O{n}$). Therefore, even by using the tree decomposition technique, the partial solution that is computed for each component of the tree can be linear in its size. This means that the idea of sending all these partial data to one master machine, which worked for $(\poly\log)$-expressible problems, does not work here since when aggregated, they take as much space as the original input. Therefore we have to distribute the merging step among the machines.

Assume for now that each component that is obtained by the tree decomposition algorithm is contracted into a node and call this {\em contracted tree}. The tree decomposition algorithm has no guarantee on the depth of the contracted tree and it can be super-logarithmic; therefore a simple bottom-up merging procedure does not work. However, it is guaranteed that the contracted tree itself (i.e., when the components are contracted) can be stored in one machine. Using this, we send the contracted tree to a machine and design a {\em merging schedule} that informs each component about the round at which it has to be merged with each of its neighbours. The merging schedule ensures that after $\O{\log n}$ phases, all the components are merged together. The merging schedule also guarantees that the number of neighbours of the components, after any number of merging phases, remains constant. This is essential to allow (almost) independent merging for many linear-expressible problems such as minimum bisection.

Observe that after a few rounds, the merged components grow to have up to $\Omega(n)$ nodes and even the partial data of one component cannot be stored in one machine. Therefore, even merging the partial data of two components has to be distributed among the machines. For this to be possible, we use a {\em splitting} technique of independent interest that requires a further {\em splittability} property on linear-expressible problems. Indeed we show that the aforementioned linear-expressible problems, such as minimum bisection and $k$-spanning tree, have the splittability property, and therefore Theorem~\ref{thm:linear-expressible} implies they can also be solved in $\O{\log n}$ rounds of \model{}.

%To solve linear-expressible problems, we need to go beyond tree decompositions and have to use the internal computation of the machines in a crucial way since even the partial data of one node cannot be stored on one machine. We define a {\em splittability} property that addresses this issue and we show with this property, linear-expressible problems can also be solved efficiently.

\newcommand{\inp}[1]{#1.\texttt{data}}
\newcommand{\sol}[1]{\ensuremath{\mathcal{D}_{#1}}}

\section{Dynamic Programming Classes}\label{sec:dpclasses}
To understand the complexity of different dynamic programs in the \model{} model, we first attempt to classify these problems. The goal of this section is to introduce a class of problems which we call $f$-expressible problems.

We say a problem $P$ \emph{is defined on trees}, if the input to $P$ is a rooted tree \T{}, where each vertex $v$ of \T{} may contain data of size up to $\poly(\log{n})$, denoted by \inp{v}.\footnote{Consider weighted trees as an example of the problems that define an additional data on the nodes (in case the edges are weighted, each vertex stores the weight of its edge to its parent).}

Dynamic programming is effective when the solution to the problem can be computed recursively. On trees, the recursion is usually defined on the nodes of the tree, where each node computes its data using the data provided by its children. Let \sol{P} denote a dynamic program that solves a problem $P$  that is defined on trees. Also let $P_\T$ be an instance of $P$ and denote its input by \T{}. We use \sol{P}(\T) to denote the solution of $P_\T$ and use \sol{P}(v) to denote the dynamic data that \sol{P} computes for a vertex $v$ of \T{}. The first property that we define is ``binary adaptability''.

\paragraph{Binary Adaptability.} To illustrate when this property holds, consider any vertex $v$ of an input tree and a given ordering of its children $(u_1, u_2, \ldots, u_k)$ and let $\sol{}(u)$ denote the dynamic data of vertex $u$. As long as the operation to compute the dynamic data of $v$ could be viewed as an aggregation of binary operations over the dynamic data that is provided by its children in the given order, i.e., $\sol{}(v) = f_1(\sol{}(u_1), f_2(\sol{}(u_2), \ldots))$, this property is satisfied. (For example, $\max{\{\sol{}(u_1), \sol{}(u_2), \sol{}(u_3)\}}$ could be viewed as $\max\{\sol{}(u_1), \max\{\sol{}(u_2), \sol{}(u_3)\}\}$). Most of the natural dynamic programs satisfy this property.

To formally state this property, we first define ``binary extensions'' of trees and then define a problem to be binary adaptable if there exists a dynamic programming solution that given any binary extension of an input, calculates the solution of the original tree.

\begin{definition}[Binary Extension]
	A binary tree $\Tb$ is a binary extension of a tree $\T$ if there is a one-to-one (and not necessarily surjective) mapping  $f: V(\T) \to V(\Tb)$ such that $f(v)$ is an ancestor of $f(u)$ in $\Tb$ if $v$ is an ancestor of $u$ in $\T$. We assume the data of each vertex $v$ in $T$ is also stored in its equivalent vertex, $f(v)$ in $\Tb$.
\end{definition}

Intuitively, a binary extension of a tree \T{} is a binary tree that is obtained by adding auxiliary vertices to \T{} in such a way that ensures any ancestor of each vertex remains its ancestor.

\begin{definition}[Binary Adaptability]
Let $P$ be a problem that is defined on trees. Problem $P$ is \textit{binary adaptable} if there exists a dynamic program \sol{P} where for any instance $P_\T$ of $P$ with \T{} as its input tree and for any binary extension \Tb{} of \T{}, the output of \sol{P}(\Tb) is a solution of $P_\T$. We say \sol{P} is a binary adapted dynamic program for $P$.
\end{definition}

We are now ready to define $f$-expressiveness. We will mainly consider ($\poly\log$)-expressible and linear-expressible problems in this paper, however the definition is general.

\newcommand{\compressor}[0]{\ensuremath{\mathcal{C}}}
\newcommand{\merger}[0]{\ensuremath{\mathcal{M}}}

\paragraph{$f$-Expressiveness.} This property is defined on the problems that are binary adaptable. Roughly speaking, it specifies how large the data that we store for each subtree should be to be able to merge subtrees efficiently.
%Roughly speaking, this parametric property specifies the size of the dynamic data that is stored for each vertex and how much data we need are required to store for each subtree to be able to merge it with other subtrees. 
\begin{definition}[$f$-Expressiveness] \label{def:Expressiveness}
	Let $P$ be a binary adaptable problem and let \sol{P} be its binary adapted dynamic program. Moreover, let \Tb{} be a binary extension of a given input to $P$. For a function $f$, we say $P$ is $f$-expressible if the following conditions hold.
	\begin{enumerate}
		\item The dynamic data of any given vertex $v$ of \Tb{} has size at most $\Ot{f(n_v)}$ where $n_v$ denotes the number of descendants of $v$.
		\item There exist two algorithms \compressor{} (compressor), and \merger{} (merger) with the following properties. Consider any arbitrary connected subtree $T$ of \Tb{} and let $v$ denote the root vertex of $T$. If for at most a constant number of the leaves of $T$ the dynamic data is unknown (call them the unknown leaves of $T$), algorithm \compressor{} returns a partial data of size at most $\Ot{f(n_T)}$ for $T$ (without knowing the dynamic data of the unknown leaves) in time $\poly(f(n_T))$ such that if this partial data and the dynamic data of the unknown leaves are given to \merger{}, it returns \sol{P}(v) in time $\poly(f(n_T))$.
		\item For two disjoint subtrees $T_1$ and $T_2$ that are connected to each other with one edge, and for subtree $T = T_1 + T_2$, if there are at most a constant number of the leaves of $T$ that are unknown, then $\merger{}(\compressor{}(T_1), \compressor{}(T_2))$ returns $\compressor{}(T)$ in time $\poly(f(n_T))$.
		\end{enumerate}
\end{definition}

We remark that graph problems such as maximum (weighted) matching, vertex cover, maximum independent set, dominating set, longest path, etc., when defined on trees, are all $(\poly\log)$-expressible. Also, problems such as minimum bisection, $k$-spanning tree, $k$-center, and $k$-median are linear-expressible on trees.

%The order in which \sol{} combines the data calculated for the children of $v$ to calculate $D(v)$ should not matter. More precisely, for a permutation $\sigma = (u_1, u_2, \ldots, u_k)$ of the children of any vertex $v$, define $D^\sigma_i(v)$ to be the combination of the data calculated for $u_i$ to $u_k$ (i.e., $D^\sigma_1(v) = D(v)$). We say \sol{} is \textit{unordered} if for any permutation $\sigma$ of the children of $v$, $D^\sigma_i(v)$ can be stored in $\O{\poly \log n}$ bits and there exists a function $f$, such that $D^\sigma_i(v) = f(\inp{v}, D(u_i), D^\sigma_{i+1}(v))$.

\section{Tree Decomposition}\label{sec:treedecomposition}
	The goal of this section is to design an algorithm to decompose a given tree \T{} with $n$ vertices into \O{m} ``components'' of size at most \Ot{n/m}, such that each component has at most a constant number of ``outer edges'' to the other components. To do this, we combine a tree contraction (see \cite{reid1993list, abrahamson1989simple, gazit1988optimal}) method of traditional parallel algorithms with a number of new ideas.

\paragraph{Overview.} To motivate the need to design an involved decomposition algorithm, we first give hints on why trivial approaches do not work. Perhaps the simplest algorithm that comes to mind for this task, is random sampling. That is, to choose a set of randomly selected edges (or vertices) and temporarily remove them so that the tree is decomposed into a number of smaller components. Note that the main difficulty is that we need to bound the size of the largest component with a high probability. It could be shown that this trivial algorithm, even for the simple case of full binary trees fails. More precisely, it could be shown that the size of the component that contains the root is of size $\Omega(n)$ (instead of $\Ot{n/m}$) with a constant probability. In general, naive random partitioning does not perform well.

In contrast, we employ the following algorithm. First, we convert the tree into a binary tree. To do this, we first have to compute the degree of each vertex and then add \emph{auxiliary} vertices to the graph. After that, our algorithm proceeds in \emph{iterations} (not rounds). In each iteration, we merge the vertices of the tree based on some local rules and after \O{\log{n}} iterations, we achieve the desired decomposition. More precisely, in each iteration, we select a set of {\em joint} vertices according to a number of local rules (e.g., the degree of a vertex, status of its parent, etc.) and \emph{merge} each vertex to its closest selected ancestor. We prove in fact these local rules guarantee that the size of the maximum component is bounded by \Ot{n/m} with high probability.

The input, as mentioned before, is a rooted tree \T{}, with $n$ vertices numbered from 1 to $n$ (we call these numbers the indexes of the vertices).\footnote{The assumption that the vertices are numbered from 1 to $n$ is just for the simplicity of presentation. Our algorithms can be adopted to any arbitrary naming of the vertices.} Each machine initially receives a subset of size \Ot{n/\M} of the vertices of \T{} where each vertex object $v$ contains the index of $v$, denoted by \ind{v}, and the index of its parent, denoted by \parent{v}.

In Section~\ref{sec:hash} we show how universal hash functions allow us to efficiently distribute and access objects in our model. In Section~\ref{sec:binary} we introduce an algorithm to convert the given tree into a ``binary-extension'' of it. Finally, in Section~\ref{sec:decompose} we provide an algorithm to decompose the binary extension.

\subsection{Load Balancing via Universal Hash Families}\label{sec:hash}
We start with the definition of universal hash families, that was first proposed by \cite{wegman1981new}.

\begin{definition}\label{def:universalhash}
	A family of hash functions $H = \{h:A \rightarrow B\}$ is $k$-universal if for any hash function $h$ that is chosen uniformly at random from $H$ and  for any $k$ distinct keys $(a_1, a_2, \ldots, a_k) \in A^k$, random variables $h(a_1),  h(a_2), \dots,  h(a_k)$ are independent and uniformly distributed in $B$. A $k$-universal hash function is a hash function that is chosen uniformly at random from a $k$-universal hash family.
\end{definition}

The following lemma shows that we can generate a $(\log \M)$-universal hash function in constant rounds and store it on all the machines. Roughly speaking, the idea is to generate a small set of random coefficients on one machine and share it with all other machines and then use these coefficients to evaluate the hash function for any given input.

\begin{lemma}\label{lem:universalhash}
	For any given $k \in \mathbb{Z}^+$ where $k \leq \poly(n)$, there exists an algorithm that runs in \O{1} rounds of \model{} and generates the same hash function $h: [1, k] \rightarrow \Mset{}$ on all machines where function $h$ is chosen uniformly at random from a $(\log \M)$-universal hash family, and calculating $h(.)$ in a machine takes \O{\log{n}} time and space.
\end{lemma}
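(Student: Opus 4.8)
The plan is to use the classical polynomial construction of Wegman and Carter and then argue that it can be both \emph{generated} and \emph{broadcast} within the model's budget in a constant number of rounds. Fix $t = \lceil \log \M \rceil$ and let $p$ be a prime with $\max(k,\M) < p \le 2\max(k,\M)$; such a prime exists by Bertrand's postulate and satisfies $p = \O{k} = \poly(n)$, so every element of $\mathbb{F}_p$ fits in $\O{\log n}$ bits, i.e. a constant number of machine words. The hash family I would use is
\[
h(x) = \Bigl(\Bigl(\textstyle\sum_{i=0}^{t-1} a_i\, x^i\Bigr) \bmod p\Bigr) \bmod \M,
\]
where the coefficients $a_0,\dots,a_{t-1}$ are drawn independently and uniformly from $\mathbb{F}_p$. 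The key property, which I would establish by the standard argument, is that a degree-$(t-1)$ polynomial over $\mathbb{F}_p$ is uniquely determined by its values at any $t$ points; hence for any $t$ distinct keys the evaluations $g(a_1),\dots,g(a_t)$ of $g(x)=\sum_i a_i x^i$ are independent and uniform over $\mathbb{F}_p$, which is exactly $(\log \M)$-wise independence and matches Definition~\ref{def:universalhash} with $k = \log \M$.

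For the distributed implementation I would designate a single machine to do all the sampling. In the first round this machine locally finds $p$ (primality of a $\poly(n)$-sized number can be tested in polynomial time) and samples the $t = \O{\log n}$ coefficients, using $\O{\log n}$ words of space. In the next round it sends the tuple $(p, a_0,\dots,a_{t-1})$ to every other machine. I would then verify that this broadcast respects the budget: the message has size $\O{\log n}$ words, so the total data leaving the designated machine is $\M\cdot \O{\log n}=\Ot{\M}$; since the model assumes $\M \le s = \Ot{n/\M}$, this stays within the per-machine communication bound, while each receiving machine absorbs only $\O{\log n}$ words. Hence the whole family is fixed and stored identically on all machines after $\O{1}$ rounds.

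Finally, once the coefficients are present, evaluating $h(x)$ by Horner's rule costs $t-1 = \O{\log n}$ multiplications and additions modulo $p$ together with one reduction modulo $\M$; storing the coefficients and performing the evaluation uses $\O{\log n}$ words, giving the claimed $\O{\log n}$ time and space per evaluation. The main obstacle I anticipate is not the algebra but the accounting: I must make sure the single-machine broadcast genuinely fits the sublinear communication budget (this is exactly where the hypothesis $\M \le s$ is used) and that the degree $t=\lceil\log\M\rceil$ is simultaneously large enough for $(\log\M)$-universality and small enough that the coefficient vector still fits in $\O{\log n}$ space. One genuinely delicate point is the final reduction $\bmod\,\M$: it preserves exact uniformity only when $\M \mid p$, so to meet the exact-uniformity requirement of Definition~\ref{def:universalhash} I would either take the codomain to be $\mathbb{F}_p$ and relabel the machines within it, or pad the machine count up to a prime; the residual bias is otherwise $\O{\M/p}$ and negligible for the load-balancing purposes to which the lemma is applied.
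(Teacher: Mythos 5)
Your distributed skeleton is exactly the paper's: sample the seed on one designated machine, broadcast the $\O{\log n}$-word description to all $\M$ machines in one communication phase (legitimate because $\M \le s$, so the sender's total outgoing data $\O{\M \log n}$ fits the budget), and evaluate locally in $\O{\log n}$ time and space. That part is correct and matches the paper, which simply black-boxes the construction via a $k$-wise independent family from Vadhan (Corollary 3.34) instead of writing out the Wegman--Carter polynomial.

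The gap is the point you yourself flagged but did not actually repair: the final reduction modulo $\M$ makes your family fail Definition~\ref{def:universalhash}, which demands that $h(a_1),\dots,h(a_{\log\M})$ be \emph{exactly} uniform on $\Mset$, not uniform up to an $\O{\M/p}$ bias. Neither of your two remedies closes this. Relabeling the machines inside $\mathbb{F}_p$ is impossible because the polynomial construction forces $p > k \ge \M$ (the keys must be distinct field elements), so the codomain $\mathbb{F}_p$ is far larger than $\Mset$ and some surjection back onto $\M$ machines is still needed, reintroducing the bias. Padding the machine count up to a prime fails for the same reason ($p$ must exceed $k$, not just $\M$), and in addition keys hashed to phantom machines would leave $h$ undefined as a map into $\Mset$. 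The clean fix, and the reason the paper's citation works, is an assumption you had no way to see: the model takes $\M$ to be a power of two. Then you run your polynomial argument over a field of characteristic two, $\mathbb{F}_{2^b}$ with $2^b \ge \max(k,\M)$, and output the low-order $\log_2 \M$ bits of the evaluation; truncating a uniform element of $\mathbb{F}_{2^b}$ to $\log_2\M$ bits is exactly uniform on $\Mset$, and applying this fixed truncation coordinate-wise preserves $(\log\M)$-wise independence. With that substitution (which is essentially the family Vadhan's corollary provides, with codomain $\{0,1,\dots,2^p-1\}$), the rest of your proof goes through verbatim.
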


We mainly use the universal hash functions to distribute objects into machines. Assuming that an object $a$ has an integer index denoted by \ind{a}, we store object $a$ in machine $h(\ind{a})$ where $h$ is the universal hash function stored on all machines. This also allows other machines to know in which machine an object with a given index is stored.

We claim even if the objects to be distributed by a $(\log \M)$-universal hash function have different sizes (in terms of the memory that they take to be stored), the maximum load on a machine will not be too much with high probability. To that end, we give the following definitions and then formally state the claim.

\begin{definition}
	A set $A = \{a_1, a_2, \ldots, a_k\}$ is a weighted set if each element $a_i \in A$ has an associated non-negative integer weight denoted by $w(a_i)$. For any subset $B$ of $A$, we extend the notion of $w$ such that $w(B) = \sum_{a \in B} w(a)$.
\end{definition}
\begin{definition}\label{def:distributable}
	A weighted set $A = \{a_1, a_2, \ldots, a_k\}$ is a \textit{distributable} set if $w(A)$ is \O{n} and the maximum weight among its elements (i.e., $\max_{a_i \in A}w(a_i)$) is \Ot{n/\M}. \mdcomment{not needed}
\end{definition}

\begin{definition}\label{def:distributer}
	We call a hash function $h: A \rightarrow \Mset$ a \textit{distributer} if $A$ is a weighted set and \Mset{} is the set of all machines. We define the \textit{load} of $h$ for a machine $m$ to be $l_h(m) = \sum_{a \in A: h(a) = m} w(a)$. Moreover we define the maximum load of $h$ to be $\max_{m \in \Mset} l_h(m)$. \mdcomment{not needed}
\end{definition}

\begin{lemma}\label{lem:maxload}
	Let $h: A \rightarrow \Mset$ be a hash function chosen uniformly at random from a $(\log \M)$-universal hash function where $A$ is a distributable set and \Mset{} is the set of machines (i.e., $h$ is a distributer). The maximum load of $h$ is \Ot{n/\M} with probability at least $1-(\frac{2^{1/\delta}e}{\delta\log{n}})^{\delta\log{n} }$ where $m\geq n^\delta$.  
\end{lemma}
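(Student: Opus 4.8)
The plan is to bound the load of a single fixed machine and then take a union bound over all machines. Fix a machine $m \in \Mset$ and, for each $a \in A$, let $Y_a$ be the indicator of the event $h(a) = m$, so that $l_h(m) = \sum_{a \in A} w(a)\, Y_a$. Since $h$ is drawn from a $(\log \M)$-universal family and $\M \geq n^\delta$ gives $\log \M \geq \delta \log n$, the variables $\{Y_a\}_{a \in A}$ are $k$-wise independent for $k := \delta \log n$ (rounding down, which is harmless), and each satisfies $\Pr[Y_a = 1] = 1/\M =: p$. In particular the expected load is $\mathbb{E}[l_h(m)] = p\,w(A) = \O{n/\M}$, using the distributability hypothesis $w(A) = \O{n}$.

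Because only $k$-wise independence is available, I would control the upper tail through the $k$-th moment rather than through a Chernoff/MGF argument. The key observation is that in the expansion of $\mathbb{E}[l_h(m)^k]$ every monomial involves at most $k$ distinct indicators, so $k$-wise independence lets me factor each such expectation exactly as in the fully independent case. Grouping the $k$ summation indices by the set partition they induce, and using $Y_a^j = Y_a$, I would bound the contribution of the partitions into $r$ blocks by $w_{\max}^{\,k-r}\mu^r$, where $w_{\max} := \max_a w(a) = \Ot{n/\M}$ and $\mu := \mathbb{E}[l_h(m)]$; summing over partitions with the Stirling estimate $S(k,r) \le \binom{k}{r} r^{k-r}$ collapses the sum to the clean Poisson-moment bound $\mathbb{E}[l_h(m)^k] \le (\mu + k\,w_{\max})^k$.

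With this moment bound in hand, Markov's inequality applied to $l_h(m)^k$ with the threshold $T := k(\mu + k\,w_{\max})$ gives $\Pr[l_h(m) \ge T] \le k^{-k} = (\delta\log n)^{-\delta \log n}$. The choice of $T$ is exactly what makes the argument balance: since $\mu = \O{n/\M}$, $w_{\max} = \Ot{n/\M}$ and $k = \O{\log n}$, the threshold is $T = \Ot{n/\M}$ as required, while the exponent $k$ is large enough to survive the union bound. Taking a union bound over the at most $\M \le n = 2^{\log n}$ machines, and writing $2^{\log n} = (2^{1/\delta})^{\delta \log n}$, the overall failure probability is at most $n \cdot (\delta \log n)^{-\delta\log n} \le \bigl(2^{1/\delta} e / (\delta \log n)\bigr)^{\delta \log n}$, which is precisely the claimed bound.

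The main obstacle, and the step I would spend the most care on, is the weighted $k$-th moment estimate: unlike the standard unit-weight balls-into-bins setting, the summands carry heterogeneous weights, so I must use both distributability hypotheses—$w(A) = \O{n}$ to bound $\mu$, and $w_{\max} = \Ot{n/\M}$ to bound the heavy-ball contribution—and verify that the partition/Stirling bookkeeping really does reduce to the form $(\mu + k\,w_{\max})^k$. Everything else, namely the reduction to a single machine, the identification of $k$ with $\delta \log n$ via $\M \ge n^\delta$, and the final algebra matching the stated probability, is routine once this moment bound is established.
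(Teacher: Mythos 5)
Your proof is correct, and it takes a genuinely different route from the paper's. The paper reduces the weighted problem to an \emph{unweighted} balls-and-bins statement: it sorts the elements of $A$ by weight, partitions them into $\lceil n/\M\rceil$ buckets $S_1,\ldots,S_\setsize$ of $\M$ elements each, invokes a known concentration result for $(\log \M)$-wise independent balls-and-bins (due to Schmidt, Siegel and Srinivasan) to conclude that each bucket places at most $O(\log \M)$ of its elements on any one machine, and then exploits the sorted bucketing to show that the sum of per-bucket maximum weights satisfies $\sum_i w_i = O(n/\M)$; the union bound is taken over the $O(n/\M)$ buckets. You instead attack the weighted load of a fixed machine directly: the $k$-th moment bound $\mathbb{E}[l_h(m)^k] \le (\mu + k\,w_{\max})^k$ for $k = \delta\log n \le \log \M$, obtained via the partition/Stirling expansion under $k$-wise independence, followed by Markov at threshold $k(\mu + k\,w_{\max}) = \Ot{n/\M}$ and a union bound over the $\M$ machines. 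Both arguments consume the two distributability hypotheses in exactly the same roles ($w(A)=O(n)$ controls the mean or the bucket sums, $w_{\max} = \Ot{n/\M}$ controls the heavy elements), and both lose only a polylogarithmic factor in the load. What the paper's route buys is that all probabilistic work is delegated to a black-box unweighted theorem, at the price of the weight-bucketing trick and its bookkeeping; what your route buys is a fully self-contained argument with no external citation (note the paper's cited claim is stated for equal numbers of balls and bins, which their application to $\M$-element buckets must finesse), a slightly stronger failure probability $\bigl(2^{1/\delta}/(\delta\log n)\bigr)^{\delta\log n}$ without the factor $e$, and enough slack in the target bound that the rounding of $k$ to an integer is, as you say, harmless. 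Your weighted moment computation checks out: per partition into $r$ blocks the contribution is at most $w_{\max}^{k-r}\mu^r$, and $S(k,r) \le \binom{k}{r}r^{k-r}$ collapses the sum by the binomial theorem.
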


The full proof of Lemma~\ref{lem:maxload} is differed to Appendix~\ref{apx:proofs}. The general idea is to first partition the elements of $A$ into \O{n/\M} subsets $S_1, S_2, \ldots, S_{k}$ of size $\O{\M}$ such that the elements are grouped based on their weights (i.e., $S_1$ contains the $\M$ elements of $A$ with the lowest weights, $S_2$ contains the $\M$ elements in $A - S_1$ with the lowest weights, and so on). Then we show that the maximum load of the objects in any set $S_i$ is $\Ot{w_i}$ where $w_i$ is the weight of the object in $S_i$ with the maximum weight. Finally using the fact that the objects are grouped based on their weights, we show that $\sum_{i \in [k]} w_i$, which is the total load is \Ot{n/\M}.

A technique that we use in multiple parts of the algorithm is to distribute the vertices of \T{} among the machines using a $(\log \M)$-universal hash function.

\begin{definition}\label{def:vertex_distribution}
	Let $T$ be a given tree and let $h: \{1, \ldots, |V(T)|\} \to \Mset$ be a mapping of the vertex indexes of $T$ to the machines. We say $T$ is distributed by $h$ if any vertex $v$ of $T$ is stored in machine $h(\ind{v})$.
\end{definition}

As a corollary of Lemma~\ref{lem:universalhash}:

\begin{corollary}\label{cor:distribution}
	One can distribute a tree $T$ by a hash function $h$ that is chosen uniformly at random from a $(\log \M)$-universal hash family in \O{1} rounds in such a way that each machine can evaluate $h$ in \O{\log{n}} time and space. \mdcomment{not needed}
\end{corollary}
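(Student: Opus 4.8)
The plan is to derive Corollary~\ref{cor:distribution} as an essentially immediate consequence of Lemma~\ref{lem:universalhash}, treating the distribution of a tree merely as an application of distributing indexed objects via a universal hash function. First I would invoke Lemma~\ref{lem:universalhash} with $k = |V(T)| = n$ (the vertex indices are integers in $[1,n]$, and $n \leq \poly(n)$ so the hypothesis $k \leq \poly(n)$ is satisfied). This produces, in \O{1} rounds, a single hash function $h: [1,n] \to \Mset$ drawn uniformly at random from a $(\log \M)$-universal family, with the guarantee that $h$ is stored identically on every machine and that each machine can evaluate $h$ in \O{\log n} time and space. These are exactly the two properties the corollary asserts about $h$.

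Next I would connect this hash function to the notion of a distributed tree in Definition~\ref{def:vertex_distribution}. By that definition, $T$ is distributed by $h$ precisely when each vertex $v$ is stored on machine $h(\ind{v})$. Since every machine initially holds a subset of the vertices (each vertex object carrying its own index \ind{v}), and since every machine now possesses the shared function $h$, each machine can in a single communication round evaluate $h(\ind{v})$ for every vertex $v$ it currently holds and route that vertex object to its designated target machine $h(\ind{v})$. This routing step adds only \O{1} rounds, so the total round count remains \O{1}. The evaluation-time and evaluation-space claims (\O{\log n} each) are inherited verbatim from Lemma~\ref{lem:universalhash}.

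The one point requiring a word of care is feasibility of the routing, namely that no machine is asked to store or transmit more than its memory budget \Ot{n/\M} permits. Here I would appeal to Lemma~\ref{lem:maxload}: the vertices of $T$, each of size $\poly(\log n) = \Ot{1}$ (as stipulated when a problem is defined on trees) and totaling \O{n} in aggregate weight, form a distributable set in the sense of Definition~\ref{def:distributable}; since $h$ is a distributer drawn from a $(\log \M)$-universal family, its maximum load is \Ot{n/\M} with high probability. Thus the target machines do not overflow, and the send/receive volume per machine stays within budget. I would note that, strictly speaking, this high-probability load bound is what makes the corollary statement meaningful in our memory-restricted model, even though the corollary text emphasizes only the round complexity and the per-evaluation cost.

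I do not anticipate a genuine obstacle here, as the corollary is a packaging of already-established machinery; the only mild subtlety is ensuring that the combination of Lemma~\ref{lem:universalhash} (existence and shareability of $h$) with Lemma~\ref{lem:maxload} (balanced load) jointly certify both the round bound and the implicit memory feasibility, so I would be careful to cite both rather than Lemma~\ref{lem:universalhash} alone.
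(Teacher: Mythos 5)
Your proposal is correct and matches the paper's approach exactly: the paper states this result as an immediate consequence of Lemma~\ref{lem:universalhash} (instantiated with $k = |V(T)|$), with vertices then routed to machines $h(\ind{v})$ as described in Definition~\ref{def:vertex_distribution}. Your additional appeal to Lemma~\ref{lem:maxload} for the $\Ot{n/\M}$ receive-load bound is a sound refinement, though the paper defers that load argument to the places where the corollary is applied (e.g., the proof of Lemma~\ref{lem:degree}) rather than folding it into the corollary itself.
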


\subsection{Conversion of \T{} to a Binary Tree}\label{sec:binary}
The first step towards finding a decomposition of \T{} with the desired properties is to convert it into a binary tree \Tb{} that preserves the important characteristics of the original tree (e.g., the ancestors of each vertex must remain to be its ancestors in the binary tree too).

The definition of an \textit{extension} of a tree is as follows. By the end of this section, we prove it is possible to find a binary extension of \T{} in constant rounds.

\begin{definition}\label{def:extension}
	A rooted tree $\T'{}$ is an extension of a given rooted tree $\T{}$ if  $|V(\T'{})| = \O{|V(\T{})|}$ and there exists a mapping function $f: V(\T{}) \rightarrow V(\T'{})$ such that for any $v \in V(\T)$, $\ind{v} = \ind{f(v)}$ and if $u$ is an ancestor of $v$ in $\T{}$, $f(u)$ is also an ancestor of $f(v)$ in $\T'{}$. \mdcomment{ Suree rooted binary extension}
\end{definition} 

The following lemma proves it is possible to find the degree of all vertices in constant rounds.
\begin{lemma}\label{lem:degree}
	There exists a randomized algorithm to find the degree of each vertex of a given rooted tree $\T$ in \O{1} rounds of \model{} using $\M$ machines, such that with high probability, each machine uses a space of size at most \Ot{n/\M} and runs an algorithm that is linear in its memory size.
\end{lemma}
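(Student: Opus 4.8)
The plan is to reduce computing degrees to a distributed ``group-by-and-count''. In a rooted tree the only datum distinguishing vertices is the parent pointer, and the number of children of a vertex $v$ is exactly the number of vertices $u$ with $\parent{u} = \ind{v}$ (its graph degree is this count plus one whenever $v$ is not the root). So it suffices to count, for every index $i$, how many vertices point to $i$ as their parent. First I would invoke Corollary~\ref{cor:distribution} to generate, in $\O{1}$ rounds, a single $(\log \M)$-universal hash function $h$ shared by all machines and to distribute \T{} so that vertex $v$ lives on machine $h(\ind{v})$; each machine then holds $\Ot{n/\M}$ vertices and can evaluate $h$ on any index in $\O{\log n}$ time.

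The naive continuation---each non-root vertex $u$ sends a unit token to the machine $h(\parent{u})$ storing its parent, which then counts the tokens---fails exactly on high-degree vertices: in a star all $n-1$ tokens target the single machine holding the root and overflow its $\Ot{n/\M}$ budget. The step that defeats this is a \emph{local pre-aggregation}. On each machine I would group the resident vertices by their parent pointer and emit, for every distinct parent index $i$ occurring locally, a single pair $(i,c)$ where $c$ counts the local vertices whose parent pointer equals $i$; grouping indices drawn from $[1,n]$ is done by a radix pass in time linear in the machine's memory. Since a machine emits at most one pair per distinct parent, it emits no more pairs than it stores vertices, so the total number of pairs across all machines is at most the total number of vertices, namely $\O{n}$, while any fixed index $i$ collects at most one pair from each of the $\M$ machines.

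Next I would route each pair $(i,c)$ to machine $h(i)$ in one round and locally sum the (at most $\M$) partial counts sharing the same index to obtain the final child-count, adding one for every non-root vertex to recover the graph degree. Because $h(i)$ is also the machine that stores the vertex of index $i$, each degree materialises precisely where it is consumed, with no further communication. To bound the routing load I would apply Lemma~\ref{lem:maxload} with the \emph{distinct} parent indices as the underlying weighted set and the number of emitted pairs as the weight: the weight of any index is at most $\M \le \spc = \Ot{n/\M}$ and the total weight is the total number of pairs, which is $\O{n}$, so the set is distributable and the maximum load is $\Ot{n/\M}$ with high probability. Summing partial counts is again linear in the received data, so every machine runs a linear-time routine and the whole procedure uses $\O{1}$ rounds.

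The main obstacle is precisely the one the pre-aggregation is engineered to remove: a single vertex may have $\Omega(n)$ children, so the counts destined for its index must never be realised as $\Omega(n)$ separate messages on one machine. Collapsing each machine's contribution to a single $(i,c)$ pair caps the per-index message count at $\M$---which fits since $\M \le \spc$---and simultaneously renders the routed set distributable, which is exactly the hypothesis required by Lemma~\ref{lem:maxload}. Verifying the two numeric conditions of distributability (maximum element weight $\Ot{n/\M}$ and total weight $\O{n}$) is the only quantitative check the argument needs.
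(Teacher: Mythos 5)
Your proposal is correct and takes essentially the same route as the paper's proof: hash-distribute the vertices via Corollary~\ref{cor:distribution}, locally aggregate child counts per machine (the paper's ``local degrees'' $d_{\mu,v}$), route each partial count to machine $h(\ind{v})$, and bound the received communication via Lemma~\ref{lem:maxload} with per-index weight capped at $\M$ and total weight $\O{n}$. One small refinement worth making: the weight you pass to Lemma~\ref{lem:maxload} (the realized number of emitted pairs per parent index) is itself a random variable determined by $h$, so to invoke the lemma cleanly you should instead use the deterministic upper bound $\min\{\M,\deg(v)\}$ as the weight---exactly the paper's choice---which pointwise dominates your realized load and leaves the conclusion unchanged.
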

\begin{proof}
	The first step is to distribute the vertices of \T{} using a $(\log \M)$-universal hash function $h$. By Corollary~\ref{cor:distribution} this takes \O{1} rounds. Define the local degree of a vertex $v$ on a machine $\mu$ to be the the number of children of $v$ that are stored in $\mu$ and denote it by $d_{\mu, v}$. We know there are at most \Ot{n/\M} vertices with a non-zero local degree in each machine. Hence in one round, every machine can calculate and store the local degree of every such vertex. In the communication phase, every machine $\mu$, for any vertex $v$ such that $d_{\mu, v} > 0$, sends $d_{\mu, v}$ to machine $h(v)$.	Then in the next round, for any vertex $v$, machine $h(v)$ will receive the local degree of $v$ on all other machines and can calculate its total degree. 
	
	We claim with high probability, no machine receives more than \Ot{n/\M} data after the communication phase.
	
	Define the weight of a vertex $v$, denoted by $w(v)$, to be $\min\{\M, \deg(v)\}$. The weight of a vertex $v$ is an upper bound on the size of communication that machine $h(v)$ receives for vertex $v$. To see this, observe that a machine sends the local degree of $v$ to $h(v)$ only if this local degree is non-zero, therefore the total communication size for $v$ cannot exceed its degree. On the other hand, the total number of machines is \M{}, hence machine $h(v)$ cannot receive more than \M{} different local degrees for vertex $v$.
	
	We first prove that $V(\T)$ is a distributable set (Definition~\ref{def:distributable}) based on weight function $w$. To see this, note that first, by definition of $w$, the maximum value that $w$ gets is $\M$, which indeed is \Ot{n/\M} (recall that in the model we assumed $m$ is less than the space on each machine and hence $m = \Ot{n/m}$), and second, the total weight of all vertices is \O{n} since $\sum_{v\in V(\T)}\deg(v) = 2n - 2$.
	
	By Lemma~\ref{lem:maxload}, since $V(\T)$ is a distributable set based on $w$ and since $h$ is chosen from a $(\log \M)$-independent hash family, the maximum load of $h$, i.e., the maximum communication size of any machine is \Ot{n/\M} with high probability.
\end{proof}

We are now ready to give a binary extension of \T{}.

\begin{lemma}\label{lemma:binary-tree} \mdcomment{give prover references to this lemma}
	There exists a randomized algorithm  to convert a given rooted tree $\T{}$ to a binary tree $\Tb{}$, an extension of $\T{}$, in \O{1} rounds of \model{} using $\M$ machines, such that with high probability, each machine uses a memory of size at most \Ot{n/\M} and runs an algorithm that is linear in its memory size.
\end{lemma}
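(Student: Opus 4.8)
The plan is to build $\Tb$ by replacing each high-degree vertex of $\T$ with a \emph{gadget} of auxiliary binary vertices, so that every vertex ends up with at most two children while all ancestry relations are preserved. Concretely, fix a vertex $v$ whose children in $\T$ are $u_1, \dots, u_{k}$ in some fixed but arbitrary order. If $k \le 2$ we leave $v$ untouched; otherwise we introduce $k-2$ auxiliary vertices $a_1, \dots, a_{k-2}$ forming a ``caterpillar'': $v$ keeps $u_1$ and gains $a_1$ as its second child, each $a_j$ is the parent of $u_{j+1}$ and $a_{j+1}$, and the last auxiliary vertex $a_{k-2}$ is the parent of $u_{k-1}$ and $u_k$. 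Every original and auxiliary vertex then has at most two children, $v$ remains an ancestor of every $u_i$, and the map $f$ sending each original vertex to itself preserves ancestry and indices; since $\sum_v(\#\text{children of }v) = n-1$, the total number of auxiliary vertices is at most $n-1$, so $|V(\Tb)| = \O{n}$ and $\Tb$ is a binary extension of $\T$ in the sense of Definition~\ref{def:extension}. To name the auxiliary vertices with \emph{no} global coordination, I use the injective encoding $\ind{a_j} := n + (\ind{v}-1)\cdot n + j$, which lies in $[\,n+1,\ \O{n^2}\,]\subseteq[\poly(n)]$ and is distinct across all pairs $(v,j)$; the parent pointer of $a_j$ is $\ind{a_{j-1}}$ for $j>1$ and $\ind{v}$ for $j=1$, again a deterministic function of $(\ind{v},j)$.

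Observe that, given $\deg(v)$ and the \emph{sibling rank} $i$ of each child $u_i$ (its position among $v$'s children in the fixed order), every piece of $\Tb$ is determined locally: $u_i$ can compute its new parent ($\ind{v}$ if $i=1$, else $\ind{a_{i-1}}$) and can emit the one auxiliary vertex object that attaches to it, together with that vertex's parent pointer. Hence the whole construction reduces to two ingredients: (i) knowing $\deg(v)$, which we obtain in \O{1} rounds via Lemma~\ref{lem:degree}; and (ii) computing, for every non-root vertex, its sibling rank. Since the extension property is insensitive to the order of children, \emph{any} consistent ordering works, so it suffices to assign each group of siblings a set of distinct labels in $\{1,\dots,k\}$.

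The main obstacle is ingredient (ii) for \emph{heavy} vertices, i.e.\ those whose children do not fit on one machine, because their siblings are scattered across many machines and cannot simply be gathered and numbered. I handle light and heavy vertices separately, with threshold $s=\Ot{n/\M}$ equal to the per-machine space. For a \emph{light} vertex $v$ (at most $s$ children) I route every child $u$ to the machine $h'(\parent{u})$ obtained by hashing its parent index; all children of $v$ then reside together and receive distinct ranks in the order encountered, in linear time and without sorting. Treating each parent-group as a weighted element of weight $k_v$ (which is $\Ot{n/\M}$ for light vertices and sums to $\O{n}$), this forms a distributable set (Definition~\ref{def:distributable}), so Lemma~\ref{lem:maxload} bounds the maximum machine load here by $\Ot{n/\M}$ with high probability. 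For a \emph{heavy} vertex (more than $s$ children) there are at most $(n-1)/s=\O{\M}$ of them; I distribute its children by hashing the \emph{child} index, let each machine locally count and rank the children of each heavy vertex it holds, and aggregate the $\O{\M}$ per-machine counts of each heavy vertex on a single responsible machine chosen by hashing the heavy-vertex index. That machine receives only $\Ot{\M}$ counts, which is $\Ot{n/\M}$ since $\M\le\sqrt{n}$; it computes the prefix sums and returns to each source machine the global offset of its block, so every child gets its global rank as (local rank)$\,+\,$(offset). All of this is \O{1} rounds with linear-time local work.

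Finally, once each vertex knows its parent's relevant degree and its own sibling rank, every child emits its updated parent pointer and the associated auxiliary vertex object (with the encoded index and parent computed above); duplicate emissions are impossible because each $a_j$ is emitted exactly once, by $u_{j+1}$, and are in any case deterministic functions of the index. We then redistribute all $\O{n}$ vertex objects of $\Tb$ by a $(\log \M)$-universal hash function as in Corollary~\ref{cor:distribution}. Since these objects number $\O{n}$ and each has $\poly(\log n)$ size, Lemma~\ref{lem:maxload} again bounds the per-machine memory by $\Ot{n/\M}$ with high probability, and every step is a constant number of rounds with local computation linear in the memory used. This produces the desired binary extension $\Tb$ and completes the plan.
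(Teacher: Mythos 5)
Your construction is correct in substance but reaches the binary extension by a genuinely different route than the paper. The paper works in two phases (Claims~\ref{claim:log-tree} and~\ref{claim:binary-tree}): it first broadcasts the $\O{\M}$ high-degree vertex indices to every machine, gives each such vertex $\O{n/\M}$ auxiliary children, and lets every original child pick one of these auxiliaries uniformly at random as its new parent --- a balls-and-bins argument (Claim~\ref{claim:balls-logwise}) then bounds every degree by $\Ot{n/\M}$ --- and only afterwards gathers each vertex's children on one machine, builds local binary gadgets, and assigns auxiliary indices by cross-machine prefix sums. You instead build the caterpillar in a single phase for every vertex, including heavy ones, replacing the paper's random fan-out layer by an exact distributed sibling-rank computation (local ranks, per-heavy-vertex count aggregation on a responsible machine, prefix-sum offsets), and replacing the paper's coordinated index assignment by the deterministic injective encoding $n+(\ind{v}-1)\cdot n+j$. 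What the paper's route buys is that heavy vertices never need global ranks: a child only makes a random choice among its parent's auxiliaries. What your route buys is a structure that is deterministic given the child ordering, no intermediate bounded-degree tree, and coordination-free naming of auxiliaries (at the cost of indices in $[1,\O{n^2}]$ rather than $[1,\O{n}]$, which is harmless since Lemma~\ref{lem:universalhash} only needs a $\poly(n)$ universe). Your caterpillars can create chains of length $\Theta(n)$ (e.g., on a star), much longer than the paper's $\Ot{n/\M}$-size gadgets, but this is irrelevant downstream because Theorem~\ref{thm:decomposebinary} assumes nothing about depth.

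One step is under-specified and would fail if executed literally: routing every child $u$ of a \emph{light} vertex to machine $h'(\parent{u})$ requires the machine currently holding $u$ to already know whether $\parent{u}$ is light or heavy, but Lemma~\ref{lem:degree} only delivers $\deg(v)$ to machine $h(\ind{v})$, not to the machines holding $v$'s children. Without that knowledge the routing decision cannot be made, and routing unconditionally by parent hash would pile the $\Theta(n)$ children of a heavy parent onto a single machine, violating the memory bound. The fix is standard and is exactly what the paper uses in Claim~\ref{claim:log-tree}: there are only $\O{\M}$ heavy vertices, so their indices can be broadcast to all machines in $\O{1}$ rounds within the $\Ot{n/\M}$ budget (since $\M \leq s$); alternatively, a request/response lookup of parent degrees obeys the same $\min\{\M,\deg(v)\}$ weighted-load analysis as the proof of Lemma~\ref{lem:degree}. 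With that step made explicit, your argument goes through.
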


\begin{proof}
	Our algorithm for converting \T{} to a binary extension of it done in two phases. In the first phase, we convert \T{} to an extension of it, \Td, with a maximum degree of \Ot{n/\M}. Then in the next phase, we convert \Td{} to a binary extension.
			
	\begin{claim} \label{claim:log-tree}
		There exists a randomized algorithm to convert a given rooted tree $\T{}$ to a tree $\T^d{}$, with maximum degree \Ot{n/\M} that is an extension of $\T{}$ in \O{1} rounds of \model{} such that with high probability, each machine uses a memory of size at most \Ot{n/\M} and runs an algorithm that is linear in its memory size.
	\end{claim}
	
	The detailed proof of Claim~\ref{claim:log-tree} and the pseudo-code to implement it is given in Appendix~\ref{apx:proofs}. Intuitively, after calculating the degree of each vertex, we send the index of all vertices with degree more than \O{n/\M} to all machines (call them high degree vertices). This is possible because there are at most \O{\M} high degree vertices and \M{} is assumed to be less than the space of each machine. To any high degree vertex $v$, in the next step, we add a set of at most \O{n/\M} auxiliary children and set the parent of any previous child of $v$ to be one of these auxiliary vertices that is chosen uniformly at random. The detailed proof of why no machine violates its memory size during the process and how we assign indexes to the auxiliary vertices is differed to Appendix~\ref{apx:proofs}.
	
	The next phase is to convert this bounded degree tree to a binary tree.
		
	\begin{claim}\label{claim:binary-tree}
		There exists a randomized algorithm to convert a tree $\T{}^d$, with maximum degree \Ot{n/\M}, to a binary extension of it $\Tb{}$, in \O{1} rounds of \model{} using $\M$ machines, such that with high probability, each machine uses a memory of size at most \Ot{n/\M} and runs an algorithm that is linear in its memory size.
	\end{claim}
	
	Again, the detailed proof of Claim~\ref{claim:binary-tree} and a pseudo-code for implementing it in the desired setting is given in Appendix~\ref{apx:proofs}. Roughly speaking, since the degree of each vertex is at most \Ot{n/\M}, we can store all the children of any vertex in the same machine (the parent may be stored in another machine). Having all the children of a vertex in the same machine allows us to add auxiliary vertices and locally reduce the number of children of each vertex to at most 2. Figure~\ref{fig:binary} illustrates how we convert the tree to its binary extension. See Appendix~\ref{apx:proofs} for more details.
	\begin{figure}
		\centering
		\includegraphics[scale=0.6]{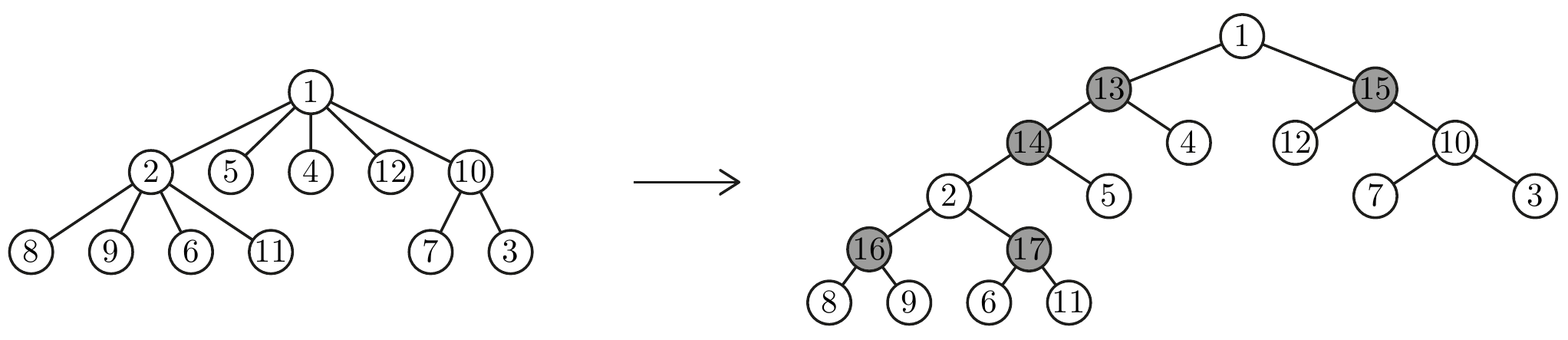}

		\caption{An example of converting a tree to a binary extension of it. The gray nodes denote the added auxiliary vertices and the numbers within the nodes denote their indexes.}
		\label{fig:binary}
	\end{figure}

	By Claim \ref{claim:log-tree} and Claim \ref{claim:binary-tree}, for any given tree, there exists an algorithm to construct a binary extension of it in \O{1} round of \model{}, which with high probability uses \Ot{n/\M} time and space.
\end{proof}

\subsection{Decomposing a Binary Tree}\label{sec:decompose}

We start with the formal definition of decompositions.

\begin{definition}\label{def:decomposition}
	We define a decomposition of a binary tree \Tb{} to be a set of \O{\M} components, where each component contains a subset  of the vertices of \Tb{} that are connected, and each vertex of \Tb{} is in exactly one component. A component $c_i$, in addition to a subset of $V(\Tb{})$ (which we denote by $V(c_i)$) and their ``inner edges", i.e., the edges between the vertices in $V(c_i)$, contains their ``outer edges" (i.e., the edges between a vertex in $V(c_i)$ and a vertex in another component) too. The data stored in each component, including the vertices and their inner and outer edges should be of size up to \Ot{n/\M}. \mdcomment{ SUREE }
\end{definition}

Since the vertices in any component $c_i$ are connected, there is exactly one vertex in $c_i$ that its parent is not in $c_i$, we call this vertex the root vertex of $c_i$ and denote it by \croot{c_i}. We also define the index of a component $c_i$ to be equal to the index of its root; i.e., $\ind{c_i} = \ind{\croot{c_i}}$ (recall that we assume every vertex of \Tb{} has a unique index). Moreover, by contracting a component we mean contracting all of its vertices and only keeping their outer edges to other components.	Since the vertices in the same component have to be connected by definition, the result of contraction will be a rooted tree. We may use this fact and refer to the components as nodes and even say a component $c_i$ is the parent (child, resp.) of $c_j$ if after contraction, $c_i$ is indeed the parent (child, resp.) of $c_j$.

\begin{theorem}\label{thm:decomposebinary}
	There exists a randomized algorithm to find a decomposition of a given binary tree \Tb{} with $n$ vertices using \M{} machines, such that with high probability the algorithm terminates in $\O{\log n}$ rounds of \model{}, and each machine uses a memory of size at most $\Ot{n/\M}$ and runs an algorithm that is linear in its memory size.
\end{theorem}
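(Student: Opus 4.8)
The plan is to adapt the classical \emph{rake-and-compress} tree contraction to \model{}. Throughout, I would maintain a partition of $V(\Tb)$ into connected \emph{components} together with the \emph{contracted tree} obtained by contracting each component to a single node; initially every vertex is its own component. The two invariants I would preserve are that (i) every component is connected and (ii) the contracted tree has maximum degree at most three. Invariant (ii) is the crucial one for the outer-edge requirement implicit in Definition~\ref{def:decomposition}: because \Tb{} is a tree, any two disjoint connected components are joined by at most one edge of \Tb{}, so the number of outer edges of a component equals its degree in the contracted tree, which (ii) caps at a constant. Since the input is binary, both invariants hold at the start.

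Each \emph{iteration} would perform one rake-and-compress step using only local rules, so that it can be implemented in \O{1} rounds. I would distribute the current components by a $(\log \M)$-universal hash function (Corollary~\ref{cor:distribution}), keyed by the index of each component's root, so that a component can look up the machine holding its parent component in a single round. A component that is currently a \emph{leaf} of the contracted tree is \emph{raked} into its parent; a component that is \emph{unary} (one parent, one child) is \emph{compressed} into a neighbour. To prevent two components from trying to absorb each other simultaneously, I would use independent fair coin flips: on a unary chain a component merges toward its child only when it flips heads and its child flips tails, which makes the merges conflict-free and shrinks every maximal chain by a constant factor in expectation. Crucially, raking a leaf and compressing a unary node never raise the degree of any surviving node above three, so invariant (ii) is maintained, and each merge is between adjacent components, so connectivity (invariant (i)) is preserved.

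To respect the memory bound I would \emph{freeze} any component once its size reaches $\Th{n/\M}$: a frozen component stops absorbing and is thereafter treated as a leaf. Freezing guarantees that no component ever exceeds \Ot{n/\M}, so each component fits on a single machine and the per-component work of \compressor{} and \merger{} is feasible; it also forces the number of components down to \O{\M}, since the number of frozen components is at most \O{\M} (each has size $\Th{n/\M}$ while the total size is $n$), and the remaining unfrozen mass forms \O{\M} pieces by a standard charging to neighbouring frozen components. Implementing one iteration is a constant number of hashing-and-routing rounds whose loads are \Ot{n/\M} with high probability by Lemma~\ref{lem:maxload}, applied with component weights equal to their sizes; the standard analysis of randomized rake-and-compress shows that \O{\log n} iterations suffice to contract the tree, yielding the claimed \O{\log n} rounds.

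The main obstacle is the high-probability bound on the \emph{maximum} component size, which is exactly where naive random edge or vertex sampling fails (the overview notes that there the root component is $\Omega(n)$ with constant probability). The difficulty is that a component grows by accreting whatever rakes or compresses into it over many iterations, and on a highly unbalanced \Tb{} these contributions are neither independent nor identically sized. I would control this by arguing that the size a component accumulates before it freezes is dominated by a sum of geometrically shrinking contributions governed by the coin flips, and then combine a concentration bound with a union bound over the \O{\M} components, using $(\log \M)$-universality to supply the independence needed in the routing step. Making this concentration argument rigorous for arbitrary tree shapes, while simultaneously respecting invariants (i) and (ii) and the \O{\log n} iteration budget, is the technical heart of the proof.
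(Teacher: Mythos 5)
Your route is genuinely different from the paper's. The paper's Algorithm~\ref{alg:decompose} does not rake/compress locally: in each iteration it \emph{selects} components by local rules (the root, components with two children, children of completed components, and each unary component with probability $1/2$) and then merges every unselected component into its \emph{closest selected ancestor}. A single iteration can therefore merge an entire chain of components at once, and the paper must prove (Claim~\ref{claim:closestancestor}) that w.h.p.\ such chains have length $\O{\log n}$, which is exactly what drives its size bound (Claim~\ref{claim:component-size}); the $\O{\log n}$ iteration count comes from showing a constant fraction of components disappear per iteration (Claim~\ref{claim:selectedsize}). Your pairwise head/tail rake-and-compress replaces the chain merges by merges of constantly many adjacent components per iteration, and your freezing rule plays the role of the paper's completed set $F$. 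Both approaches keep the contracted tree (essentially) binary to guarantee constantly many outer edges, and both use hashing (Lemma~\ref{lem:maxload}) for the \model{} implementation, so the frameworks are comparable; yours is closer to classical Miller--Reif contraction.

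There is, however, a genuine gap, and it sits exactly where you placed the ``technical heart.'' First, your freezing rule is ambiguous: you say a frozen component ``stops absorbing and is thereafter treated as a leaf.'' If ``treated as a leaf'' means it can still be \emph{raked into} its parent, the size bound fails outright: take a complete binary tree of depth $\log \M$ in which every leaf is replaced by a path of length $n/\M$. The paths compress and freeze at size $\Th{n/\M}$; then successive rakes double the frozen mass every iteration, and after $\log \M$ iterations the root component has size $\Omega(n)$. So frozen components must be excluded from \emph{all} future merges, as in the paper's set $F$. Second, once you adopt that reading, the concentration argument you defer to (``geometrically shrinking contributions, concentration plus union bound'') is both unnecessary and not the right tool: under your own merge rule the bound is deterministic. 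At the start of each iteration every unfrozen component has size below the threshold $\Th{n/\M}$, and your coin-flip conflict resolution guarantees each component participates in at most a constant number of merges per iteration (at most two raked leaves and one compress partner, all unfrozen), so no component ever exceeds a constant times $n/\M$ before the freeze check fires. The probabilistic difficulty you anticipated is an artifact of the \emph{paper's} merge rule, where unboundedly long unselected chains can collapse into one component in a single iteration; it simply does not arise in yours. What your write-up actually still owes is elsewhere: a proof that $\O{\log n}$ iterations suffice and that the final component count is $\O{\M}$ \emph{in the presence of frozen components}, since frozen components block rakes and create ``stuck'' neighbors; your charging sketch (at most constantly many stuck components per frozen one, and $\O{\M}$ frozen ones total) is the right idea but needs to be carried out, playing the role of the paper's Claim~\ref{claim:selectedsize}.
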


	Algorithm~\ref{alg:decompose} gives a sequential view of the decomposition algorithm we use. We first prove some properties of the algorithm, and then using these properties, we prove it can actually be implemented in the desired setting.

	\begin{algorithm}[h]
		\caption{The sequential algorithm for finding a valid \model{} decomposition of a given binary tree \Tb{}.}
		\label{alg:decompose}
		\begin{algorithmic}[1]
			\Require A binary tree \Tb{}.
			\State $C \leftarrow \emptyset$ \label{line:cempty} \Comment{$C$ will contain all components.}
			\ForEach{vertex $v$ in \Tb{}}
			\State Insert a component containing only $v$ to $C$.
			\EndFor
			\State $F \leftarrow \emptyset$ \label{line:fempty} \Comment{$F$ will contain the completed components.}
			\While{$|C| > 14m$}\label{line:while}
			\State $S \leftarrow \emptyset$ \label{line:defines} \Comment{$S$ will contain the selected 	components.}
			\ForEach{component $c \in C-F$}\label{line:selects}
			\If{$c$ is the root component}
			\State Add $c$ to $S$.
			\ElsIf{$c$ has exactly two children components}
			\State Add $c$ to $S$. \label{line:twochildren}
			\ElsIf{the parent component of $c$ is in $F$}\label{line:ifparentcompleted}
			\State Add $c$ to $S$. \label{line:parentcompleted}
			\ElsIf{$c$ has exactly one child component}
			\State Add $c$ to $S$ with probability $0.5$. \label{line:onechild} 
			\EndIf
			\EndFor\label{line:selectf}
			\ForEach {component $c \in C-F-S$}\label{line:mergefor}
			\State  $d \gets $ Find the closest ancestor of $c$ that is in $S$.
			\State Merge $c$ into $d$.
			\EndFor \label{line:endmergefor}
			\For{every node $c$ in $C-F$}\label{line:completefor}
			\If{$c$ contains at least $n/\M$ vertices of $\Tb{}$} 
			\State Add $c$ to $F$. \label{line:addtoc}
			\EndIf
			\EndFor\label{line:endcompletefor}
			\EndWhile
		\end{algorithmic}
	\end{algorithm}
	
	The algorithm starts with $n$ components where each component contains only one vertex of \Tb{}. Then it merges them in several iterations until there are only \O{\M} components left. At the end of each iteration, if the size of a component is more than $n/\M$, we mark it as a completed component and never merge it with any other component. This enables us to prove that the components' size does not exceed \Ot{n/\M}. In the merging process, at each iteration, we ``select'' some of the components and merge any unselected component to its closest selected ancestor. We select a component if at least one of the following conditions hold for it.
1. It is the root component.
		2. It has exactly two children.
		3. Its parent component is marked as complete.
	In addition to these conditions, if a component has exactly one child, we select it with an independent probability of $0.5$. Figure~\ref{fig:decompose} illustrates the iterations of Algorithm~\ref{alg:decompose}.
	
	\begin{figure}[h]
		\centering
		\includegraphics[scale=0.7]{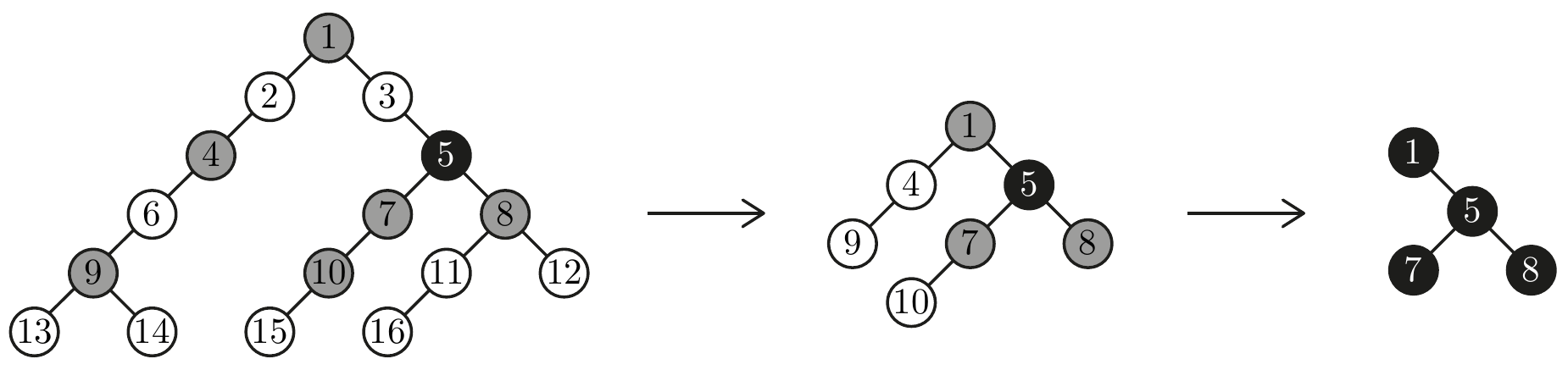}
		\caption{An example of running Algorithm~\ref{alg:decompose}. Each node denotes a component. A gray node denotes a selected component and a black node denotes a completed component.}
		\label{fig:decompose}
	\end{figure}

	The intuition behind the selecting conditions is as follows. The first condition ensures every unselected component has at least one selected ancestor to be merged to. The second condition ensures no component will have more than two children after the merging step. (To see this, let $c_1$ be a component with two children, $c_2$ and $c_3$. Also let $c_3$ have two children, $c_4$ and $c_5$. If only $c_3$ is not selected and is merged to $c_1$, then the resulting component will have three children $c_2$, $c_4$ and $c_5$.) The third condition ensures the path of a component to its closest selected ancestor does not contain any completed component. This condition is important to keep the components connected. Finally, randomly selecting some of the components with exactly one child ensures the size of no component exceeds \Ot{n/\M}. (Otherwise, a long chain of components may be merged to one component in one iteration.) We also prove at most a constant fraction of the components are selected at each iteration. This proves the total number of iterations, which directly impacts the number of rounds in the parallel implementation, is logarithmic in the number of vertices.

	\subsubsection{Properties of Algorithm~\ref{alg:decompose}}
	In this section, we prove some properties of Algorithm~\ref{alg:decompose}. These properties will be useful in designing and analyzing the parallel implementation of the algorithm. 
	
	Many of these properties are defined on the iterations of the while loop at line~\ref{line:while} of Algorithm~\ref{alg:decompose}. Any time we say an iteration of Algorithm~\ref{alg:decompose}, we refer to an iteration of this while loop. We start with the following definition.
	
	\begin{definition}\label{def:algitervars}
		We denote the total number of iterations of of Algorithm~\ref{alg:decompose} by $I$. Moreover, for any $i \in [I]$, we use $C_i$ and $F_i$ to respectively denote the value of variables $C$ (which contains all components) and $F$ (which contains the completed components) at the start of the $i$-th iteration. Analogously, we use $S_i$ to denote the selected components ($S$) at the end of the $i$-th iteration. We also define $T_i$ to be the rooted component tree that is given by contracting all components in $C_i$ (e.g., $T_1 = \Tb{}$).
	\end{definition}
	
	\begin{claim}\label{cl:connected}
		For any $i \in [I-1]$, each component $c \in C_{i+1}$ is obtained by merging a connected subset of the components in $C_{i}$. 
	\end{claim}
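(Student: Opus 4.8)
The plan is to read one iteration of the while loop as a partition of $C_i$ into groups, where each group is exactly the set of components that get collapsed into a single component of $C_{i+1}$, and then to show that every such group is connected in the contracted component tree $T_i$. A single iteration builds $C_{i+1}$ in only two ways: each completed component of $F_i$ is left untouched, and each component in $C_i - F_i - S_i$ is merged into its closest selected ancestor (both the selection set $S_i$ and the merge target being computed with respect to $T_i$). Since the root component is always selected by the first selection rule, every component has a selected ancestor and hence a well-defined merge target, so these operations genuinely partition $C_i$: one singleton group $\{f\}$ for every $f \in F_i$, and one group $G_d = \{d\} \cup \{c \in C_i - F_i - S_i : d \text{ is the closest selected ancestor of } c \text{ in } T_i\}$ for every $d \in S_i$. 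The $F_i$-groups are trivially connected, so it remains only to treat the $G_d$.

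To show that $G_d$ is connected, I would prove it is closed under passing to the ancestor path toward $d$. Fix $c \in G_d$ and write the $T_i$-path from $c$ up to $d$ as $c = c_0, c_1, \ldots, c_k = d$, where $c_{j+1}$ is the parent of $c_j$. Because $d$ is the closest selected ancestor of $c$, none of the strict ancestors $c_1, \ldots, c_{k-1}$ is in $S_i$. The key additional point is that none of them is completed either: if $c_j$ (with $1 \le j \le k-1$) were completed, then its child $c_{j-1}$ on the path would have a completed parent and would therefore be selected by the third selection rule, contradicting that $c_1, \ldots, c_{k-1}$ are unselected and that $c_0 = c$ is itself being merged (hence not selected). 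Consequently each $c_j$ with $1 \le j \le k-1$ lies in $C_i - F_i - S_i$; its ancestors $c_{j+1}, \ldots, c_{k-1}$ are all unselected while $c_k = d$ is selected, so its closest selected ancestor is exactly $d$ and $c_j \in G_d$. Thus $G_d$ contains the entire path from each of its members to $d$, which makes it a connected subtree of $T_i$.

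The crux of the argument, and the step I expect to require the most care, is precisely the claim that no completed component can sit strictly between a mergee and its closest selected ancestor. This is exactly what the third selection rule is designed to guarantee, and dropping it would allow a completed node to break a merge path and destroy connectivity. In carrying this out I would be especially careful with the boundary case $j - 1 = 0$, where the child whose parent is completed is $c$ itself rather than an interior node of the path, and I would state explicitly that all ancestor and path relations are taken in the contracted tree $T_i$, so that the notions of closest selected ancestor and of merging into a component are unambiguous.
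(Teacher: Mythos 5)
Your proof is correct and follows essentially the same route as the paper's: the paper also argues that if $v$ is the closest selected ancestor of a merged component $u$, then every component on the path from $u$ to $v$ must likewise have $v$ as its closest selected ancestor and be merged to it, because an intermediate component in $S_i$ would contradict closestness and an intermediate component in $F_i$ would force (via the third selection rule) a selected component strictly below it on the path. Your write-up is in fact somewhat more careful than the paper's terse proof --- in particular, your explicit treatment of the boundary case where the child of a hypothetical completed component is $c$ itself, and your packaging of the iteration as a partition into groups $G_d$ --- but the underlying idea is identical.
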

	\begin{proof}
		Observe that if in a step a component $v$ is the closest selected ancestor of component $u$, it is also the closest selected ancestor of every other component in the path from $u$ to $v$ (denote the set of these components by $P_u$). We claim if $u$ is merged to $v$, every component in $P_u$ will also be merged to $v$ in the same step. To see this, note that if there exists a component in $P_u$ that is in $F_i$ or $S_i$, the closest selected ancestor of $u$ would not be $v$. Hence all of these components will be merged to $v$ and the resulting component will be obtained by merging a connected subset of the components in $C_{i}$.
	\end{proof}
	
	The following is a rather general fact that is useful in proving some of the properties of Algorithm~\ref{alg:decompose}.
	\begin{fact}\label{fa:leavesnumber}
		For a given binary tree $T$, let $k_i(T)$ denote the number of vertices of $T$ with exactly $i$ children. Then $k_0(T) = k_2(T) + 1$.
	\end{fact}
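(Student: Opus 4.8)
The plan is to establish the identity by counting the edges of $T$ in two different ways, exploiting the defining property of a binary tree that every vertex has $0$, $1$, or $2$ children. First I would set $n = |V(T)|$ and observe that, since these three cases are exhaustive and mutually exclusive, every vertex falls into exactly one of the three classes, so $n = k_0(T) + k_1(T) + k_2(T)$.

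Next I would count the edges of $T$. On the one hand, $T$ is a tree, hence it has exactly $n-1$ edges. On the other hand, every edge joins a vertex to one of its children, and each non-root vertex is the child of exactly one parent, so the number of edges equals the total number of children summed over all vertices, namely $0\cdot k_0(T) + 1\cdot k_1(T) + 2\cdot k_2(T) = k_1(T) + 2k_2(T)$. Equating the two expressions for the edge count gives $k_0(T) + k_1(T) + k_2(T) - 1 = k_1(T) + 2k_2(T)$, and cancelling $k_1(T)$ from both sides followed by rearranging yields $k_0(T) = k_2(T) + 1$, as desired.

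I do not anticipate a genuine obstacle here, since this is a routine counting identity; the only points that warrant a moment of care are confirming that the ``at most two children'' convention makes the three vertex classes exhaustive (so the vertex-count equation is valid) and that summing the number of children over all vertices counts each edge exactly once, through its child endpoint. An alternative route would be induction on $n$, peeling off a leaf or splicing out a vertex with a single child, but the double-counting argument is shorter and sidesteps the case analysis that the inductive approach would require.
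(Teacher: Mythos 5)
Your proof is correct. The paper states this as a bare Fact without supplying any proof, so there is nothing to compare against; your double-counting argument (equating the edge count $n-1$ with the total number of children $k_1(T) + 2k_2(T)$, together with $n = k_0(T)+k_1(T)+k_2(T)$) is the canonical justification, handles the single-vertex case correctly, and fills the gap the paper leaves implicit.
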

	
	We claim the components' tree, in all iterations of the while loop, is a binary tree.
	\begin{claim}\label{claim:binary-decomposition}
		For any $i \in [I]$, the component tree $T_i$ is a binary tree.
	\end{claim}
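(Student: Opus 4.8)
The plan is to prove the statement by induction on the iteration index $i$. The base case is immediate, since $T_1 = \Tb{}$ is binary by hypothesis. For the inductive step I would assume that $T_i$ is binary and argue that $T_{i+1}$ is binary as well, i.e.\ that every component of $C_{i+1}$ has at most two children in the contracted tree $T_{i+1}$.

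The first step is to describe the shape of $C_{i+1}$ in terms of $C_i$. After iteration $i$ the surviving ``head'' components are exactly the selected components $S_i$ together with the already-completed components $F_i$; every other (unselected, non-completed) component is merged into its closest selected ancestor. By Claim~\ref{cl:connected} each resulting component is a connected union of $C_i$-components, so I can view each head as sitting at the top of a connected ``blob'', the head being the topmost component of that blob. The crucial bookkeeping observation is that the children of a head $h$ in $T_{i+1}$ are in bijection with the downward edges leaving its blob $B$: since every blob has a unique topmost component it has a unique upward edge, so distinct child blobs correspond to distinct edges $(u,v)$ of $T_i$ with $u\in B$, $v\notin B$, and $u$ the parent of $v$.

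The second step bounds the number of these downward edges by $2$, splitting into two cases. If $h$ is a completed component, then every $T_i$-child $c$ of $h$ has its parent in $F_i$, so by the third selection rule $c$ is itself selected, hence its own head; thus $h$ simply inherits its children, of which there are at most two by the inductive hypothesis. If $h$ is a selected component $s$, I would use the fact that, by the second selection rule, any component with two children is selected, so every unselected non-completed component---i.e.\ every non-head component of $B$---has at most one child. Consequently, starting from each of the (at most two) $T_i$-children of $s$, the part of $B$ reachable through unselected components forms a \emph{non-branching} chain, which leaves $B$ through at most one downward edge (either the chain terminates at a leaf, contributing no child, or it meets a selected or completed component, contributing exactly one child blob). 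Hence $s$ has at most two children in $T_{i+1}$, completing the induction.

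I expect the main obstacle to be the careful treatment of the completed components. I must verify that they never lie inside another head's blob (they are never merged, as the merge loop ranges only over $C-F-S$), that the third selection rule genuinely prevents an unselected component's path to its closest selected ancestor from crossing a completed one, and that the ``chain is non-branching'' argument is not broken by a completed child appearing in the middle of a chain. Once these points are pinned down, the bijection between child blobs and outgoing edges, combined with the two-children bound in each case, closes the induction.
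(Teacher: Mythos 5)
Your proof is correct, and it rests on the same two pillars as the paper's own argument: the second selection rule (any component with two children is selected, so every unselected component has at most one child) and the connectivity of each merged blob (Claim~\ref{cl:connected}). The difference is in the bookkeeping. The paper turns these facts into a compact counting argument: writing $U$ for the set of $C_{i-1}$-components merged into a component $c$, it bounds the number of children of $c$ by $\sum_{c_j \in U}\bigl(\child{c_j} - \childin{c_j}{U}\bigr)$, then shows $\sum_{c_j\in U}\child{c_j} \le |U|+1$ (at most one member of $U$, namely the selected head, can have two children) and $\sum_{c_j\in U}\childin{c_j}{U} \ge |U|-1$ (connectivity), so the difference is at most $2$. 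This handles selected heads, completed components (singleton blobs), and the root uniformly, with no case analysis. Your structural route --- heads are exactly $S_i \cup F_i$, child blobs biject with downward edges leaving a blob, and each blob is a head with at most two hanging non-branching chains --- proves the same bound via a case split, and needs the extra verifications you flag (completed components are never merged since the merge loop ranges over $C-F-S$, and rule three keeps them off any path to a closest selected ancestor); one small wrinkle is that in your completed-head case a $T_i$-child of $h$ could itself be completed rather than selected, but it is then still its own head, so the conclusion stands. What your version buys is an explicit picture of the blob geometry: the ``head plus at most two non-branching chains'' description is precisely the structure the paper later exploits (together with Claim~\ref{claim:closestancestor}) to bound component sizes in Claim~\ref{claim:component-size}, so your argument makes explicit what the paper's counting leaves implicit, at the cost of a longer case analysis.
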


	\begin{proof}
		We use induction to prove this property. We know by Definition~\ref{def:algitervars}, that $T_1 = \Tb{}$ and since \Tb{} is a binary tree, the condition holds for the base case $T_1$.
		
		Assuming that $T_{i-1}$ is a binary tree, we prove $T_i$ is also a binary tree. Fix any arbitrary component $c$ in $C_i$, we prove it cannot have more than two children. Let $U = \{c_1, \ldots, c_k\}$ be the components in $C_{i-1}$ that were merged with each other to create $c$. For any component $c_j \in U$ let \childin{c_j}{U} denote the number of children of $c_j$ that are in $U$ and let \child{c_j} denote the total number of children of $c_j$ (no matter if they are in $U$ or not). For any child of $c$, there is a component in $U$ with a child that is outside of $U$, therefore to prove $c$ cannot have more than two children it suffices to prove
		\begin{equation}\label{eq:2child}
			\sum_{c_j \in U} \big(\child{c_j} - \childin{c_j}{U}\big) \leq 2.
		\end{equation}
		
		We first prove $\sum_{c_j \in U} \child{c_j} \leq |U|+1$. Note that at most one of the components in $U$ has two children in $T_{i-1}$ since all vertices with two children are selected in line~\ref{line:twochildren} of Algorithm~\ref{alg:decompose} and if two components are merged at least one of them is not selected (line~\ref{line:mergefor} of Algorithm~\ref{alg:decompose}). Therefore other than one vertex in $U$ which might have two children in $C_{i-1}$, others have at most one children. Hence $\sum_{c_j \in U} \child{c_j} \leq |U|+1$.
		
		Next, we prove $\sum_{c_j \in U} \childin{c_j}{U} \geq |U|-1$. We know by Claim~\ref{cl:connected} that the components in $U$ are connected to each other. Therefore there must be at least $|U|-1$ edges within them and each such edge indicates a parent-child relation in $U$. Hence $\sum_{c_j \in U} \childin{c_j}{U} \geq |U|-1$.
		
		Combining $\sum_{c_j \in U} \child{c_j} \leq |U|+1$ and $\sum_{c_j \in U} \childin{c_j}{U} \geq |U|-1$ we obtain Equation~\ref{eq:2child} always holds and therefore $T_i$ is a binary tree. 
	\end{proof}
	
	The following property bounds the number of unselected ancestors of a component that is going to be merged to its closest selected ancestor.
	
	\begin{claim}\label{claim:closestancestor}
	With probability at least $1-1/n$, for any given $i$ and $c_i$, where  $i \in [I]$ and $c_i \in C_i - F_i - S_i$, there are at most $2\log{n}$ components in the path from $c_i$ to its closest selected ancestor. 
	\end{claim}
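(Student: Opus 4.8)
The plan is to fix an iteration $i$ and a component $c_i \in C_i - F_i - S_i$, and to track the chain of ancestors of $c_i$ in the component tree $T_i$, say $a_0 = c_i, a_1, a_2, \ldots$, walking upward toward the root. The crucial point is that this chain, together with the deterministic part of each selection rule, is completely determined by $T_i$ and $F_i$, which depend only on the outcomes of iterations $1, \ldots, i-1$; in particular the chain is fixed \emph{before} the random coin flips of iteration $i$ are drawn. Since the root component is always selected by the first selection rule, $c_i$ always has a selected ancestor, so its closest selected ancestor is well defined, and the path in question is $a_0, a_1, \ldots, a_\ell$ where $a_\ell$ is the first selected ancestor.

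The heart of the argument is the observation that, for every $j \ge 1$, if $a_j$ is \emph{unselected} then $a_j$ must have exactly one child in $T_i$. Indeed, $a_{j-1}$ is a child of $a_j$, so $a_j$ has at least one child; and $a_j$ cannot have two children, since by Claim~\ref{claim:binary-decomposition} the tree $T_i$ is binary and a two-child component is always selected (line~\ref{line:twochildren}). A one-child component is selected by an \emph{independent} fair coin (line~\ref{line:onechild}), whereas being the root or having its parent in $F_i$ is decided deterministically by $T_i, F_i$. Hence, conditioned on the path having reached $a_j$, the probability that $a_j$ is itself unselected is at most $1/2$: it equals $0$ when $a_j$ is deterministically selected (root, two children, or parent in $F_i$) and exactly $1/2$ otherwise. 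As these coins belong to distinct components and are mutually independent, I can conclude that $\Pr[a_1, \ldots, a_k \text{ are all unselected}] \le (1/2)^k$ for every $k$.

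The path from $c_i$ to its closest selected ancestor contains more than $2\log n$ components only if at least $2\log n - 1$ of the ancestors $a_1, a_2, \ldots$ are unselected, an event of probability at most $(1/2)^{2\log n - 1} = 2/n^2$. To finish, I would take a union bound over all iterations $i$ and all relevant components $c_i \in C_i - F_i - S_i$. Here I rely on the fact (established alongside the logarithmic bound on $I$) that at most a constant fraction of the non-completed components survive each iteration, so the number of non-completed components decreases geometrically and the total number of (iteration, component) pairs is $O(n)$; multiplying by the per-pair bound $2/n^2$ yields overall failure probability $O(1/n)$, which is sharpened to the stated $1/n$ by the constant slack in the threshold.

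I expect the main obstacle to be the clean justification of the stochastic-domination step, namely that conditioning on the path reaching a given ancestor $a_j$ does not distort the independent fair coin that governs $a_j$'s selection. This goes through precisely because the ancestor chain and the deterministic triggers of selection (root, two children, parent in $F_i$) are all fixed before iteration $i$'s randomness, leaving the coin of each one-child component as the only stochastic ingredient, and these coins are independent across components. The secondary technical point is the union-bound bookkeeping needed to land exactly at $1 - 1/n$, which hinges on keeping the cumulative count of non-completed components linear in $n$ via the geometric decrease per iteration.
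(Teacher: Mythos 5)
Your per-path argument coincides with the paper's: along the ancestor chain of $c_i$, any unselected ancestor that is not the root and whose parent is not completed must have exactly one child (by Claim~\ref{claim:binary-decomposition} the component tree is binary, and two-child components are always selected), so each ancestor independently fails to be selected with probability at most $1/2$, giving the $2^{-2\log n}$-type bound per component. Where you genuinely diverge is the union-bound bookkeeping, and here the paper's route is both cleaner and stronger. The paper does not invoke geometric decay of the number of components at all; it proves the \emph{deterministic} inequality $\sum_{i\in[I]} |C_i - F_i - S_i| \leq n$ by observing that any vertex $v$ of \Tb{} can be the root of an unselected, non-completed component in at most one iteration: in the first such iteration that component is merged into a selected ancestor, and afterwards $v$ is never again the root of any component. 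This injects all (iteration, component) pairs into the $n$ vertices, and $n \cdot (1/n^2)$ gives exactly the stated $1/n$. Your route, bounding the number of pairs by $O(n)$ via the geometric decrease of $|C_i - F_i|$, does work but buys you less: (i) it leans on Claim~\ref{claim:selectedsize} and Corollary~\ref{cor:log-round}, which are themselves only high-probability statements (proved \emph{after} this claim in the paper), so their failure probability must be folded into your final bound, and intersecting with that event takes some care because your per-pair estimate conditions on the history of iteration $i$ while the decay event also depends on later coins; and (ii) it yields $O(1/n)$ rather than $1/n$ -- the appeal to ``constant slack in the threshold'' to recover the exact constant is hand-waving. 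None of this is a fatal gap, since the intended conclusion is a with-high-probability statement, but the paper's vertex-counting argument is self-contained, deterministic, and avoids both issues; you should adopt it in place of the geometric-decay count.
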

	\begin{proof}
		 We first show with probability $1-1/n^2$, there are at most $2\log{n}$ components in the path from $c$ to its closest selected ancestor. To see this, let $P_c$ denote the set of ancestors of $c$ that have distance at most $2\log{n}$ to $c$. We prove with probability $1 - 1/n^2$ at least one component in $P_c$ is selected. Since, all the children of any component in $F_i$ is in $S_i$, If $P_c \cap F_i \neq \emptyset $, then $P_c \cap S_i \neq \emptyset $. Otherwise, any component in $P_c$ is selected independently with probability at least $\frac{1}{2}$ and the probability that $P_c \cap S_i \neq \emptyset $ is $1-2^{-2\log{n}} = 1 - 1/n^2$.
		 
		In addition we prove that $\Sigma_{i\in [I]} |C_i - F_i - S_i|\leq n$, so that we can use union bound over all the components in  $\cup_{i\in [I]} C_i - F_i - S_i$. To prove this it suffices to prove that  any vertex $v \in V(\T{})$ is at most the root vertex of one of the components in $\cup_{i\in [I]} C_i - F_i - S_i$. To prove this, let $j$ denote the first iteration that $v$ is the root vertex of a component in $C_j - F_j - S_j$. In this iteration we merge this component with one of its ancestors, so it can not  be root vertex of any other component in $\cup_{i\in [I], i>j} C_i - F_i - S_i$.
		 Therefore using union bound over all the components in $\cup_{i\in [I]} C_i - F_i - S_i$, we obtain that with probability at least $1 - 1/n$ for any given $i$ and $c_i$, where  $i \in [I]$ and $c_i \in C_i - F_i - S_i$, there are at most $2\log{n}$ components in the path from $c_i$ to its closest selected ancestor.
	\end{proof}
	
	The following property implies that w.h.p. at each round at most a constant fraction of the incomplete components are selected.
	
	\begin{claim}\label{claim:selectedsize}
		There exists a constant number $0 < c  < 1$ such that for any $i \in [K]$,  with probability at least $1-e^{-n/24}$, $ |S_i| \leq c\cdot|C_i - F_i|.$
	\end{claim}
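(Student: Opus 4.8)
The plan is to split $S_i$ according to the rule that selected each component: the three \emph{deterministic} rules of Algorithm~\ref{alg:decompose} (the component is the root, has two children, or has a completed parent) and the single \emph{random} rule (the component has exactly one child and is kept with probability $1/2$). I would bound the deterministic selections by a counting argument that uses the binary structure of the component tree, and bound the random selections by a Chernoff bound, after conditioning on $C_i$ and $F_i$ so that the tree $T_i$ is fixed and only the fresh coin flips of iteration $i$ are random. Write $N = |C_i - F_i|$ and let $n_0, n_1, n_2$ be the numbers of \emph{incomplete} components (those in $C_i - F_i$) with $0$, $1$, and $2$ children in $T_i$, so $N = n_0 + n_1 + n_2$; the essential observation is that every two-child incomplete component is selected while the $n_0$ leaves are selected only if they are the root or have a completed parent.

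Next I would bound the deterministic part. Since $T_i$ is binary (Claim~\ref{claim:binary-decomposition}), Fact~\ref{fa:leavesnumber} gives $k_0(T_i) = k_2(T_i) + 1$, so the number of two-child nodes is at most the number of leaves; restricting to incomplete components and treating the completed leaves as slack yields $n_2 \le n_0 + |F_i|$. Each completed component holds at least $n/\M$ of the $n$ vertices, so $|F_i| \le \M$, and since each such component has at most two children the ``completed parent'' rule selects at most $2|F_i| \le 2\M$ components. Hence the deterministic selections number at most $n_2 + 2\M + 1$; substituting $n_0 = N - n_1 - n_2$ into $n_2 \le n_0 + \M$ gives $n_2 \le \tfrac12(N - n_1 + \M)$, so the deterministic count is at most $\tfrac12(N - n_1) + O(\M)$. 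As the random rule contributes at most $n_1/2$ selections in expectation, this already yields $\mathbb{E}[|S_i|] \le N/2 + O(\M)$.

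The while-loop guard is what makes the $O(\M)$ error harmless: while Algorithm~\ref{alg:decompose} runs we have $|C_i| > 14\M$, and with $|F_i| \le \M$ this forces $N > 13\M$, so the additive $O(\M)$ terms are a small constant fraction of $N$. To finish, I would write $|S_i| \le \tfrac12(N - n_1) + O(\M) + X$ where $X$ is a sum of at most $n_1$ independent $\mathrm{Bernoulli}(1/2)$ variables with $\mathbb{E}[X] \le n_1/2$, and bound $\Pr[|S_i| > cN]$ for a fixed constant $c$, say $c = 4/5$. Crucially the $n_1/2$ term from the deterministic bound cancels against $\mathbb{E}[X]$, leaving a gap of $(c - \tfrac12)N - O(\M) = \Omega(N)$ between the threshold $cN$ and $\mathbb{E}[X]$ (using $N > 13\M$); a Hoeffding bound over the at most $N$ summands then gives $\Pr[|S_i| > cN] \le e^{-\Omega(N)}$, the desired exponentially small failure probability.

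The main obstacle is the deterministic count rather than the concentration step: a balanced component tree forces $\approx N/2$ two-child selections with no randomness at all, so everything hinges on pairing each such selection with an \emph{unselected} leaf via Fact~\ref{fa:leavesnumber}. The only phenomenon that can break this pairing is the presence of completed components (which both fail to supply leaves and trigger the ``completed parent'' rule), and the argument survives precisely because completed components are provably few ($|F_i| \le \M$) and the $14\M$ threshold in the while loop keeps their $O(\M)$ contribution below any fixed fraction of $N$.
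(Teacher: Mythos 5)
Your proof is correct and follows essentially the same route as the paper's: both split $S_i$ by selection rule, bound the deterministic selections via Fact~\ref{fa:leavesnumber} applied to the binary component tree (Claim~\ref{claim:binary-decomposition}), apply a Chernoff/Hoeffding bound to the randomly selected one-child components, and absorb the $O(\M)$ contribution of completed components using $|F_i| \leq \M$ together with the $|C_i| > 14\M$ loop guard. The differences are cosmetic --- you count within $N = |C_i - F_i|$ with $c = 4/5$ while the paper counts within $|C_i|$ with $c = 5/6$ --- and your stated $e^{-\Omega(N)}$ failure probability is in fact what the paper's own Chernoff step genuinely yields (its exponent is really in the number of one-child components, not $n$).
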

	\begin{proof}
		Let $K_j(C_i)$ denote the set of components in $C_i$ with exactly $j$ children. Since each component in $K_1(C_i)$ is selected independently with probability $\frac{1}{2}$, by Chernoff bound, in round $i$,
		$$\Pr[|K_1(C_i)\cap S_i| > \frac{3}{4} |K_1(C_i)|] \leq e^{-n/24}.$$
		
		The components that are selected in each round are as follows. With probability at least $1-e^{-n/24}$ we choose less than three forths of the components in $K_1(C_i)$, the root component, all the components in $K_j(C_i)$, and at most two components for any component in $F_i$. Therefore, with probability at least $1-e^{-n/24}$, 
		\begin{equation}\label{eq:selected1}
		|S_i| < 1+ |K_2(C_i)|+ \frac{3}{4}|K_1(C_i)|+ 2|F_i|.
		\end{equation}
		 By Fact \ref{fa:leavesnumber}, $|K_0(C_i)|$= $|K_2(C_i)|+1$, and by Lemma \ref{claim:binary-decomposition}, $T_i$ is a binary tree, so  $C_i= K_0(C_i)+ K_1(C_i)+ K_2(C_i).$ Therefore, 
		 \begin{equation}\label{eq:selected2}
		 1+ |K_2(C_i)|+ \frac{3}{4}|K_1(C_i)| < \frac{1}{2} (|K_2(C_i)|+ |K_0(C_i)|+1) + \frac{3}{4}|K_1(C_i)| \leq \frac{3}{4} |C_i|.
		 \end{equation}
		 By equations \ref{eq:selected1} and \ref{eq:selected2} with probability at least $1-e^{-n/24}$, 
		$	|S_i| \leq \frac{3}{4} |C_i| + 2|F_i|.$ 
			Therefore to complete the proof, it suffices to prove  that there exists a constant number $0 < c  < 1$   such that  equation
		 $$ \frac{3}{4} |C_i| + 2|F_i| \leq c\cdot|C_i - F_i|,$$ which is equivalent to  $$ \frac{2-c}{c-3/4}\leq  \frac{|C_i|}{|F_i|}$$ holds.
		 Since the size of any component in $F_i$ is at least $n/\M$, $|F_i|\leq\M$. In addition, by line 3 of Algorithm \ref{alg:decompose}, $|C_i|> 14m$, therefore $\frac{|C_i|}{|F_i|}\geq 14$. It is easy to see that the equation $$ \frac{2-c}{c-3/4}\leq 14 \leq \frac{|C_i|}{|F_i|}$$
		 holds for $c=5/6$, which is a constant number in (0, 1).
	\end{proof}
	
	The following property is a direct corollary of Claim~\ref{claim:selectedsize}.
	\begin{corollary} \label{cor:log-round}
		With high probability the while loop at line~\ref{line:while} of Algorithm~\ref{alg:decompose}, has at most \O{\log n} iterations.
	\end{corollary}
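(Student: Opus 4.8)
The plan is to turn Claim~\ref{claim:selectedsize} into a statement that the number of components shrinks by a constant factor in every iteration, and then simply count how many constant-factor reductions it takes to drop from $n$ components down to the termination threshold of $14\M$. First I would pin down exactly how many components survive an iteration. The completed components in $F_i$ are never touched, the selected components in $S_i$ are never merged into anything (they only absorb), and every unselected incomplete component in $C_i - F_i - S_i$ is merged into its closest selected ancestor. Hence the distinct components remaining at the start of the next iteration are precisely $F_i \cup S_i$, giving the exact relation $|C_{i+1}| = |F_i| + |S_i|$. By Claim~\ref{cl:connected} the merges are consistent, so no selected component is itself consumed.

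Next I would combine this with the two available bounds. Claim~\ref{claim:selectedsize} supplies a constant $c = 5/6 \in (0,1)$ with $|S_i| \le c\,(|C_i| - |F_i|)$ on its good event, while the loop guard $|C_i| > 14\M$ together with $|F_i| \le \M$ (each completed component holds at least $n/\M$ of the $n$ vertices) yields $|F_i| < |C_i|/14$. Substituting,
\begin{equation*}
|C_{i+1}| = |F_i| + |S_i| \le |F_i| + c\,(|C_i| - |F_i|) = c\,|C_i| + (1-c)\,|F_i| < \Big(c + \tfrac{1-c}{14}\Big)|C_i|.
\end{equation*}
For $c = 5/6$ the factor is $\gamma := 5/6 + (1/6)/14 = 71/84 < 1$, so on the good event the component count contracts geometrically.

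Finally I would handle the probability. Fix $L = \lceil \log_{1/\gamma} n \rceil = \O{\log n}$ and let $G_i$ be the good event of Claim~\ref{claim:selectedsize} in iteration $i$, which fails with probability at most $e^{-n/24}$ conditioned on the configuration $(C_i, F_i)$ entering that iteration. A union bound over the first $L$ iterations shows that $G_1, \dots, G_L$ all hold with probability at least $1 - L\,e^{-n/24} \ge 1 - 1/n$. On this event, as long as the loop is still running we have $|C_{i+1}| < \gamma\,|C_i|$, so $|C_{L+1}| < \gamma^L n \le 14\M$ and the guard $|C| > 14\M$ fails by iteration $L$ at the latest; hence the loop runs for $\O{\log n}$ iterations.

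The one point requiring care — and the only real obstacle — is the apparent circularity in the union bound: the number of iterations is itself the quantity being bounded, so one cannot union bound over ``all iterations'' a priori. The clean fix is exactly the device above: fix the horizon $L = \O{\log n}$ in advance, union bound only over those $L$ iterations, and observe that conditioning on $G_1, \dots, G_L$ already forces termination within $L$ steps (the loop can only stop earlier, which never hurts the bound). Because the per-iteration failure probability $e^{-n/24}$ is exponentially small, the resulting $1 - 1/n$ guarantee holds with a great deal of room to spare.
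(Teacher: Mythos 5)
Your proposal is correct and takes essentially the same route the paper intends: the paper presents this statement as a direct corollary of Claim~\ref{claim:selectedsize}, and your argument --- the exact relation $|C_{i+1}| = |F_i| + |S_i|$, the loop guard plus $|F_i| \leq \M$ giving $|F_i| < |C_i|/14$, hence geometric contraction with factor $71/84 < 1$, and termination after $\O{\log n}$ iterations --- is precisely the filling-in of that omitted derivation. Your fixed-horizon union bound (choosing $L = \O{\log n}$ in advance to avoid conditioning on the random iteration count) is a legitimate technical detail that the paper glosses over entirely, and it is handled correctly.
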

	
	We are now ready to prove each component has at most \Ot{n/\M} vertices.
	\begin{claim}\label{claim:component-size}
		The size of each components at any iteration of Algorithm~\ref{alg:decompose} is bounded by \Ot{n/\M} with high probability.
	\end{claim}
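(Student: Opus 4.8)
The plan is to argue by induction on the iteration index that, at the start of every iteration $i$, each \emph{incomplete} component (i.e., each component in $C_i - F_i$) contains fewer than $n/\M$ vertices, while every component that has already been placed in $F_i$ has size at most $\Ot{n/\M}$. The base case is immediate: at the start every component is a single vertex and $n/\M = \Omega(\sqrt{n}) > 1$. For the inductive step, the only place a component grows is the merge loop (lines~\ref{line:mergefor}--\ref{line:endmergefor}), where each incomplete unselected component is merged into its closest selected ancestor; the completion test (lines~\ref{line:completefor}--\ref{line:endcompletefor}) then moves any component that has reached size $\geq n/\M$ into $F$, so the invariant on incomplete components is restored for the next iteration. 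Thus the whole statement reduces to bounding, for a single iteration, the number of vertices that one selected component can absorb in a single merge step.

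To bound this, I would fix a selected component $d$ and let $U$ be $d$ together with all components whose closest selected ancestor is $d$; these are exactly the components that merge into $d$, and by Claim~\ref{cl:connected} they induce a connected subtree of $T_i$ rooted at $d$. The key structural observation is that every component of $U$ other than $d$ is unselected, and an unselected component has at most one child: a component with exactly two children is always selected on line~\ref{line:twochildren}, so an unselected component has zero or one children. Combined with Claim~\ref{claim:binary-decomposition} (so that $d$ itself has at most two children), this forces $U$ to be $d$ together with at most two descending \emph{paths}. I would also use that no completed component can lie strictly between any $c \in U$ and $d$: if some $w \in F_i$ sat on that path, its child toward $c$ would be selected by the ``parent completed'' rule (line~\ref{line:parentcompleted}), contradicting that $d$ is the \emph{closest} selected ancestor of $c$; hence every component of $U$ is incomplete and, by the inductive hypothesis, has size less than $n/\M$. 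Finally, Claim~\ref{claim:closestancestor} bounds the number of components on each such path by $2\log n$ with probability at least $1 - 1/n$, giving $|U| \leq 1 + 4\log n = \O{\log n}$.

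Putting these together, in any iteration a merged component is assembled from at most $\O{\log n}$ incomplete components, each of size below $n/\M$, so its size is at most $\O{\log n}\cdot n/\M = \Ot{n/\M}$. Consequently a component that is moved into $F$ is born at size $\Ot{n/\M}$ and never merges again, while every component that stays incomplete keeps size below $n/\M$ by the completion test; this closes the induction and shows that all components remain of size $\Ot{n/\M}$, and the high-probability guarantee is inherited directly from Claim~\ref{claim:closestancestor}. The main obstacle is precisely the structural argument of the middle paragraph: one must rule out that $U$ fans out into a wide subtree (in a binary tree of depth $2\log n$ it could otherwise contain up to $n^2$ nodes), and this is exactly where the ``always select two-children components'' rule and the ``parent-completed'' rule do the real work, collapsing $U$ into two thin paths whose lengths are controlled by Claim~\ref{claim:closestancestor}.
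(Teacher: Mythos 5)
Your proof is correct and takes essentially the same route as the paper's: both view the merged component (via Claim~\ref{cl:connected}) as a connected subset of previous-iteration components, use the two-children selection rule together with Claim~\ref{claim:binary-decomposition} to collapse that subset into a root plus at most two descending paths of unselected components, invoke Claim~\ref{claim:closestancestor} to bound each path by $2\log n$, and use the completion test to bound each constituent's size by $n/\M$ (which you merely make explicit as an induction invariant), yielding $\O{\log n}\cdot n/\M = \Ot{n/\M}$. No gaps.
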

	\begin{proof}
		By Claim \ref{cl:connected} any component $c_i \in C_{i}$ is obtained by merging a connected subset of the components in $C_{i-1}$. Let $X_i \subset C_{i-1}$  denote this subset, and let $T(X_i)$ denote the component tree given by contracting all components in $X_i$.  We first prove that $|X_i|=\O{\log{n}}$.
		
		By Claim \ref{claim:binary-decomposition}, $T(X_i)$ is a binary tree, and by line 16 of Algorithm \ref{alg:decompose} $X_i- T(X_i).root \subset C_{i-1} - F_{i-1}- S_{i-1}$. Note that, any component in $C_{i-1} - F_{i-1}- S_{i-1}$  has at most 1 other component as a child, and $T(X_i).root$ has at most 2 children. In addition, by Claim \ref{claim:closestancestor}, with high probability for any $c \in X_i- T(X_i).\text{root}$ the distance between $c$ and $T(X_i).\text{root}$ is $2\log{n}$. Therefore, with high probability $|X_i|\leq 4\log{n}+1$, which is \O{\log{n}}.
		
		In addition, note that $X_i \subset C_{i-1} - F_{i-1}$, and any component in $C_{i-1}-F_{i-1}$ has size at most $\O{n/\M}$, hence with high probability $|c_i|=\Sigma_{c\in X_i} |c| = \Ot{n/\M}.$   
	\end{proof}

\section{$(\poly\log)$-Expressible Problems}\label{sec:polylogexpressible}
The goal of this section is to prove and give examples of Theorem~\ref{thm:polylog-expressible} which we restate below.

\xhdr{Theorem~\ref{thm:polylog-expressible}.} For any given $\M$, there exists an algorithm to solve any $(\poly\log)$-expressible problem in \O{\log n} rounds of \model{} using $\M$ machines, such that with high probability each machine uses a space of size \Ot{n/\M} and runs an algorithm that is linear in its input size.

\begin{proof}
Assume we are given a $(\poly\log)$-expressible problem $P$. We first convert the given tree into a binary extension of it using Lemma~\ref{lemma:binary-tree} in \O{1} rounds and then decompose the binary extension using Theorem~\ref{thm:decomposebinary} in \O{\log n} rounds. Note that since any component has at most two children (Claim~\ref{claim:binary-decomposition}), there are at most 2 leaves in any component that is unknown. Therefore we can use the compressor algorithm that is guaranteed to exist since $P$ is assumed to be $(\poly\log)$-expressible and store a data of size at most $\poly\log(n)$ for each component. Then in the next step, since there are at most \Ot{m} components, we can send the partial data of all components into one machine. Then in only one round, we are able to merge all this partial data  since the merger algorithm (that is guaranteed to exist since $P$ is $(\poly\log)$-expressible) takes polylogarithmic time and space to calculate the answer of the dynamic programming for each of the components and it suffices to start from the leave components and compute the value of the dynamic programming for each of them one by one (all of this is done in one step because we have the data of all the components in one machine).
\end{proof}

We first formally prove why Maximum Weighted Matching is a $(\poly\log)$-expressible problem and therefore as a corollary, it can be solved in $\O{\log n}$ rounds.

\newcommand{\Mb}{\ensuremath{M^b}}

\paragraph{Maximum Weighted Matching.} An edge-weighted tree \T{} is given in the input. (Let $w_e$ denote the weight of an edge $e$.) The problem is to find a sub-graph $M$ of \T{} such that the degree of each vertex in $M$ is at most one and the weight of $M$ (which is $\sum_{e \in M} w_e$) is maximized.

At first, we explain the sequential DP that given any binary extension  \Tb{} (see Definition \ref{def:extension}) of an input tree \T{}, finds a maximum matching of \T{} and then prove this DP is $(\poly\log)$-expressible.

Define a subset \Mb{} of the edges of $\Tb{}$ to be an \textit{extended matching} if it has the following properties.
%\begin{enumerate*}
	1.~For any original vertex in \Tb{}, at most one of its edges is in \Mb{}.
	2.~For any auxiliary vertex $v$ of \Tb{}, at most one of the edges between $v$ and its children is in \Mb{}.
	3.~The edge between an auxiliary vertex $v$ and its parent is in \Mb{} if and only if an edge between $v$ and one of its children is in \Mb{}.
%\end{enumerate*}

Consider an edge $e$ between a vertex $u$ and its parent $v$ in \Tb{}. If $u$ is an auxiliary vertex, we define $w_e$ to be 0 and if $v$ is an original vertex, we define $w_e$ to be the weight of the edge between $v'$ and its parent where $v'$ is the vertex in \T{} that is equivalent to $v$. Furthermore, we define the weight of an extended matching \Mb{} of \Tb{} to be $\sum_{e \in \Mb{}} w_e$.

Let $\Mb{}$ be an extended matching of \Tb{} and let $M$ be a matching of \Tb{}; we say $\Mb{}$ and $M$ are equivalent if for any edge $(u, v) \in M$, every edge between the equivalent vertices of $v$ and $u$ in \Tb{} is in $\Mb{}$ and vice versa. Let $v', u' \in V(\Tb)$ denote the equivalent vertices of $v, u \in V(\T{})$ respectively. If $v$ is the parent of $u$ in \T{}, all the other vertices in the path between $v'$ and $u'$ are auxiliary by definition of binary extensions. This means all the edges in this path, except for the edge between $u'$ and its parent (which is equal to $w_{u, v}$), have weight 0. Therefore the weight of any extended matching of \Tb{} is equal to its equivalent matching of \T{}. Hence, to find the maximum weighted matching of \T{}, one can find the extended matching of \Tb{} with the maximum weight.

Now, we give a sequential DP that finds the extended matching of \Tb{} with the maximum weight. To do so, for any vertex $v$ of \Tb{}, define $C(v)$ to be the maximum weight of an extended matching of the subtree of $v$ where at least one edge of $v$ to its children is part of the extended matching (if no such extended matching exists for $v$ set $C(v)$ to be $-\infty$. Also define $C'(v)$ to be the maximum weight of an extended matching of the subtree of $v$ where no edge of $v$ to its children is part of the extended matching.

The key observation in updating $C(v)$ and $C'(v)$ for a vertex $v$ is that if $v$ is connected to one of its original children $u$, then $u$ should not be connected to its children; if $v$ is

To update $C(v)$ and $C'(v)$, if $v$ has no children, then $C(v) = -\infty$ and  $C'(v) = 0$, also if $v$ has only one child $u$, then $C(v) = w_{(v, u)}$ and $C'(v)=0$. W.l.o.g. we consider the following three cases for the case where $v$ has two children $u_1$ and $u_2$.
\begin{itemize}
	\item If $u_1$ and $u_2$ are both original vertices, then
$$C(v) = \max{\{w_{v, u_1} + C'(u_1) + C(u_2), w_{v, u_2} + C'(u_2) + C(u_1)\}},$$ $$C'(v) = \max{\{C(u_1), C'(u_1)\}} + \max{\{C(u_2), C'(u_2)\}}.$$
	\item If $u_1$ and $u_2$ are both auxiliary vertices, then
	$$C(v) = \max{\{w_{v, u_1} + C(u_1) + C'(u_2), w_{v, u_2} + C(u_2) + C'(u_1)\}},$$ $$C'(v) = \max{\{C'(u_1), C'(u_2)\}}.$$
	\item If $u_1$ is auxiliary and $u_2$ is an original vertex, then 
	$$C(v) = \max{\Big\{w_{v, u_1} + C(u_1) + \max{\{C'(u_2), C(u_2)\}}, w_{v, u_2}, + C'(u_2) + C'(u_1)\}\Big\}},$$ $$C'(v) = C'(u_1) + \max{\{C(u_2), C'(u_2)\}}.$$
\end{itemize}

Now we prove that the $(\poly\log)$-expressiveness property holds for the proposed DP. Consider a subtree $T$ of \Tb{} and let $u_1$ and $u_2$ \sbcomment{expressiveness is defined for constant leaves not only two} denote the leaves of $T$ with the unknown DP values. We apply the proposed DP rules and calculate the DP values as functions of $C(u_1)$, $C'(u_2)$, $C(u_2)$, and $C'(u_2)$ instead of plain numbers. We claim that for any vertex $v$ of $T$ there exist functions $f_v, f'_v: \{0, 1\}^4 \to \mathbb{Z}$ such that $$C(v) = \max_{a_i \in \{0, 1\}}{\{a_0C(u_1) + a_1C'(u_1) + a_2C(u_2) + a_3C'(u_2) + f_v(a_0, a_1, a_2, a_3)\}},$$
and 
$$C'(v) = \max_{a_i \in \{0, 1\}}{\{a_0C(u_1) + a_1C'(u_1) + a_2C(u_2) + a_3C'(u_2) + f'_v(a_0, a_1, a_2, a_3)\}}.$$
Therefore it suffices to store the values of functions $f'_r$ and $f_r$ for the root vertex $r$ of $T$ to be able to evaluate $C(v)$ and $C'(v)$. Since $f_r$ and $f'_r$ take only 16 different input combinations, there are only \O{1} different output values to be stored.

\begin{corollary}
	For any given $\M$, there exists an algorithm to solve the Maximum Weight Matching problem on trees in \O{\log n} rounds of \model{} using $\M$ machines, such that with high probability each machine uses a space of size \Ot{n/\M} and runs an algorithm that is linear in its input size.
\end{corollary}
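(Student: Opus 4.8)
The plan is to show that Maximum Weight Matching on trees is $(\poly\log)$-expressible in the sense of Definition~\ref{def:Expressiveness} (indeed with $f = \O{1}$), and then to invoke Theorem~\ref{thm:polylog-expressible} as a black box to obtain the claimed $\O{\log n}$-round algorithm using $\M$ machines and $\Ot{n/\M}$ space per machine. All of the combinatorial content sits in verifying the three conditions of $(\poly\log)$-expressiveness for the DP defined above; the round and space guarantees then follow for free from the theorem.

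For condition~1, the dynamic data of a vertex $v$ is the pair $(C(v), C'(v))$, two integers, hence of size $\O{1} \subseteq \Ot{f(n_v)}$. For conditions~2 and~3 I would supply \compressor{} and \merger{} built from the functional representation established above. Concretely, given a connected subtree $T$ of \Tb{} with unknown leaves $u_1, \dots, u_c$ (where $c = \O{1}$), \compressor{} applies the DP recurrences symbolically, treating each $C(u_j), C'(u_j)$ as an indeterminate, and by induction on $T$ produces functions $f_r, f'_r : \{0,1\}^{2c} \to \mathbb{Z}$ with
$$C(r) = \max_{a \in \{0,1\}^{2c}} \Big( \sum_{j=1}^{c} \big(a_{2j-1} C(u_j) + a_{2j} C'(u_j)\big) + f_r(a) \Big),$$
and analogously for $C'(r)$; this is exactly the two-leaf identity displayed above, generalized to a constant number of unknown leaves. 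The partial data returned by \compressor{} is the table of the $2^{2c} = \O{1}$ output values of $f_r$ and $f'_r$, which has size $\O{1}$ and is computed in $\poly(f(n_T))$ time. \merger{} then either substitutes the now-known values $C(u_j), C'(u_j)$ and evaluates the maximum to return $\sol{P}(r)$ (condition~2), or, given the tables of two subtrees $T_1, T_2$ joined by a single edge, composes them through the same recurrences at the junction vertex to output the table for $T = T_1 + T_2$ (condition~3). Both operations touch only $\O{1}$-size tables, so they run in $\poly(f(n_T))$ time.

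The one place that needs care is that the derivation above was written for exactly two unknown leaves, whereas Definition~\ref{def:Expressiveness} asks the construction to tolerate any constant number of unknown leaves; the main step is therefore checking that the symbolic DP still closes when several indeterminate leaves are present. This holds because each recurrence for a vertex with two children is a $\max$ of sums of the children's values, so substituting the inductively obtained representations for the children again yields a $\max$ over $\{0,1\}^{2c}$ of a linear form in the indeterminates plus an integer offset---the class of such representations is closed under the operations $\max\{\cdot,\cdot\}$ and addition of a constant, which are precisely the operations appearing in the recurrences. In our application $c \le 2$ always holds, since by Claim~\ref{claim:binary-decomposition} every component of the decomposition is a subtree with at most two unknown child leaves, so the displayed four-variable version already suffices. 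With all three conditions verified for $f = \O{1} \subseteq \poly\log$, Theorem~\ref{thm:polylog-expressible} applies verbatim and yields the corollary.
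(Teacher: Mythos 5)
Your proposal is correct and follows essentially the same route as the paper: both establish that the $C(v), C'(v)$ dynamic program is $(\poly\log)$-expressible (in fact $\O{1}$-expressible) by representing the partial data of a subtree symbolically as $\max$ over $\{0,1\}$-coefficient linear forms in the unknown leaves' values plus integer offsets, storing only the constant-size tables $f_r, f'_r$, and then invoking Theorem~\ref{thm:polylog-expressible} as a black box. Your explicit treatment of the general constant number $c$ of unknown leaves (and the observation that $c \le 2$ suffices via Claim~\ref{claim:binary-decomposition}) is if anything slightly more careful than the paper's two-leaf presentation.
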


\paragraph{Other problems.} The dynamic programming solutions of Maximum Independent Set, Minimum Vertex Cover, Longest Path and many other problems are similar to the proposed DP for Maximum Weighted Matching. For example, for the Maximum Independent Set problem, we only need to keep two dynamic data, whether the root vertex of a subtree is part of the independent set in its subtree or not and for each of these cases what is the size of the maximum independent set in the subtree. Roughly speaking, the only major difference in the DP is that we have to change $\max$ to sum.
\sbcomment{More intuitions for other problems.}

\begin{corollary}
	For any given $\M$, there exists algorithms to solve the Maximum Independent Set, Minimum Vertex Cover, Longest Path and Dominating Set problems on trees in \O{\log n} rounds of \model{} using $\M$ machines, such that with high probability each machine uses a space of size \Ot{n/\M} and runs an algorithm that is linear in its input size.
\end{corollary}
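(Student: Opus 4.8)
The plan is to reduce the corollary to Theorem~\ref{thm:polylog-expressible} by verifying that each of Maximum Independent Set, Minimum Vertex Cover, Longest Path, and Dominating Set is $(\poly\log)$-expressible when defined on trees; the corollary is then immediate. For each problem I would first exhibit a binary-adapted dynamic program that stores only a constant number of values per vertex. For Maximum Independent Set and Minimum Vertex Cover, two values per vertex $v$ suffice: the subtree optimum in the case that $v$ is included in the solution set, and the case that it is excluded. For Dominating Set I would keep three states per vertex (in the dominating set; out of the set but already dominated by a child; out of the set and not yet dominated), and for Longest Path two values (the longest downward path terminating at $v$, and the best path contained in the subtree of $v$). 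In each case the update rule for a vertex with two children is a constant-size case analysis over these states, exactly parallel to the $C(v)/C'(v)$ recursion given for Maximum Weighted Matching.

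The first thing to check is binary adaptability (the prerequisite for Definition~\ref{def:Expressiveness}). As with the ``extended matching'' construction, I would assign auxiliary vertices a neutral semantics: zero contribution for Longest Path, and for the covering and domination problems a convention under which an auxiliary vertex is never itself charged but merely forwards its children's states. One then argues, as for matching, that because every original edge of $\T{}$ maps to a path in $\Tb{}$ whose only nonneutral edge is the child edge, the optimum computed by the DP on any binary extension $\Tb{}$ equals the optimum on $\T{}$. This yields a binary-adapted $\sol{P}$ for each of the four problems.

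Next I would establish $f$-expressiveness with $f = \poly\log$ (in fact $f = \O{1}$). Given a connected subtree $T$ of $\Tb{}$ with at most a constant number of unknown leaves --- and by Claim~\ref{claim:binary-decomposition} at most two, since every component has at most two children --- I would run the DP symbolically, treating the constantly many unknown leaf states as formal variables. The key structural claim, mirroring the functions $f_v, f'_v$ used for matching, is that every DP value at the root $v$ of $T$ is a maximum of linear forms in the unknown leaf values (with the outer $\max$ replaced by a sum for the counting versions such as Dominating Set, exactly as remarked for Maximum Independent Set). Because there are only $\O{1}$ joint state combinations of the constantly many unknown leaves, each such value is determined by $\O{1}$ precomputed coefficients; the compressor $\compressor{}$ outputs this constant-size table in $\poly(f(n_T))$ time, and the merger $\merger{}$ recovers $\sol{P}(v)$ by substituting the true leaf values. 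Condition~3 of Definition~\ref{def:Expressiveness} holds because combining the tables of $\compressor{}(T_1)$ and $\compressor{}(T_2)$ across their connecting edge is just one more application of the local DP step, producing $\compressor{}(T)$ in constant time.

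The main obstacle is precisely this structural claim: that the root's DP value remains a max-of-linear (resp.\ sum-of-linear) form in the unknown leaf variables throughout the symbolic evaluation, so that the compressed representation stays $\O{1}$ in size. This is where each problem's update rule must be shown to preserve that shape; for Dominating Set and Longest Path it requires checking that the extra states do not blow up the number of stored coefficients beyond a constant. Once this is verified for each of the four problems, $(\poly\log)$-expressibility follows and Theorem~\ref{thm:polylog-expressible} yields the claimed $\O{\log n}$-round algorithm.
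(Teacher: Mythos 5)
Your proposal is correct and follows essentially the same route as the paper: the paper also reduces the corollary to Theorem~\ref{thm:polylog-expressible} by observing that each of these problems admits a constant-state-per-vertex DP analogous to the $C(v)/C'(v)$ recursion for Maximum Weighted Matching (e.g., two states for Maximum Independent Set), so that the symbolic max-of-linear-forms compression argument yields constant-size partial data per component. Your write-up is in fact more detailed than the paper's, which only sketches the analogy; the structural claim you flag as the main obstacle is exactly what the paper verifies for matching and asserts by similarity for the other problems.
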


\section{Solving Linear-Expressible Problems}\label{sec:linearexpressible}

\subsection{The Splittability Property}\label{sec:splittability}
The main goal of this section is to define a property called ``splittability'' for linear-expressible problems, that indeed holds for the aforementioned linear-expressible problems and prove, by using a total memory of \Ot{n^{4/3}}, we are able to parallelize linear-expressible problems that are splittable in \O{\log n} rounds. The main reason behind the need to define this extra property for linear-expressible problems, is that in contrast to $(\poly\log)$-expressible problems, the partial data of the subtrees in linear-expressible problems might have a relatively large size, hence when merging two subtrees, we are not able to access their partial data in one machine. As a result, we need to be able to distribute the merging step among different machines. It is worth mentioning that we prove the total memory of \Ot{n^{4/3}} is in some sense tight, unless we limit the number of machines to be too small, which defeats the purpose of MapReduce, which is to have maximum possible parallelization.

\newcommand{\subunifier}{\ensuremath{\mathcal{S}}}
\newcommand{\unifier}{\ensuremath{\mathcal{U}}}
\renewcommand{\vector}[1]{\ensuremath{\vec{\compressor(#1)}}}

Intuitively, a problem instance $P$ is splittable if it is possible to represent it with a set of $n$ {\em objects} such that the output of the problem depends only on the pair-wise relations of these objects. In the DP context, the splittability property is useful for linear-expressible problems. The splittability property ensures that these partial data can be effectively merged using the splitting technique.

\begin{definition}[Splittability] \label{def:Splittability}
	Consider a linear-expressible problem $P$ and a binary adapted dynamic program \sol{P} for it. We know since $P$ is a linear-expressible problem, for any subtree $T$ of \Tb{} with $n_T$ vertices, a compressor function returns a partial data of size at most \Ot{n_T} if at most a constant number of the leaves of $T$ are unknown. Denote this partial data by $\compressor(T)$. 
	We say problem $P$ is splittable, if for any given connected subtree $T'$ of \Tb{}, the partial data $\compressor(T')$ can be represented as a vector $\vector{T'}$ of at most $\Ot{n_{T'}}$ elements such that for any connected subtrees $T_1$ and $T_2$ of \Tb{} that are connected with an edge (with $T$ being $T_1 + T_2$) the following condition holds. There exist two algorithms \subunifier{} (sub unifier) and \unifier{} (unifier) such that for any consecutive partitioning $(P_1, P_2, \ldots, P_{k_1})$ of \vector{T_1} and any consecutive partitioning $(Q_1, Q_2, \ldots, Q_{k_2})$ of \vector{T_2},
	\begin{enumerate}
		\item Algorithm $\subunifier(P_i, Q_j)$ returns a vector of size at most $|P_i| + |Q_j|$ in time $\poly(|P_i|+|Q_j|)$ such that each element of this vector ``affects'' at most one element of \vector{T}.
		\item Given the elements in $\subunifier(P_i, Q_j)$ that ``affect'' an element $a$ in \vector{T} for every  $1 \leq i \leq k_1$ and $1 \leq j \leq k_2$, $\unifier$ computes the value of element $a$ using linear space and polynomial time in input size.
	\end{enumerate}
\end{definition}

The linear-expressible problems such as minimum bisection, $k$-spanning tree, etc., that we have mentioned before are all indeed splittable. For the proof and concrete examples of unifier and sub unifier algorithms see Section~\ref{app:linearexpressible}.

%
%Without loss of generality we assume this data is a vector of size $\Ot{n_T}$ and denote it by $\compressor(T)$.
%	
%	We say problem $P$ is splittable, if there exist two algorithms $\splitter$ (splitter) and \unifier{} (unifier) such that for any given subtree $T'$ and an integer $s$, $\splitter(\compressor(T'), s)$ splits vector $\compressor(T)$ into sub-vectors of size $s$ with the following properties. Let $T_1$ and $T_2$ be two subtrees that have a common vertex and let $T$ denote the subtree that is obtained by merging $T_1$ and $T_2$. Let $(a_1, a_2, \ldots, a_k) = \splitter(\compressor(T_1), s)$ and let $(b_1, b_2, \ldots, b_{k'}) = \splitter(\compressor(T_2), s)$.
%	\begin{enumerate}
%		\item Function $\unifier(p_1, p_2)$ returns a vector of size $2s$ where each element of it affects only one element in $\compressor(T)$.
%		\item Any element in $\compressor(T)$ could be computed based on the output of the unifier that affect it.
%	\end{enumerate}

\subsection{Solving Splittable Linear-Expressible Problems}

We show with a slightly more total space (the space on each machine is still sublinear) it is possible to design parallel algorithms in \O{\log n} rounds for linear-expressible problems if they are splittable.

\xhdr{Theorem~\ref{thm:linear-expressible}.} For any given $\M$, there exists an algorithm to solve any linear-expressible problem defined on trees that is also splittable, in \O{\log n} rounds of \model{} using $\M$ machines, such that with high probability each machine uses a space of size \Ot{n^{4/3}/\M}, and runs an algorithm that is polynomial in  its input size.
\vspace{0.2cm}

As a corollary of Theorem~\ref{thm:linear-expressible}, we give logarithmic round algorithms for minimum bisection, and $k$-spanning tree problems using an overall space of $\Ot{n^{1+\eps}}$ for some constant $\eps < 1$.

Our algorithm, similar to the algorithm given for $(\poly\log)$-expressible problems starts with decomposing a binary extension of the input tree. However, note that for linear-expressible problems, we cannot store the data of all components in one machine since the space limit will be violated. (The size of the total data would be roughly the same as the size of input.) Therefore we merge the components in multiple rounds. Unfortunately, the components tree could be unbalanced. To overcome this difficulty, we first partition the components tree. This partitioning of the components tree, specifies which components at which rounds, should be merged together. The merging process cannot be done in one machine either. Mainly because the partial data of even one component could be very large. Using the splittability property we distribute the data of the components among the machines and merge them in parallel.

We first prove the partitioning lemma and then prove Theorem~\ref{thm:linear-expressible}.
\begin{definition}[border vertices]
	For any subtree $T$ of a given tree \T{}, the border vertices are the set of vertices in $V(T)$ that are connected with an edge of tree \T{} to a vertex in $V(\T{})-V(T)$.
\end{definition}

\begin{lemma}\label{lem:partitioning}
	There exists a linear time algorithm that given a binary tree $\T{}$ with $n_\T{}$ vertices, and a set of border vertices $S_T\in V(\T{})$ with size at most 3, finds a partitioning $P_\T{} = \langle H_1, H_2, \ldots, H_c \rangle$ of $\T{}$, such that $|V(H_i)| \leq \frac{2n_\T{}}{3}$ for any $i \leq c$, and the total number of border vertices of any $H_i$ based on $(P_\T{}, S_\T{})$ is at most 3.
\end{lemma}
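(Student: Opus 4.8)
The plan is to isolate one robust sub-routine and then handle the boundary bookkeeping by a short case analysis on $|S_\T|$. The sub-routine is the existence of a \emph{balancing edge}: in any binary tree with $n \geq 2$ vertices there is an edge whose removal yields two components each of size at most $\tfrac{2n}{3}$. I would prove this by computing all subtree sizes with one DFS and then descending from the root, always moving to the heavier child, until reaching the first vertex $v$ with subtree size $s(v) \le \tfrac{2n}{3}$. Its parent $p$ satisfies $s(p) > \tfrac{2n}{3}$, and since $p$ has at most two children and $v$ is the heavier one, $s(v) \ge \tfrac{s(p)-1}{2} > \tfrac{n}{3}-\tfrac12$, so after accounting for integrality both parts of $\T - (p,v)$ lie in $[\tfrac{n}{3},\tfrac{2n}{3}]$. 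The key accounting fact is that cutting a single edge creates exactly one new border vertex on each side.

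With this tool, the easy regime is $|S_\T| \le 2$: cut one balancing edge. Each resulting part inherits at most $|S_\T| \le 2$ of the original border vertices and gains exactly one new border vertex, so it has at most $3$ border vertices, and both parts have size at most $\tfrac{2n}{3}$; here $c = 2$.

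The main obstacle is $|S_\T| = 3$, because a balancing edge may leave all three original border vertices on the same side (this happens precisely when the balancing edge lies in a large pendant subtree hanging off the Steiner tree of $S_\T$), which would create a part with $3 + 1 = 4$ border vertices. To avoid this I would \emph{not} cut by a balancing edge first, but instead remove the \emph{median} $m$ of the three border vertices, namely the unique vertex lying on all three pairwise paths, computable in linear time from distances or LCAs. Removing $m$ breaks $\T$ into at most three components $A_1, A_2, A_3$, and by the defining property of the median the vertices of $S_\T \setminus \{m\}$ lie in distinct components; hence each $A_i$ contains at most one vertex of $S_\T$. I would output the singleton $\{m\}$ as one part, whose only border vertex is $m$ itself, together with the $A_i$'s. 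Each $A_i$ then has at most one inherited border vertex plus the single new border vertex adjacent to $m$, for at most $2$ border vertices, and its size is at most $n-1$.

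It remains to enforce the size bound on the $A_i$'s. Since $\sum_i |A_i| = n - 1$, at most one component, say $A_1$, can exceed $\tfrac{2n}{3}$; for that one I apply the balancing-edge sub-routine once, splitting it into two pieces each of size at most $\tfrac{2}{3}|A_1| \le \tfrac{2n}{3}$. Because $A_1$ carried at most $2$ border vertices, each piece receives at most $2 + 1 = 3$ border vertices, and all other components already meet both bounds. This yields a partition into at most $5$ parts, each of size at most $\tfrac{2n}{3}$ and with at most $3$ border vertices, and every step (one DFS for subtree sizes, the median computation, and a constant number of balancing-edge computations) runs in linear time. The degenerate situation $m \in S_\T$ is subsumed, since then $S_\T \setminus \{m\}$ still lands in distinct components and the same counting applies.
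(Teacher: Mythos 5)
Your proof is correct and satisfies the lemma as stated, but it handles the hard case by a genuinely different mechanism than the paper. Your balancing-edge subroutine is exactly the paper's first step (Algorithm~\ref{alg:First_cut}): compute subtree sizes, walk down toward the heavier child until the subtree size first drops to at most $2n_\T/3$, and cut that edge. The divergence is in how the three inherited border vertices are prevented from piling up on one side. The paper always performs the size-balancing cut first and, if one resulting side then carries $3+1=4$ border vertices, performs a \emph{second edge cut inside that side}, found by the same kind of descent but guided by the counts $B[\cdot]$ of border vertices in subtrees (Algorithm~\ref{alg:Second_cut}); a short case analysis (left implicit in the paper) shows both resulting sides end with at most $3$ border vertices, so the paper outputs at most $3$ connected parts, produced by edge cuts only. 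You instead pre-separate the three border vertices by deleting their median $m$, so that each component of $\T - m$ holds at most one inherited border vertex and exactly one vertex adjacent to $m$, and then you repair the at most one oversized component with a balancing edge; this yields up to $5$ parts including the singleton $\{m\}$, and uses a vertex cut rather than only edge cuts. Since the lemma does not bound the number of parts $c$, this is a valid proof. What the paper's variant buys is a seamless fit with the downstream use in Theorem~\ref{thm:linear-expressible}: there the partition is produced by deleting one edge per subtree per step and is re-merged pairwise along single edges (this is how splittability, Definition~\ref{def:Splittability}, is phrased, and Algorithm~\ref{alg:bisection} recurses on at most three sub-partitions). Your partition still admits such a pairwise merge schedule --- e.g., re-merge the two pieces of the oversized component first, then attach the components around $m$ one at a time, which keeps every intermediate part at $\le 3$ border vertices --- but that ordering would have to be stated. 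What your variant buys is that the border-vertex bookkeeping becomes automatic: the median property replaces the case analysis needed to justify the second cut.

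One caveat, which your write-up shares with the paper's own proof: the claim that integrality places both parts in $[n_\T/3,\, 2n_\T/3]$ is false when $n_\T \equiv 1 \pmod 3$; the paper's chain $D[u] \ge \frac{D[v]-1}{2} \ge \frac{n_\T}{3}$ has the identical flaw, since $D[v] > 2n_\T/3$ only forces $D[v] \ge (2n_\T+1)/3$. Concretely, for $n_\T = 7$ take a path $r, a, p$ where $p$ has two children, each with a single leaf child: the descent stops at a child of $p$ of subtree size $2 < 7/3$, and the complementary part has $5 > 14/3$ vertices. The guarantee that actually holds is $|V(H_i)| \le \lceil 2n_\T/3 \rceil$, which is all the logarithmic-depth recursion downstream needs, so this is a cosmetic fix for both proofs.
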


We defer the proof of Lemma~\ref{lem:partitioning} to the appendix. We are now ready to prove Theorem~\ref{thm:linear-expressible}.

\begin{proof}[Proof of Theorem~\ref{thm:linear-expressible}]
Let \T{} denote the input tree. Since any linear-expressible problem is binary adaptable, there exists a dynamic program \sol{P} where for the instance $P_\T$ of $P$ with \T{} as its input tree and for any binary extension \Tb{} of \T{}, the output of \sol{P}(\Tb) is a solution of $P_\T$. We convert \T{} to its binary extension \Tb{} in $\O{1}$  round of \model{} such that with high probability each machine uses space \Ot{n/\M} (see Lemma \ref{lemma:binary-tree}). Since the output of \sol{P} is a solution of $P_\T$, it suffices to solve \sol{P}.

The first stage in this algorithm is decomposing tree \Tb{}, which  by Theorem \ref{thm:decomposebinary} is possible in $\O{\log{n}}$ round of \model{} such that with high probability each machine uses $\Ot{n/\M}$ time and space. Let $C(\Tb{})$ denote the set of components in decomposition of $\T^b{}$ by Definition \ref{def:decomposition}, and let $\T{}^c$ denote the rooted component tree that is given by contracting all components in $C(\Tb{})$. Roughly speaking, we partition $\T{}^c$ to at most three subtrees $T{}_1, T{}_2$, and $T{}_3$, such that $\max_{i=1}^{3}{|\T{}_i|} \leq \frac{2}{3}|\T{}^c|$, and $\max_{i=1}^{3}{|B_i|}  \leq 3$, where $B_i$ denotes the set of border vertices in $\T{}_i$. We find the partial dynamic data for these partitions recursively and we merge them to find the dynamic data for $\T{}^c$. Algorithm \ref{alg:bisection} is the overview of this algorithm.
	
\begin{algorithm}
\caption{} \label{alg:bisection}
\begin{algorithmic}[1]
\Procedure{SolveLinearExpressible}{$T$}
	\State Decompose \Tb{} and let $C(\Tb{})$ denote the set of components in decomposition of $\T^b{}$.
	\State Let $\T{}^c$ be the component tree that is given by contracting all components in $C(\Tb{})$ .
	\State \Call{GetDynamicData}{$\T^c{}$, $\emptyset$}
	\EndProcedure
	\Procedure{GetDynamicData}{$\T{}$, $B$} \Comment{$B$ is the set of border vertices in $V(\T{})$}
	\If{$|\T{}|= 1$}	
		\State Based on $B$, compute and return the dynamic data for $T$.
		\Else
			\State Partition $\T{}$ to partitions $T{}_1, T{}_2$, and $T{}_3$, such that (1) $\max_{i=1}^{3}{|\T{}_i|} \leq \frac{2}{3}|\T{}|$, (2) $\max_{i=1}^{3}{|B_i|}  \leq 3$ where $B_i$ denotes the set of border vertices in $\T{}_i$.
			\For{every $i$ in $\{1,2,3\}$}
			\State $D_i \leftarrow$ \Call{GetDynamicData}{$\T_i{}$, $B_i$}
			\EndFor
			\State Compute the dynamic data for \T{}, by merging $D_1, D_2,$ and $D_3$, and return it.
	\EndIf
	\EndProcedure
\end{algorithmic}
\end{algorithm}
 
\paragraph{Partitioning stage.} The second stage of our algorithm  partitions $\T^c{}$ using the algorithm given in Lemma~\ref{lem:partitioning}. Let $n_c$ denote the number of vertices of tree $\T^c{}$. Starting from $\T^c{}$, in each step, we partition the remaining subtrees by removing one edge from each and repeat this until we have $n_c$ disjoint vertices. Since each sub-tree is split  into sub-trees of a constant fraction size of it, this stage halts after $\O{\log{n_c}}$ steps, so the overall running time of this stage is $\Ot{n_c}\leq \Ot{\M}$. At the end of this stage, for each edge $e$ of $\T^c{}$, we store in which step it was deleted and use this data  to design an algorithm to merge the dynamic  data of the sub-trees and finally find an optimal solution for $\T^d{}$. More precisely, after this stage, the edges of $\T^c{}$ are partitioned into subsets $E_1, E_2, \ldots, E_{2S}$ where $S=\O{\log{n}}$ is the number of steps this stage takes, and for any $i\in[S]$, $E_{2i-1}$ and $E_{2i}$ are respectively the set of edges deleted by Algorithm~\ref{alg:First_cut} and Algorithm~\ref{alg:Second_cut} in the $(S-i+1)$-th step of this stage. We store $E_1, E_2, \ldots, E_{2S}$ in any machine in $\Mset$ without violating the memory limits since the total number of edges of $\T^c{}$ is at most \Ot{\M}

The third stage of the algorithm is the merging stage that runs exactly in the reverse order of the previous stage. Starting from single vertices of $\T^c{}$, which are equivalent to components in $C(\Tb{})$, in the first round, the algorithm continues to merge the  dynamic data of two subtrees if there exists an edge $e \in E_1$ that connects them. Note that the size of any component in $C(\Tb{})$ is at most $\Ot{n/\M}$, so it is possible to find its partial dynamic data in one machine.

Let vector $D(T)$ denote the dynamic data (It is the partial dynamic data if there is any unknown data in $T$.) for any subtree $T$. Since the maximum number of border vertices for any partition is a constant, by linear-expressible (Definition \ref{def:Expressiveness}) we can store the dynamic data for any partition in linear space to the size of the partition. Moreover by the same definition, if $T=T_1+T_2$ it is possible to construct the dynamic data for $T$ by merging the dynamic data for $T_1$ and $T_2$. Note that to update the dynamic data for any sub-tree $T$, we have to update $O(|V(T)|)$ different values, and to update each of them, it is possible that we need more than\O{n/\M} space. This unfortunately cannot be done in one machine since it violates the memory limit of $O(n^{1+\eps})$ on each machine. To handle this issue we use the following method to merge the dynamic data of two components. 

\paragraph{Merging stage.} This stage of algorithm takes $2R$ steps. Intuitively, for any $1\leq r\leq 2R$ and for any edge in $E_r$ that is between two partitions $T^r_i$ and $T^r_j$, in the first round of step $r$ we assign any machine a portion of dynamic data of $T^r_i$ and $T^r_j$. This machine is responsible to merge these two portions using the function \subunifier{} (sub unifier)in Definition \ref{def:Splittability}. Let $T^{r+1}$ be the subtree which is the result of merging subtrees $T^r_i$ and $T^r_j$ in step $r$, and let  $D(T^r_i)[b]$ denote the $b$-th element of vector $D(T^r_i)$. More precisely, for any two positive integers $k_1, k_2 \leq \lceil\sqrt{\M}\rceil$ dynamic data $D(T^r_i)[(k_1-1) \alpha_i :k_1 \alpha_i]$ and $D(T^r_j)[(k_2-1) \alpha_j :k_2 \alpha_j]$ is assigned to machine $\mu$ with index $k_1\sqrt{\M}+k_2$, where for any $i$, $\alpha_i=|T^r_i|/\sqrt{\M}$. (For any subtree $T$ and any two positive integers $a, b$ such that $a< b$ $D(T)[a:b]$ denotes the sub-vector $(D(T)[a],\dots, D(T)[b])$.) This machine is responsible to merge the given portion of dynamic data and generate $D^\mu(T)$, such that $|D^\mu(T)| = \O{D(T^r_i)+D(T^r_j)}$, and any element in $D^\mu(T)$ affects at most one element in $D(T)$. This is possible by function \subunifier{}. Therefore to compute $D(T)[a]$  for any valid $a$ in one round of \model{} it suffices if for every $\mu\in\Mset$ we have the element in $D^\mu[T]$ that affects $D(T)[a]$ (if there exists any)  in the same machine, and then use the function \unifier{} (unifier). To achieve this for any $T^{r+1}$ in the set of partitions of $\Tb{}$ in round $r$ and for any $a\leq |V(T^{r+1})|$ we assign each $D(T^{r+1})[a]$ a unique index in $[|V(\Tb)|]$ generated based on $T^{r+1}$ and $a$. Using this index and a hash function $h$ chosen uniformly at random from a family of $\log{n}-$universal hash functions, we distribute pairs of $(T^{r+1},a)$ over all $\M$ machines, and any machine $\mu$ sends at most one element of $D^\mu_c(T^{r+1})$  that affects $D(T^{r+1})[a]$ to the machine $\mu'$ iff pair $(T^{r+1}, a)$ is assigned to machine $\mu'$ using hash function $h$. Therefore, all the data needed to compute $D(T^{r+1})[a]$, is gathered in one machine, and in the second round of $r$-th step, we run function \unifier{} on all the machines to compute the dynamic data of $T^{r+1}$.

	  	\paragraph{Time and space analysis.} By Lemma \ref{lemma:binary-tree} and Theorem \ref{thm:decomposebinary} it is possible to convert any given tree to a binary extension of it, and decompose the binary extension in $\Ot{\log{n}}$ rounds of \model{} using $\M$ machines, such that with high probability, each machine uses a memory of size at most $\Ot{n^{1+\eps}/\M}$ and runs an algorithm that is linear in its memory size. Since the size of the component tree $\T{}^c$ is $\O{\M}$, and the partitioning stage takes $S=\O{\log{n}}$ steps, by \ref{lem:partitioning} it is possible to do the partitioning stage of the algorithm using  $\Ot{n^{1+\eps}/\M}$ time and space in $\O{1}$ round. (Note that $\Ot{\M} \leq \Ot{n^{1+\eps}/\M}$.)  In addition, the merging stage of the algorithm takes $S=\O{\log{n}}$ round, and the space needed on each machine is $\Ot{n^{1+\eps}/\M}$. Since in round $r$ each vertex in \Tb{} is at most in one partition, the total size of the partitions in one round is $\O{n}$. Note that, the dynamic data of each partition is linear to its size. Therefore when in the first round of any step $r\in [2R]$ of merging we assign each machine $\O{1/\sqrt{\M}}$ portion of the dynamic data of each partition, the overall data assigned to each machine is $\O{n/\sqrt{\M}}$. We set $\eps=\frac{1}{3}$. Since $\M < n^{(1+\eps)/2}$ it is easy to see that $\O{n/\sqrt{\M}} \leq \O{n^{1+\epsilon}/\M}$, so computing function \subunifier in each round takes $\O{n^{1+\epsilon}/\M}$ space and $\poly(n^{1+\epsilon}/\M$) time in any machine. Moreover, with high probability the data assigned to each machine in the second round of any step $r$ is $\Ot{n^{1+\epsilon}/\M}$ since we distribute any pair of partition $T$ in step $r$ and $a\leq |T|$ using a hash function $h$ chosen uniformly at random from a family of $\log{n}-$universal hash functions. The maximum data that any machine receives for any pair $(T, a)$ that is assigned to it by hash function $h$ is at most \O{\M} and the overall data assigned to all the machines is $\O{n\sqrt{\M}}$. Note that, $\O{n\sqrt{\M}} \leq \O{n^{1+\eps}}$. Therefore by Lemma \ref{lem:maxload} with high probability the maximum load over all the machines is $\Ot{n^{1+\epsilon}/\M}$, so each machine takes $\Ot{n^{1+\epsilon}/\M}$ space and poly($n^{1+\epsilon}/\M$) time to compute unifier for all the pairs $(T,a)$ that are assigned to it in any round.
\end{proof}

\section{A Concrete Example: Minimum Bisection on Trees}\label{app:linearexpressible}
	We continue the discussion about linear-expressible problems by giving a detailed proof that the minimum bisection problem can be solved in logarithmic rounds.

\begin{theorem}
For any given $m$, there exists an algorithm to solve the minimum bisection problem in \O{\log{n}} rounds of \model{} using $m$ machines, such that with high probability each machine uses a space of size \Ot{n^{4/3}/m}.
\end{theorem}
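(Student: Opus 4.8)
The plan is to derive this theorem as a direct application of Theorem~\ref{thm:linear-expressible}: it suffices to exhibit a binary adapted dynamic program for minimum bisection, show it is linear-expressible (Definition~\ref{def:Expressiveness}), and show it is splittable (Definition~\ref{def:Splittability}); then instantiating Theorem~\ref{thm:linear-expressible} with $\eps = 1/3$ immediately yields an \O{\log n}-round algorithm using \Ot{n^{4/3}/m} space per machine.

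First I would describe the DP on a binary extension \Tb{}. For every vertex $v$, every $i \in \{0, \ldots, n_v\}$, and every color $c \in \{\text{red}, \text{blue}\}$, store $D_v[i][c]$, the minimum total weight of cut edges inside the subtree rooted at $v$ subject to exactly $i$ of the original (non-auxiliary) descendants being blue and $v$ itself colored $c$. Auxiliary vertices carry zero-weight edges and are excluded from the balance count, so a coloring of \Tb{} and its induced coloring of \T{} have equal cut weight; reading off $\min_c D_{\text{root}}[n/2][c]$ gives the optimum, establishing binary adaptability. Since $D_v$ has $2(n_v+1) = \O{n_v}$ entries, condition~1 of Definition~\ref{def:Expressiveness} holds with $f$ linear. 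For the compressor \compressor{}, given a connected subtree $T$ with a constant number of unknown leaves, I would run the DP symbolically, treating each unknown leaf's table as a free vector; because there are \O{1} unknown leaves, the resulting partial data is a table of size \Ot{n_T}, and the merger \merger{} substitutes the true leaf values and performs the final min-plus combination, giving conditions~2 and~3.

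The heart of the proof is splittability. The merge of two subtrees $T_1, T_2$ joined by a single edge is, up to a constant edge-cost term determined by the \O{1} border-vertex colors (which I fold into the index of \vector{\cdot}), a min-plus convolution over the blue-count index: $\vector{T}[a] = \min_{i+j=a}\big(\vector{T_1}[i] + \vector{T_2}[j]\big) + \text{const}$. For any consecutive partitions $(P_1,\ldots,P_{k_1})$ of \vector{T_1} and $(Q_1,\ldots,Q_{k_2})$ of \vector{T_2}, I would let $\subunifier(P_s,Q_t)$ output, for each attainable sum value $a$ with $i \in P_s$, $j \in Q_t$ and $i+j=a$, the partial minimum $\min_{i+j=a}\big(\vector{T_1}[i]+\vector{T_2}[j]\big)$. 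Since these sums range over an interval of length $|P_s|+|Q_t|-1$, this vector has at most $|P_s|+|Q_t|$ entries, and each entry affects exactly the single output element \vector{T}[a], matching condition~1. Then $\unifier$ collects, for each index $a$, the partial minima contributed by the \O{\min(k_1,k_2)} block pairs whose sum-ranges contain $a$, and returns their minimum (plus the edge term), using linear space and polynomial time, matching condition~2.

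With these three properties verified, Theorem~\ref{thm:linear-expressible} applies verbatim. The main obstacle I anticipate is not the convolution itself but the bookkeeping of the binary extension: checking that auxiliary vertices and zero-weight edges make the extended coloring cut-equivalent to the original, and that the border-vertex colors (at most three by the decomposition and partitioning guarantees) can be encoded into a constant blow-up of the vector index so that the edge-cut term in each merge is a fixed constant per output element. Once this indexing is set up correctly, both \subunifier{} and \unifier{} are routine min-plus operations, and the per-machine space and round bounds follow directly from Theorem~\ref{thm:linear-expressible}.
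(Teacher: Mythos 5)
Your overall route is the same as the paper's: prove minimum bisection is linear-expressible via tables indexed by a blue-vertex count and conditioned on the (constantly many) border-vertex colorings, prove splittability by observing that the merge is a min-plus convolution over the count index which block-decomposes exactly as Definition~\ref{def:Splittability} requires (each block pair contributes partial minima, one per sum value, and the unifier takes the minimum over contributions), and then invoke Theorem~\ref{thm:linear-expressible} with $\eps = 1/3$. That part matches the paper essentially verbatim (the paper even writes ``maximum'' where it clearly means minimum; your min-plus phrasing is the correct one).

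However, there is a genuine error in your binary-adaptability step: the treatment of auxiliary vertices. You give auxiliary vertices zero-weight edges, exclude them from the balance count, and claim that a coloring of \Tb{} and its induced coloring of \T{} have equal cut weight. That claim is false, and the resulting DP underestimates the optimum. Concretely, if an original edge $(u,v)$ of weight $w$ is realized in \Tb{} as a path $u \to a \to v$ through an auxiliary vertex $a$, where $(u,a)$ carries weight $w$ and $(a,v)$ carries weight $0$, then the coloring ``$u$ blue, $a$ blue, $v$ red'' separates $u$ from $v$ at a cost of $0$: the cut falls on the zero-weight edge $(a,v)$. Since your DP minimizes over \emph{all} colorings of \Tb{}, it will always hide cuts on zero-weight auxiliary edges; e.g., for a weighted star (true bisection cost $\approx n/2$) the DP value on the binary extension collapses to nearly $0$. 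The paper avoids this by setting the weight of every auxiliary-to-parent edge to $\infty$ rather than $0$: in any finite-cost coloring every auxiliary vertex is then forced to share its parent's color, so finite-cost colorings of \Tb{} correspond exactly, and cost-preservingly, to colorings of \T{}, while the exclusion of auxiliary vertices from the balance count keeps the bisection constraint intact. Your proof is repaired by adopting this $\infty$-weight convention (or, equivalently, by constraining auxiliary vertices to inherit their parent's color inside the DP); with that change the rest of your argument goes through.
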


\begin{proof}

	While converting $\T{}$ to binary version of it, for any auxiliary vertex $a$ we set the weight of the edge between $a$ and its parent to be $\infty$. Note that in any optimal solution for the new tree, the infinity weights enforce all auxiliary vertices to be in the same bisection set that their parents are, thus any solution for the original tree will have the same cost for the binary tree as long as we make sure the auxiliary vertices are in the same set that their parents are (otherwise the cost will be $\infty$). Recall that in the bisection's solution the number of vertices of the two sets should be equal, hence we have to keep track of the auxiliary vertices that we added to the tree and make sure we do not count them.

	The dynamic data $D_c(T_i)[b]$, for any partition (sub-tree) $T_i$, and any possible coloring $c$ of its border vertices into red and blue sets, and any number $b \leq |V(T_i)|$, is the minimum cost of partitioning $T_i$ such that the partitions match what is specified in $c$ and there are exactly $b$ non-auxiliary vertices in the blue set. (Recall that the color of auxiliary vertices will have to be the same as their parents or the cost would be $\infty$). Note that we fix the exact coloring of border vertices by $c$ to make it possible to merge the partial dynamic data for two components. Two components are only connected through their border vertices and if we know how the border vertices are colored, we will know if in a solution we have to add the weight of the edges between border vertices to the cost or not. Note that there are at most $4$ border vertices in any partition so there are constant number of possible cases for $c$. This implies that bisection is linear-expressible.

	   If we define the unifier (\unifier) and the sub unifier (\subunifier) functions for the dynamic data stored for each partition we obtain that bisection is also splittable (Definition \ref{def:Splittability}).
	Let $T$ be the result of merging $T_i$ and $T_j$, and let set $D(T')[b]$ denote the set of all $D_c(T')[b]$ for any possible coloring $c$ of border vertices in $V(T')$, and any partition $T'$. The function $\subunifier$ for any consecutive portion of $D(T_i)$ from element $D(T_i)[a_i]$ to $D_c(T_i)[b_i]$ denoted by $D(T_i)[a_i:b_i]$, and any consecutive portion of $D(T_j)$ denoted by $D(T_j)[a_j:b_j]$ is as follows: for any pair $k_i$ and $k_j$ such that $a_i\leq k_i\leq b_i$ , $a_j\leq k_j\leq b_j$, the value of pair ($D(T_i)[k_i]$, $D(T_j)[k_j]$) is only related to $D(T)[k_i+k_j]$. Therefore, for any possible coloring $c$ of border vertices of $T$, and any number $x$ such that $a_i+a_j\leq x\leq  b_i+b_j$,
	$\subunifier$ outputs a value for $D_c(T)[x]$ which is the maximum  value for $D_c(T)[x]$ based on the input given to this function. It is easy to see that $\unifier$ for any $c$ is the maximum of all its inputs that match with the color of border vertices $c$, which is a function polynomial to its input.
 
	  	At the end, $\max_c D_c(\Td)[\frac{n'}{2}]$, where $n'$ is the number of non-auxiliary vertices of $\T$, is the solution we return.
	  
Therefore, by Theorem \ref{thm:linear-expressible} since bisection is linear- and splittable, there exists an algorithm to solve bisection in \Ot{\log{n}} round of \model{} using $\M$ machines, such that with high probability each machine uses \Ot{n^{1+\eps}/\Mset} space, and takes \O{n^2/\M} time in each round of the algorithm, for $\eps=\frac{1}{3}$.
\end{proof}

\paragraph{Extensions.} It could be shown that with very similar techniques, the ``$k$-maximum spanning tree" problem on a tree could also be solved. In this problem a weighted tree is given and the goal is to find the maximum weighted connected subgraph of the given tree with exactly $k$ vertices. When we convert this tree to a binary tree we only need to make sure that we mark auxiliary vertices such that we do not disconnect them from their parents in the algorithm. Moreover, instead of considering all the possible coloring of the border vertices in bisection, in the dynamic for $k$-maximum spanning tree, we need to consider all the cases for the connectivity of the border vertices with the maximum weighted tree in the partition. 

Furthermore, we expect that this idea can be extended to several other problems,
including facility location problem, minimum diameter, $k$ partitioning, $k$-median, and $k$-center, because they have very similar dynamic programming solutions
when the input graph is a tree.

\subparagraph*{Acknowledgements.}
	The authors thank Silvio Lattanzi for fruitful discussions and his helpful comments.
	
\bibliography{ref}
\clearpage

\appendix

%	\section{Further Applications}\label{sec:furtherapplications}
%	\input{general.tex}
	
	\section{Omitted Proofs}\label{apx:proofs}
	
\subsection*{Proof of Lemma~\ref{lem:universalhash}}

\begin{proof}
	We use the following proposition by \cite{vadhan2012pseudorandomness}.
	\begin{proposition}[Corollary 3.34 in  \cite{vadhan2012pseudorandomness}]\label{pro:k-universal}
		For  any $d, p, k \in \mathbb{Z}^+$, there is a family of k-wise independent hash functions $H = \{h : [d] \rightarrow \{0, 1,  \dots, 2^p-1\} \}$, from which choosing a random function $h$ takes \mbox{$k\cdot\max\{\log{d},p\}$} random bits, and evaluating any function $h\in H$ takes time $\poly(\log{d},p,k)$.
	\end{proposition}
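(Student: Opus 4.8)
The plan is to give the standard polynomial-based construction of $k$-wise independent hash functions over a finite field of characteristic two, which simultaneously achieves the claimed seed length and evaluation time. First I would fix $\ell = \max\{\lceil \log d\rceil, p\}$ and work in the field $\mathbb{F}_{2^\ell}$, which has $2^\ell \geq d$ elements so that the domain $[d]$ embeds injectively into $\mathbb{F}_{2^\ell}$, and $2^\ell \geq 2^p$ so that the codomain $\{0,\dots,2^p-1\}$ can be read off from the low-order $p$ bits of a field element. To sample $h$, I would draw $k$ field elements $a_0,\dots,a_{k-1} \in \mathbb{F}_{2^\ell}$ uniformly and independently, form the random polynomial $g(x) = \sum_{j=0}^{k-1} a_j x^j$ of degree less than $k$, and define $h(x)$ to be the truncation of $g(x)$ to its first $p$ bits.

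The heart of the argument is the claim that for any $k$ distinct inputs $x_1,\dots,x_k$, the vector $(g(x_1),\dots,g(x_k))$ is uniformly distributed over $\mathbb{F}_{2^\ell}^k$. This follows because the linear map sending the coefficient vector $(a_0,\dots,a_{k-1})$ to the evaluation vector $(g(x_1),\dots,g(x_k))$ is given by the $k\times k$ Vandermonde matrix on the distinct nodes $x_1,\dots,x_k$, whose determinant $\prod_{i<j}(x_j - x_i)$ is nonzero in a field; hence the map is a bijection, and a uniform coefficient vector yields a uniform evaluation vector. In particular the values $g(x_1),\dots,g(x_k)$ are fully independent and each uniform, which is exactly $k$-wise independence at the level of the field-valued function.

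Next I would pass from $g$ to $h$ by truncation. Because $\mathbb{F}_{2^\ell}$ has characteristic two, I identify its elements with length-$\ell$ bit strings and let $h(x)$ be the first $p$ bits; since $\ell \geq p$, each of the $2^p$ possible $p$-bit prefixes is the prefix of exactly $2^{\ell-p}$ field elements. Thus truncation is a balanced, measure-preserving map from $\mathbb{F}_{2^\ell}$ onto $\{0,\dots,2^p-1\}$, so applying it coordinatewise to the uniform, independent vector $(g(x_1),\dots,g(x_k))$ preserves uniformity and independence, giving the desired $k$-wise independence of $h$. Finally I would verify the two quantitative claims: the seed is the list $a_0,\dots,a_{k-1}$ of $k$ elements of $\mathbb{F}_{2^\ell}$, i.e. $k\cdot\ell = k\cdot\max\{\lceil\log d\rceil, p\}$ random bits; and evaluating $h(x)$ amounts to evaluating $g$ by Horner's rule with $O(k)$ additions and multiplications in $\mathbb{F}_{2^\ell}$, each costing $\poly(\ell)$ time, for a total of $\poly(\log d, p, k)$, after which truncation is free.

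The step I expect to require the most care is the truncation argument: one must ensure that reducing field elements to $p$ bits neither breaks independence nor introduces bias. Working in characteristic two makes this clean, since the field is literally $\mathbb{F}_2^{\,\ell}$ as an additive group and the $p$-bit projection is a group homomorphism with fibers of equal size $2^{\ell-p}$. Had one instead used a prime field $\mathbb{F}_q$ with $q$ not a power of two, the naive reduction modulo $2^p$ would only be approximately balanced, and the clean exact-uniformity conclusion would fail; the choice of a characteristic-two field is precisely what makes the bound exact rather than approximate.
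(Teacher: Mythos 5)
The paper does not prove this proposition at all---it is imported verbatim as Corollary 3.34 from Vadhan's monograph---so there is no internal proof to compare against; your argument is precisely the standard construction underlying that corollary (degree-$(k-1)$ polynomials over $\mathbb{F}_{2^{\ell}}$ with $\ell=\max\{\lceil\log d\rceil,p\}$, Vandermonde bijectivity for $k$-wise independence, and balanced $p$-bit truncation in characteristic two), and it is correct, including the seed-length and Horner-evaluation accounting. The only detail worth a footnote is that doing arithmetic in $\mathbb{F}_{2^{\ell}}$ presupposes an explicit irreducible polynomial of degree $\ell$ over $\mathbb{F}_2$, which can be found deterministically in $\poly(\ell)$ time, so the claimed $\poly(\log d, p, k)$ evaluation bound stands.
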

	
	Recall that we assume the number of machines, \M{}, is a power of two in the model. Therefore we can use Proposition~\ref{pro:k-universal} to choose the hash function $h: [k] \to \Mset{}$ with only $\max{\{\log k, \log \M\}}$ random bits which we can generate on one machine and share with all other machines. The total communication of this machine would be $\O{\M (\log k + \log \M)}$ which is $\Ot{n/\M}$ since $\M = \O{\sqrt{n}}$ and $k$ is $\poly(n)$.
\end{proof}

\subsection*{Proof of Lemma~\ref{lem:maxload}}

To prove Lemma~\ref{lem:maxload} we use a known result on the balls and bins problem by~\cite{schmidt1995chernoff}. We start with the definition of the balls and bins problem.

\begin{definition} [Balls and bins]
	In an instance of balls and bins  problem there are $n$ balls and $n$ bins. Each ball is placed into one of the bins that is chosen uniformly at random.   Let random variable $X_i$ with range $\{1,..,n\}$ denote the bin that the $i$-th ball is assigned to. In this problem, the load of an arbitrary bin $b\in[n]$  is  the number of balls assigned to this bin, which is equal to $\Sigma_{i=1}^{n} X_i=b$.
\end{definition}

The following claim is based on the result of \cite{schmidt1995chernoff}.

\begin{claim}  \label{claim:balls-logwise}
	Let $l_i$ denote load of the $i$-th bin in an instance of balls and bins problem in which random variables $X_1, \dots,  X_n$ are ($\log{n}$)-wise independent, then  $\Pr{[\max_i^{n}l_i > 2\log{n}]} \leq  (\frac{2 e}{\log{n}})^{\log{n} } $.	
\end{claim}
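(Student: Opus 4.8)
The plan is to reduce the max-load statement to a tail bound for a single bin and then union-bound over bins, obtaining the single-bin tail via the binomial- (factorial-) moment method; with only $(\log n)$-wise independence we cannot invoke the usual moment-generating-function Chernoff argument, so a moment-based estimate is the natural substitute. First I would fix a bin $b$ and write its load as $l_b = \sum_{i=1}^n Z_i$ with $Z_i = \mathbf{1}[X_i = b]$ the indicator that the $i$-th ball lands in bin $b$; since each ball is uniform over the $n$ bins, $\mathbb{E}[Z_i] = 1/n$ and $\mathbb{E}[l_b] = 1$. Because each $Z_i$ is a function of the single variable $X_i$, the $(\log n)$-wise independence of $X_1, \dots, X_n$ transfers to the indicators $Z_1, \dots, Z_n$.

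The key quantity is the number of unordered $k$-tuples of balls that all fall into bin $b$, namely $Y = \binom{l_b}{k} = \sum_{|S| = k} \prod_{i \in S} Z_i$, with $k = \log n$. This is exactly where $(\log n)$-wise independence is used: each of the $\binom{n}{k}$ products involves precisely $k$ of the indicators, so its expectation factorizes into $(1/n)^k$, giving
$$\mathbb{E}[Y] = \binom{n}{k}\Big(\frac{1}{n}\Big)^{k} \leq \frac{1}{k!}.$$
On the event $\{l_b > 2\log n\}$ the integer load satisfies $l_b \geq 2\log n$, hence $Y = \binom{l_b}{k} \geq \binom{2\log n}{\log n}$, and Markov's inequality yields $\Pr[l_b > 2\log n] \leq \mathbb{E}[Y]/\binom{2\log n}{\log n}$. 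I would then use the elementary bounds $\binom{2\log n}{\log n} \geq 2^{\log n}$ (from $\binom{N}{k}\geq (N/k)^{k}$) and $k! = (\log n)! \geq (\log n/e)^{\log n}$, which combine to give the per-bin estimate $\Pr[l_b > 2\log n] \leq \big(\tfrac{e}{2\log n}\big)^{\log n}$.

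Finally I would take a union bound over all $n$ bins. Writing $n$ as $2^{\log n}$ (or $e^{\ln n}$, depending on the base of the logarithm) absorbs cleanly into the per-bin bound and leaves $\Pr[\max_i l_i > 2\log n] \leq \big(\tfrac{2e}{\log n}\big)^{\log n}$, as claimed; in fact the factor of $2$ is slack, since the argument produces $\big(\tfrac{e}{\log n}\big)^{\log n}$. Alternatively one could quote the limited-independence Chernoff–Hoeffding bound of Schmidt, Siegel and Srinivasan directly with $\mu = 1$, $\delta = 2\log n - 1$, and independence parameter $\log n$, but I prefer the self-contained factorial-moment computation because it makes transparent why $(\log n)$-wise independence, and no more, suffices. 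The main obstacle is precisely this independence restriction: it rules out the standard MGF Chernoff bound, and the whole argument hinges on matching the moment order $k$ to the independence parameter $\log n$, so that $\mathbb{E}[Y]$ is computable and yet the threshold $2\log n$ makes $\binom{2\log n}{\log n}$ large enough to beat the union-bound factor $n$.
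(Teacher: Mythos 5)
Your proof is correct, and it reaches the claim by a more self-contained route than the paper. The paper's argument has the same outer skeleton---a per-bin tail bound followed by a union bound over the $n$ bins, writing $n = 2^{\log n}$ so the factor of $n$ is absorbed into the exponential---but it obtains the per-bin tail as a black box, invoking Theorem 6(V) of Schmidt, Siegel and Srinivasan with parameters $\eta = \log n - 1$ and $r = 2\log n$, which gives $\Pr[l_i > 2\log n] \leq e^{\eta}/(1+\eta)^{1+\eta} \leq (e/\log n)^{\log n}$ per bin and hence $(2e/\log n)^{\log n}$ overall. You instead re-derive the per-bin tail from first principles via the binomial- (factorial-) moment method: with $k=\log n$ you bound $\mathbb{E}\bigl[\binom{l_b}{k}\bigr] = \binom{n}{k} n^{-k} \leq 1/k!$, using $k$-wise independence exactly once to factor the expectation of each $k$-fold product of indicators, and then apply Markov together with $\binom{l_b}{k} \geq \binom{2\log n}{\log n} \geq 2^{\log n}$ on the bad event. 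This is essentially the technique underlying the Schmidt--Siegel--Srinivasan bound itself, so the two arguments are close cousins; what yours buys is transparency (it shows precisely why matching the moment order to the independence parameter $\log n$ suffices, and why nothing stronger is needed) and independence from the external citation, at the cost of a page of elementary computation. Your constants are in fact slightly sharper: your per-bin bound $(e/(2\log n))^{\log n}$ yields $(e/\log n)^{\log n}$ after the union bound, which implies the paper's stated $(2e/\log n)^{\log n}$. The only detail to tidy is integrality: for $Y=\binom{l_b}{k}$ and the monotonicity step $\binom{l_b}{k}\geq\binom{2\log n}{\log n}$ to be literally meaningful one should treat $\log n$ and $2\log n$ as integers (or insert floors), a convention the paper's statement implicitly assumes as well.
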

\begin{proof}
	By Theorem  6-(\RN{5}) in \cite{schmidt1995chernoff} we infer that, given an instance of balls and bins problem in which random variables $X_1, \dots,  X_n$ are ($\log{n}$)-wise independent, for any $r$ and $\eta$ such that $r\geq 1+\eta+\log{n}$, and $k \geq \eta$ The following equation holds:
	$$\Pr{[l_i > r ]} \leq \frac{e^{\eta} }{(1+\eta)^{1+\eta}} $$. 
	
	Using this result, by setting $\eta= \log{n} -1$ and $r=2\log{n}$ in the mentioned instance of balls and bins problem, for any $i\in [n]$,  $\Pr{[l_i > 2\log{n}]} \leq  (\frac{e}{\log{n}})^{\log{n} } $. Given this fact, by taking union bound over all $n$ bins we achieve the following result:
	$$ \Pr{[\max_i^{n}l_i > 2\log{n}]} \leq  n \cdot \Pr{[l_i > 2\log{n}]= 2^{\log n}\cdot (\frac{e}{\log{n}})^{\log{n}} =(\frac{2 e}{\log{n}})^{\log{n} } }.\qedhere$$
\end{proof}

We are now ready to prove Lemma~\ref{lem:maxload}.

\begin{proof}[Proof of Lemma~\ref{lem:maxload}]
	\sbcomment{This could be moved to the main part}
	Our starting point is Lemma~\ref{claim:balls-logwise}. We represent the elements of $A$ as balls and the machines as bins. However, there are two main problems. First, $|A|$ may be much larger than the number of machines (\M{}), while Lemma~\ref{claim:balls-logwise} expects the number of balls to be equal to the number of bins. Second, the balls are unweighted in Lemma~\ref{claim:balls-logwise} whereas the elements of A are weighted.
	
	To resolve the above issues, we first partition the elements of $A$ into $\lceil n/\M\rceil$ ($=\setsize$) subsets $S_1, S_2, \ldots, S_{\setsize}$ of size $\M$ (except the last subset $S_{\setsize}$  which might have fewer elements if $n/\M$ is not an integer) such that the elements are grouped based on their weights. More precisely, $S_1$ contains the $\M$ elements of $A$ with the lowest weights, $S_2$ contains the $\M$ elements in $A - S_1$ with the lowest weights, and so on.
	
	\begin{claim}\label{claim:maxloadset}
		For $\M=n^\delta$, with probability at least $1-(\frac{2^{1/\delta}e}{\delta\log{n}})^{\delta\log{n} }$  there is no set $S_i$ where $h$ maps more than \O{\log\M} elements of $S_i$ to the same machine.
	\end{claim}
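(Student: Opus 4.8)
The plan is to apply the balls-and-bins tail bound of Claim~\ref{claim:balls-logwise} to each group $S_i$ in isolation and then take a union bound over the $\setsize = \lceil n/\M\rceil$ groups. Within a single group we view its elements as balls and the \M{} machines as bins, so the event ``more than $\O{\log \M}$ elements of $S_i$ land in the same machine'' is exactly the event that some bin receives a large load. The probabilistic content is therefore entirely contained in Claim~\ref{claim:balls-logwise}; what remains is to set it up correctly per group and to track the arithmetic.

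The key step is verifying that Claim~\ref{claim:balls-logwise} applies to a fixed group. By construction each full group $S_i$ contains exactly \M{} elements, hashed into the \M{} machines. Since $h$ is drawn from a $(\log \M)$-universal family, the restriction of $h$ to the (at most \M{}) keys of $S_i$ produces hash values that are $(\log \M)$-wise independent and uniform on $\Mset{}$, which is immediate from Definition~\ref{def:universalhash}. Hence Claim~\ref{claim:balls-logwise}, applied with its parameter $n$ set to \M{}, gives that the probability some machine receives more than $2\log \M$ elements of $S_i$ is at most $\left(\frac{2e}{\log \M}\right)^{\log \M}$. The last group may contain fewer than \M{} balls; this only lowers every bin's load, so the same tail bound holds for it (alternatively one pads it to \M{} dummy balls, which can only increase loads). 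Substituting $\log \M = \delta \log n$ (from $\M = n^\delta$), the per-group failure probability becomes $\left(\frac{2e}{\delta\log n}\right)^{\delta\log n}$, and the threshold $2\log \M = 2\delta\log n$ is indeed $\O{\log \M}$, matching the claim's statement.

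It remains to assemble the union bound and match the stated constant. Taking a union bound over the $\setsize = \lceil n/\M\rceil$ groups, the probability that some group has a machine holding more than $2\log\M$ of its elements is at most
\[
	\setsize \cdot \left(\frac{2e}{\delta\log n}\right)^{\delta\log n}.
\]
Using $\M = n^\delta$ we have $2^{\delta\log n} = \left(2^{\log n}\right)^{\delta} = n^\delta = \M$, so $\left(\frac{2e}{\delta\log n}\right)^{\delta\log n} = \M\left(\frac{e}{\delta\log n}\right)^{\delta\log n}$, and the bound becomes $\setsize\cdot\M\cdot\left(\frac{e}{\delta\log n}\right)^{\delta\log n}$. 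When $\M \mid n$ we have $\setsize\cdot\M = n$ exactly (otherwise $\setsize\cdot\M$ exceeds $n$ only by at most a constant factor, which is harmless), and writing $n = 2^{\log n} = \left(2^{1/\delta}\right)^{\delta\log n}$ folds the factor $n$ back into the exponent to yield $\left(\frac{2^{1/\delta}e}{\delta\log n}\right)^{\delta\log n}$, exactly the claimed failure probability.

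The main obstacle I anticipate is bookkeeping rather than the probabilistic core: one must confirm that restricting the globally $(\log\M)$-universal hash function to a group of size at most \M{} still delivers the $(\log\M)$-wise independence that Claim~\ref{claim:balls-logwise} requires, and that the collapse $2^{\delta\log n} = \M$ together with the $\setsize$ factor cleanly absorbs into the single power $(\cdot)^{\delta\log n}$. The only genuine slack is the ceiling in $\setsize = \lceil n/\M\rceil$, contributing at most a constant factor that is either negligible in the exponentially small bound or removed outright under the standard assumption that \M{} divides $n$.
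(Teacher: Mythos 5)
Your proposal is correct and takes essentially the same approach as the paper's proof: fix a group $S_i$, use the $(\log \M)$-universality of $h$ to invoke Claim~\ref{claim:balls-logwise} with $\M$ balls and $\M$ bins, then union-bound over the $\lceil n/\M\rceil$ groups and simplify via $2^{\delta\log n}=\M$ and $n=\bigl(2^{1/\delta}\bigr)^{\delta\log n}$, exactly matching the paper's arithmetic. Your extra care about the smaller last group and the ceiling factor only makes explicit bookkeeping that the paper leaves implicit.
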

	\begin{proof}
	Since $h$ is chosen from a $(\log \M)$-universal hash function, and since each $S_i$ has at most $\M$ elements, the machines that the elements in $S_i$ are assigned to are $(\log \M)$-wise independent. Therefore, by fixing a set $S_i$, we can use Lemma~\ref{claim:balls-logwise} to prove at most \O{\log \M} elements of $S_i$ are assigned to the same machine with probability $1 - (2e/\log\M)^{\log\M}$. By taking union bound over the failure probability of all sets we obtain that with probability at least 
	$$1 - \frac{n}{m}(\frac{2e}{\log\M})^{\log\M} = 1-n(\frac{e}{\log{m}})^{\log{m} }= 1-(\frac{2^{1/\delta}e}{\delta\log{n}})^{\delta\log{n} }$$ 
	there is no set $S_i$ where $h$ maps more than \O{\log\M} elements of $S_i$ to the same machine.
	\end{proof}
	
	Let $w_i = \max_{a \in S_i}w(a)$ denote the weight of the element in $S_i$ with the maximum weight. Also let $l^i_h(\mu)$ denote the load of $h$ on machine $\mu$ from the elements in set $S_i$ (i.e., $l^i_h(\mu) = \sum_{a \in S_i: h(a) = \mu}w(a)$). By Claim~\ref{claim:maxloadset}, with high probability, from any set $S_i$, at most \O{\log\M} elements are mapped to the same machine. Also since the maximum weight of the elements in $S_i$ is $w_i$ we know with high probability that $l^i_h(\mu) \leq \O{\log\M\cdot w_i}$ for any $i$. We know by Definition~\ref{def:distributer} that $l_h(\mu) = \sum_{i=1}^{\setsize}l^i_h(\mu)$, therefore 
	\begin{equation}
	l_h(\mu) = l^1_h(\mu) + \ldots + l^{\setsize}_h(\mu) \leq \O{\log\M\cdot(w_1 + \ldots w_{ \setsize})}.
	\end{equation}
	
	 Therefore the maximum load of $h$ is \Ot{w_1 + w_2 + \ldots + w_{\setsize}}. To complete our proof, it suffices to prove $w_1 + w_2 + \ldots + w_{\setsize} = O(n/\M)$. To see this, note that $w(a) \geq w_{i-1}$ for any $a$ in $S_i$ or otherwise $a$ would have been in $S_{i-1}$ instead of $S_i$. Therefore,
	\begin{equation}
	\sum_{a \in S_i} w(a) \geq |S_i| \cdot w_{i-1} = \M \cdot w_{i-1} \qquad \forall i: 1 \leq i < \setsize,
	\end{equation}
	which means,
	\begin{equation}
	\M(w_1 + w_2 + \ldots + w_{\setsize-1}) \leq \sum_{i=2}^{\setsize}\sum_{a \in S_i} w(a).
	\end{equation}
	On the other hand since the total weight of the elements in $A$ is \O{n} by Definition~\ref{def:distributable}, we obtain
	\begin{equation}
	\M(w_1 + w_2 + \ldots + w_{\setsize-1}) \leq \O{n},
	\end{equation}
	and therefore $\sum_{i=1}^{\setsize-1} w_i \leq \O{n/\M}$. Finally since by Definition~\ref{def:distributable} we know the maximum weight in $A$, and consequently $w_{\setsize}$ is not more than $n/\M$, we prove $\sum_{i=1}^{\setsize} w_i \leq \O{n/\M}$ as desired.
\end{proof}

\subsection*{Proof of Lemma~\ref{lem:partitioning}}

\begin{proof}
	The main idea of the algorithm is to first partition $\T{}$ into two sub-trees $\T{}_1$ and $\T{}_2$ such that each of them has at most $\frac{2n_\T{}}{3}$ vertices. Unfortunately, in case all inherited border vertices of $\T{}$ lie within one of the sub-trees, we may end up having 4 border vertices in one of the sub-trees since the partitioning step creates one new border vertex in each of the sub-trees. To handle this case, we again partition the sub-tree with 4 border vertices into two sub-trees such that they will not have more than 3 border vertices at the end.

	Let $D[v]$ \sbcomment{fix notation} denote the number of descendants of vertex $v$ (including itself), and let $B[v]$ denote the number of border vertices among these descendants. It is easy to see that there exists linear time bottom-up dynamic algorithms to calculate these two vectors.
	
	In the first step, we use Algorithm \ref{alg:First_cut} to find an edge and claim removing it partitions $\T{}$ into two sub-trees $\T{}_1$ and $\T{}_2$, where $\max(|V(\T{}_1)|, |V(\T{}_2)|) \leq \frac{2n_\T{}}{3}$.
	
	\begin{algorithm} 
	\caption{} \label{alg:First_cut}
	\begin{algorithmic}[1]
	\Procedure{FirstCut}{$\T{}$}
	\State $D \gets \Call{Descendants}{\T{}}$
	\For {any vertex $v$ of $T$}
		\State $C_D[v] \gets$ the child $u$ of $v$ with maximum $D[u]$
	\EndFor
	\State $n_\T{} \gets D[\T{}.root]$
	\State $v \gets \T{}.root$
	\While{$D[C_D[v]] > \frac{2n_\T{}}{3} $} \label{line:if_first}
		\State $v \gets C_D[v]$
	\EndWhile
	\State \Return edge $(v, C_D[v])$
	\EndProcedure
	\end{algorithmic}
	\end{algorithm}
	
	To prove the returned edge $e = (v, u)$ is a desired one, it suffices to prove $\frac{n_\T{}}{3} \leq D[u]\leq \frac{2n_\T{}}{3}$ since this yields that the size of the other partition is also less than or equal to $\frac{2n_\T{}}{3}$. The condition at Line~\ref{line:if_first} of Algorithm~\ref{alg:First_cut} ensures: (i) $D[u] \leq \frac{2n_\T{}}{3}$, and (ii) $D[v] > \frac{2n_\T{}}{3}$. The former inequality is a direct result of the condition; the latter also holds since otherwise $v$ (the parent of $u$) would not have been updated. Using the second inequality and since $u = C_D[v]$ (i.e., the child of $v$ with maximum number of descendants) we can obtain $D[u] \geq \frac{D[v]-1}{2} \geq \frac{n_\T{}}{3}$ as desired.
	
	After partitioning $\T{}$ we may end up having one partition with more than 3 border vertices (exactly 4) since removing an edge can add up to one new border vertex to each partition. In this case, let $\T{}_1$ denote the sub-tree with $4$ border vertices. We run Algorithm~\ref{alg:Second_cut} to partition $\T{}_1$ into two sub-trees with less than 4 border-vertices.
\begin{algorithm} \mdcomment{revise these algorithms}
\caption{} \label{alg:Second_cut}
\begin{algorithmic}[1]
\Procedure{SecondCut}{$\T{}$}
	\State $B \gets \Call{BorderVertices}{\T{}}$
	\For {any vertex $v$ of $\T{}$}
		\State $C_B[v] \gets$ the child $u$ of $v$ with maximum $B[u]$ 
	\EndFor
	\State $v \gets \T{}.root$
	\While{$B[C_B[v]] > 2$} 
		\State $v \gets C_B[v]$
	\EndWhile
	\State \Return $edge(v, C_B[v])$
	\EndProcedure
\end{algorithmic}
\end{algorithm}
\end{proof}

\subsection*{Proof of Claim \ref{claim:log-tree}}
\begin{proof}
	Algorithm \ref{alg:Tree-log} is the implementation of such an algorithm in \model{}.	Before running this algorithm, we distribute all the vertices in $V(\T{})$ over $\Mset$ machines independently and uniformly at random. To construct $\T^d{}$ we first add $|V(\T{})|$   vertices to it with the same set of indexes as the vertices in $V(\T{})$. These vertices do not have any parent yet, so we assign parents to them later in the algorithm. We will also add some auxiliary vertices to this tree.
		
	Set $\Delta$ to be $\lceil \frac{n}{m}\rceil$.  For any vertex $v\in \T{}$, let $C_v$ denote the set of children of $v$  in tree $\T{}$, and let $v^d$ denote the vertex equivalent  to  $v$ in $\T^d{}$, which has the same index as $v$. If $\Delta \geq |C_v|$ for any vertex $u\in C_v$, we set the parent of vertex $u^d$ as $v^d$. Otherwise, if  $\Delta< |C_v|$, we add a set of $ \frac{|C_v|}{\Delta}$ auxiliary  vertices to $V(\T^d{})$, denoted by \ $A_v$, such that $v^d$ is the parent of all the vertices in  $A_v$, and for any vertex $u\in C_v$, we choose one of the auxiliary vertices in $A_v$ independently and uniformly at random as the parent of the vertex $u^d$.  This guarantees that, for any two vertex $u$ and $v$ that $v$ is an ancestor of $u$ in \T{}, $v^d$ is also an ancestor of $u^d$. In addition $|V(\T^d{})|=\O{|V(\T{})|} $, so $\T^d{}$ is an extension of $T{}$.
	Note that $\frac{|C_v|}{\Delta}\leq m\leq \O{n/\M}= \O{\Delta\log{n}}$ , therefore for any vertex $v$ of $T{}$, the degree  of $v^d$ is at most  $\O{\Delta\log{n}}$. 	Moreover, for any two vertices $u$ and $v$ such that $|C_v|> \Delta$ and $u\in A_v$, it is easy to see that by Claim~\ref{claim:balls-logwise} with high probability the maximum number of vertices from $C_v$ that choose $u$ as the parent for their equivalent vertex in $T^d{}$, is at most  $\O{\Delta\log{n}}$ since any vertex in $C_v$ choses a vertex in $A_v$ independently and uniformly at random. Therefore, the maximum degree in $\T^d{}$ is $\O{\Delta\log{n}}$. 
		
	To implement this algorithm in \model{}, each machine in \Mset{} needs to have the list of all the vertices in \T{} with degree more than $\Delta$ alongside with the number of children each one's equivalent vertex  will have in tree $T^d{}$ \sbcomment{give intuition for why these are needed?}. Let $K_\mu$ \sbcomment{I use $\mu$ as a superscript, better to be consistent. I can't change mine to subscript cause it is used for another attribute.} denote the set of such vertices of \T{} in any machine $\mu \in \Mset$, and let $S_\mu$ be a dictionary with the set of keys $K_\mu$, such that $S_\mu(v)$ is the number of children that vertex $v^d$ will have in tree $\T^d$ for any vertex $v\in K_\mu$. We assume that each machine has the degree of all the vertices that are stored in it. By Lemma \ref{lem:degree} we can achieve this in $\O{1}$ rounds of \model{}. In the first round of the algorithm each machine constructs and sends this dictionary to all machines. In the next round, each machine construct a dictionary $S$ which is the union of all the dictionaries received in round 1, and its own dictionary. (Since the dictionaries have distinct set of keys the union of them is well-defined.)  For any vertex $v \in V(\T{})$ we  construct $v^d$ in the same machine as $v$, and using $S$, we set its parent. In addition, if the degree of $v$ in tree \T{} is more than $\Delta$, we add $\frac{S(v)}{\Delta}$ auxiliary vertices as the children of $v^d$ in this machine. It is easy to see that the data in dictionary $S$, suffices to assign a unique index to any auxiliary vertex. After this algorithm terminates each vertex of $V(\T^d)$ is stored in exactly one machine.
  
  	To prove that during the algorithm, with high probability no machine violates the memory limit of $\Ot{n/\M}$, it suffices to prove that with high probability the number of vertices with degree greater than $\Delta$ in any machine $\mu \in \Mset$ is $\O{\log{n}}$, since it yields that the size of the message each machine sends in round 1 of the algorithm is $\O{\log{n}}$, and hence we have $\Ot{\M} \leq  \Ot{n/\M}$, the overall message passing for each machine does not violate the memory limit. Moreover, it indicates that the total number of vertices added to each machine is at most $ \O{\log{n}} \cdot {\frac{n}{\Delta}} \leq \Ot{n/\M}$. To prove this claim, we use two facts. First, before running the algorithm we uniformly distribute the vertices over all the machines, and second, the total number of the vertices with degree at least $\Delta$ in a tree is at most $\O{\frac{n}{\Delta } }\leq \O{\M}$. So, by Claim \ref{claim:balls-logwise} with high probability there are at most $\O{\log{n}}$  vertices with degree at least $\Delta$ in any machine $\mu \in \Mset$.
\end{proof}
  
  \begin{algorithm}[h]
  	\caption{An algorithm to convert a given tree \T{} to a bounded degree tree $\T^d{}$, an extension of \T{}}
  	\label{alg:Tree-log}
  	\begin{algorithmic}[1]
  		
  		\Require{given a number $\Delta$  this algorithm converts a tree \T{} whose vertices are  distributed  over $\M$ machines to an extension of it, $\T^d{}$ with maximum degree $\O{\Delta\log{n}}$}  		
  		\State Let $\mu$ denote the machine in which this instance of algorithm is running, and let $V_\mu$ denote the set of vertices of $\T{}$  that are stored in this machine.
  		\Round{1}
  		\State Let $S_\mu$ be a dictionary with the set of keys $K_\mu=\{\ind{v} |\deg(v)> \Delta, v\in V_\mu\}$.
  		\State For any $v$, such that $\ind{v}\in K_\mu$, $S_\mu(\ind{v}) = \lceil\frac{|\deg(v)|}{\Delta}\rceil$. \mdcomment{deg(v) ?? children(v)}
  		\State Send $S_\mu$ and $K_\mu$ to all the other machines.
  		\EndRound
  		\Round{2}
  		\State $K \leftarrow \cup_{i \in \Mset} K_i$
  		\State  $S \leftarrow \cup_{i \in \Mset} S_i$ \Comment{Since $K_i$s are distinct, for any $i\in K$ S[i] has exactly one value}
  		
  		\State Let $D$ be a dictionary on the same set of keys as $S$, while for any  $i\in K$, $D[i]=n+\Sigma_{j\in S, j<i}S[j]$.
  		\For {every $v$ in $V_\mu$}
  		\If{$\parent{v} \in K $}
  		\State	Choose $x$ uniformly at random from $[S(\ind{v})]$.
  		\State   $\parent{v}\leftarrow D[\parent{v}]+ x$ 	\Comment{$D[\parent{v}]+ x$ is a unique index in the range $ [n, 2n]$ assigned to the $x$-th child of vertex v in $\T^d{}$}
  		\EndIf
  		\EndFor
  		\For{every $v$ in $K_\mu$}
  		\For{every $j$ in $[S[v]]$}
  		\State Add Vertex$(v, D[v]+j)$ to $V_\mu$ .
  		\EndFor
  		\EndFor
  		\EndRound
  	\end{algorithmic}
  \end{algorithm}

\subsection*{Proof of Claim~\ref{claim:binary-tree}}
\begin{proof}
	\sbcomment{Reminder for myself to have a pass on this proof.}
	Algorithm \ref{alg:binary} is the implementation of an algorithm to convert $\T{}^d$ to $\Tb{}$ in \model{}. To construct $\Tb{}$ we first add $|V(\T{}^d)|$  vertices to it with the same set of indexes as the vertices in $V(\T{}^d)$. These vertices do not have any parent yet, so we assign parents to them later in the algorithm. We will also add some auxiliary vertices to this tree.
	Let $C_v$ denote the set of children of vertex $v$, for any $v\in V(\T^d{})$, and let $v^b$ denote its equivalent vertex in $\Tb{}$ . The overall idea in this algorithm is as follows: for any vertex with more than two children we construct a binary tree $t_v$ with root $v^b$ such that for any $u\in C_v$, $u^b$ is a leave in this tree, and $|V(t_v)|= 2|C_v|-1$. Then we  add the vertices of $t_v$ excluding the root  to binary tree  $\Tb{}$ with the same parent they have in $t_v$.  In addition, for any $v$ with less than 3 children we set the parent of its children's equivalent vertices in $\Tb{}$ as $v^b$.
		
	To implement this algorithm in \model{}, we need to have all the children of any vertex in one machine. 
	To achieve this, we first distribute all the vertices in $V(\T{}^d)$ over $\M$  machines using a hash function $h: [|V(\Tb{})|]\rightarrow[\M]$ which is chosen uniformly at random from a family of $\log{n}$-universal hash functions. By Corollary \ref{cor:distribution}  it is possible to distribute $V(\T{}^d)$   by hash function $h$ in \O{1} round of \model{} , and using  $\O{\log \M}$ time and space on each machine in such a way that each machine can evaluate $h$ in \O{\log \M} time and space. Let $V_\mu$ denote the set of vertices that is assigned to an arbitrary machine $\mu$ by this distribution. For any vertex $v \in V_\mu$ let $p_v$ denote $v$'s parent, and let $\mu_v$ be the machine that  $p_v \in V_{\mu_v}$. Since evaluating $h$ in each machine takes \O{\log \M} time and space, It is possible to find $\mu_v$ for all $v \in V_\mu$ using  \Ot{n/\M} time and space in machine $\mu$. We send any vertex $v$ as a message to the machine $\mu_v$. In this way all the children of any vertex are sent to the same machine. In addition since  $|C_v|\leq \ \Ot{n/\M}$, by Lemma \ref{lem:maxload} with high probability the total size of the messages received by any machine $\mu$ which is equal to $\Sigma_{v in V_\mu} |C_v|$ is  \Ot{n/\M}. 
	
	 For any vertex $v\in V(\T^d{})$ where $|C_v|>2$, we add some auxiliary vertices to \Tb{}, which does not have any equivalent vertex in $V(\T^d{})$, so we need to assign an index to them. To make sure that these assigned indexes are unique, any machine $\mu$ computes the number of all the auxiliary vertices generated in this machine, which is equal to$\Sigma_{v\in V_\mu} \max(|C_v|-2,0)$, and send it to all the other machines with index greater than $\ind{\mu}$. It is easy to see that using this received data each machine is able to generate unique indexes for any auxiliary vertex in it.	
	Figure \ref{fig:binary} represents how this algorithm works. The gray vertices in this figure are the auxiliary vertices added to the tree.
\end{proof}

\begin{algorithm}[h]
	\caption{An algorithm to convert a given bounded degree tree $\T^d{}$ to a binary extension of it, $\T^d{}$}
	\label{alg:binary}
	\begin{algorithmic}[1]
% Distribute $V(\T^d)$ over $\Mset$ using hash function $h$ chosen uniformly at random from a family of ($\log{n}$)-universal hash functions.
\Require{In the beginning of the  algorithm vertices of the tree $\T^d{}$ with maximum degree $\O{\Delta\log{n}}$ are distributed over $\M$ machines using a hash function $h$. This algorithm converts $\T^d{}$ to a binary tree $\T^b{}$, which is an extension of $\T^d{}$.}
\label{alg:Tree-log)}
\State Let $\mu$ denote the machine in which this instance of algorithm is running, and let $V_\mu$ denote the set of vertices of $\T^d{}$  that are stored in this machine. 
	\Round{1}
		\For {every $v$ in $V_m$}
	\State $\mu_d=h(\parent{v})$
	\State Send $(\parent{v}, v)$ to machine $\mu_d$. \Comment{Send it even if $\mu_d = \mu$.}
	\State Delete v from $V_\mu$.
	\EndFor
	\EndRound
		
	\Round{2}
		\State Let $R$ denote the set of all the messages received in the previous round.
		\State $C_v \leftarrow\{u| (v, u) \in R\}$ \Comment $C_v$ is the set of all the children of vertex $v$
		\State Send $\Sigma_{v\in V_\mu} \max(|C_v|-2,0)$ to any machine $\mu' $ such that $\ind{\mu'} > \ind{\mu}$.	
		\EndRound
		\Round{3}
		\State	Let s denote the summation of all the messages received in the previous round.
		\State $d \leftarrow s+1$
		\For {every $v$ in $V_\mu$}
		\State Construct a binary tree $b_v$ with root $v$, leaves $C_v$, and $|C_v|-2$ inner vertices indexed by $d, \dots, d+|C_v|-2$. \Comment{Any vertex in a tree that is not a root, nor a  leave is considered as an inner vertex. If $|C_v|\leq 2$ there is no inner vertex.}
		\State $d \leftarrow d+ \max(0,|C_v|-2)$
		\For{every vertex $u$ in $b_v$  excluding $v$}
		\State Let $u$.par denote the  index of $u$'s parent  in the tree $b_v$.
		\State Add Vertex($u$.par, $\ind{u}$) to $V_m$.
		\EndFor
		\EndFor
\EndRound

			\end{algorithmic}
	\end{algorithm}

\end{document}